\newcommand{\bn}{{\sf BN}}
\newcommand{\naturals}{\mathbb{N}}
\newcommand{\graph}{\mathcal{G}}
\renewcommand{\state}{{\bf s}}
\newcommand{\tstate}{{\bf t}}
\newcommand{\update}{{\xi}}
\newcommand{\St}{{\bf S}}
\newcommand{\T}{{\bf T}}
\newcommand{\x}{{\bf x}}
\newcommand{\f}{{\bf f}}
\newcommand{\ts}{{\sf TS}}
\newcommand{\tshat}{\overline{\sf TS}}
\newcommand{\reach}{{\sf reach}}
\newcommand{\scc}{{\sf SCC}}
\newcommand{\prt}{{\sf par}}
\newcommand{\blocks}{\mathcal{B}}
\newcommand{\B}{\overline{B}}
\newcommand{\cross}{\otimes}
\newcommand{\ctr}{{\sf ctr}}
\newcommand{\control}{{\sf C}}
\newcommand{\cset}{{\mathbb{C}}}
\newcommand{\ac}{{\sf ac}}
\newcommand{\bas}{{\sf bas}}
\newcommand{\SB}{{\sf SB}}
\newcommand{\wb}{{\sf WB}}
\newcommand{\hd}{{\sf hd}}
\newcommand{\move}[1]{\stackrel{#1}{\longrightarrow}}
\renewcommand{\path}{\rho}
\newcommand{\pre}{{\sf pre}}
\newcommand{\post}{{\sf post}}
\newtheorem{observation}{Observation}
\newcolumntype{C}[1]{>{\centering\arraybackslash}p{#1}}
\begin{document}
\title[Decompositional Boolean Network Control]{A Decomposition-based Approach towards the Control of Boolean Networks (Technical Report)}

\author{Soumya Paul}
\affiliation{
\institution{SnT, University of Luxembourg}
}
%
\author{Cui Su}
\affiliation{
\institution{SnT, University of Luxembourg}
}
%
\author{Jun Pang}
\affiliation{
\institution{SnT \& FSTC, University of Luxembourg}
}
%
\author{Andrzej Mizera}
\affiliation{
\institution{LCSB, University of Luxembourg\\
DII, Luxembourg Institute of Health}
}

\renewcommand{\shortauthors}{S. Paul et al.}

\begin{abstract}
We study the problem of computing a minimal subset of nodes of a given asynchronous Boolean network
that need to be controlled to drive its dynamics from an initial steady state (or {\em attractor}) to a target steady state.
Due to the phenomenon of state-space explosion, a simple global approach that performs computations on the entire network,
may not scale well for large networks.
We believe that efficient algorithms for such networks must exploit the structure of the networks together with their dynamics.
Taking such an approach, we derive a decomposition-based solution to the minimal control problem
which can be significantly faster than the existing approaches on large networks.
We apply our solution to both real-life biological networks and randomly generated networks, demonstrating promising results.
\end{abstract}

\keywords{Boolean networks, attractors, network control, decomposition}

\maketitle

\section{Introduction}
\label{sec:intro}

Cell reprogramming is a way to change one cell phenotype to another, allowing tissue or neuron regeneration techniques.
Recent studies have shown that differentiated adult cells can be reprogrammed to embryonic-like pluripotent state or
directly to other types of adult cells without the need of intermediate reversion to pluripotent state~\cite{Nature09,SC14}.
This has led to a surge in regenerative medicine and there is a growing need for the discovery of new and efficient methods
for the control of cellular behaviour.

In this work we focus on the study and control of gene regulatory networks (GRNs) and their combined dynamics with an associated
signalling pathway. GRNs are graphical diagrams visualising the relationships between genes and their regulators. They represent
biological systems characterised by the orchestrated interplay of complex interactions resulting in highly nested feedback and
feed-forward loops. Signalling networks consist of interacting signalling pathways that perceive the changes in the environment
and allow the cell to correctly respond to them by appropriately adjusting its gene-expression. These pathways are often complex,
multi-component biological systems that are regulated by various feedbacks and that interfere with each other via diverse
cross-talks. As a result, GRNs with integrated signalling networks are representatives of complex systems characterised by
non-linear dynamics. These factors render the design of external control strategies for these biological systems a very
challenging task. So far, no general mathematical frameworks for the control of this type of systems have been developed~\cite{LSB11,GLDB14,Lai14}.

Boolean networks (BNs), first introduced by Kauffman~\cite{KS69}, is a popular and well-established framework for modelling GRNs
and their associated signalling pathways. Its main advantage is that it is simple and yet able to capture the important dynamic
properties of the system under study, thus facilitating the modelling of large biological systems as a whole. The states of a BN
are tuples of 0s and 1s where each element of the tuple represents the level of activity of a particular protein in the GRN or
the signalling pathway it models - 0 for inactive and 1 for active. The BN is assumed to evolve dynamically by moving from one
state to the next governed by a Boolean function for each of its components. The steady state behaviour of a BN is given by its
subset of states called {\em attractors} to one of which the dynamics eventually settles down. In biological context, attractors
are hypothesised to characterise cellular phenotypes~\cite{KS69} and also correspond to functional cellular states such as
proliferation, apoptosis differentiation etc.~\cite{HS01}.

Cellular reprogramming, or the control of the GRNs and their signalling pathways therefore amount to being able to drive the
dynamics of the associated BN from an attractor to another `desirable' target attractor by controlling or reprogramming the
nodes of the BN. This needs to be done while respecting certain constraints viz. a minimal subset of nodes of the BN are
controlled or the control is applied only for a minimal number of time steps. Under such constraints, it is known that the problem
of driving the BN from a source to a target attractor (the control problem) is computationally difficult~\cite{MHP16,MHP17} and
does not scale well to large networks. Thus a simple global approach (see Section~\ref{ssec:controlProblem} for a description)
treating the entire network in one-go is usually highly inefficient. This is intuitively due to the infamous state-space explosion
problem. Since most practical real-life networks are large, there is a strong need for designing algorithms which exploit
certain properties (structural or dynamic or both) of a BN and is able to efficiently address the control problem.

\smallskip
\noindent{\bf Our contributions.} In this paper, we develop a generic approach towards solving the minimal control problem (defined formally in Section \ref{sec:pre}) on large BNs based on combining both their structural and the dynamic properties. We show that:
\begin{itemize}
\item The problem of computing the minimal set of nodes to be controlled in a single time-step (simultaneously) to drive the system from a source state $\state$ to a target attractor $A_t$ ({\em driver nodes}) is equivalent to computing a subset of states of the state transition graph of the BN called the {\em strong basin} (defined in Section \ref{sec:pre}) of attraction of $A_t$ (dynamic property).
\item  We show how the network structure of a large BN can be explored to decompose it into smaller {\em blocks}. The strong basins of attractions of the projection of $A_t$  to these blocks can be computed {\em locally} and then combined to recover the {\em global} strong basin of attraction of $A_t$ (structural property).
\item Any algorithm for the computation of the global strong basin of attraction of $A_t$ can also be used to compute the local strong basins of attraction of the projections of $A_t$ to the blocks of BN. Doing so results in the improvement in efficiency for certain networks which have modular structures (like most real-life biological networks).
\item We concretise our approach by describing in detail one such algorithm (Algorithm \ref{alg:stb}) which is based on the computation of fixed points of set operations.
\item We have implemented our decomposition-based approach using this algorithm and applied it to a number of case studies of BNs corresponding to real-life biological networks and randomly generated BNs. Our results show that for certain structurally well-behaved BNs our decomposition-based approach is efficient and outperforms the global approach.
\end{itemize}


\section{Related Work}
\label{sec:related}

In recent years, several approaches have been developed for the control of
complex networks~\cite{LSB11,ABGD13,BAGD13,GLDB14,ZA15,CGCK16,L16,MHP16,MHP17,ZYA17}.
Among them, the methods~\cite{LSB11,GLDB14,CGCK16} were proposed to tackle the control of networks with linear time-invariant dynamics. 
Liu et al.~\cite{LSB11} first developed a structural controllability framework for complex networks to solve full control problems,
by identifying the minimal set of (driver) nodes that can steer the entire dynamics of the system. 
Afterwards, Gao et al.\ extended this method to the target control of complex networks~\cite{GLDB14}.
They proposed a $k$-walk method and a greedy algorithm to identify a set of driver nodes
for controlling a pre-selected set of target nodes.
However, Czeizler et al.~\cite{CGCK16} proved that it is NP-hard to find the minimal set of driver nodes for structural target control problems
and they improved the greedy algorithm~\cite{GLDB14} using several heuristics. 
The above methods have a common distinctive advantage that they are solely based on the network structures, 
which are exponentially smaller than the number of states in their dynamics. Nevertheless, they are only applicable to systems with linear time-invariant dynamics. 

The control methods proposed in~\cite{ABGD13,BAGD13,ZA15,L16,MHP16,MHP17,ZYA17} are designed for networks governed by non-linear dynamics. 
Among these methods, the ones based on the computation of the feedback vertex set (FVS)~\cite{ABGD13,BAGD13,ZYA17}
and the `stable motifs' of the network~\cite{ZA15} drive the network towards a target state by regulating a component of the network with some constraints (feedback vertex sets and stable motifs).
The method based on FVS is purely a structure-based method, while that based on stable motifs takes into account the functional information of the network (network dynamics) and has a substantial improvement in computing the number of driver nodes. These two methods are very promising, even though none of them guarantees to find the minimal set of driver nodes.
In~\cite{L16}, Wang et al.\ highlighted an experimentally feasible approach towards the control of nonlinear dynamical networks by constructing `attractor networks' that reflect their controllability.
They construct the attractor network of a system by including all the experimentally validated paths between the attractors of the network.
The concept of an attractor network is very inspiring.
However, this method cannot provide a straightforward way to find the paths from one attractor to a desired attractor,
and it fails to formulate a generic mathematical framework for the control of nonlinear dynamical networks.
Other approaches taking into account the dynamic properties of non-linear BNs include Rocha et al.~\cite{MPR13,GR16} who explore the notion of canalisation and canalising
functions in BNs to reason about their dynamics and steady state behaviour.

Closely related to our work, Mandon et al.~\cite{MHP16,MHP17} proposed approaches towards the control of asynchronous BNs.
In particular, in~\cite{MHP16} they proposed a few algorithms to identify reprogramming determinants for both existential and
inevitable reachability of the target attractor with permanent perturbations. 
Later on, they proposed an algorithm that can find all existing control paths between two states
within a limited number of either permanent or temporary perturbations~\cite{MHP17}. 
However, these methods do not scale well for large networks.\footnote{We learnt through private communication that
the current implementation of their methods does not scale efficiently to BNs having more than 20 nodes}
This is mainly due to the fact that they need to encode all possible control strategies into the
transition system of the BN in order to identify the desired reprogramming paths~\cite{MHP17}.
As a consequence, the size of the resulting {\em perturbed transition graph} grows exponentially with the number of allowed perturbations,
which renders their algorithms inefficient.

The identified limitations of these existing approaches motivate us to develop a new approach towards the control of non-linear Boolean networks which is modular and
exploits {\em both} their structural and dynamic properties. Gates et al.~\cite{GR16} showed that such an approach is inevitable for the identification of
the correct parameters and control strategies, in that, focussing only on a single property (either structural or dynamic) might lead to both their
overestimation or underestimation.


\section{Preliminaries}
\label{sec:pre}

\subsection{Boolean networks}
A Boolean network (BN) describes elements of a dynamical system with binary-valued nodes and interactions between elements with Boolean functions. It is formally defined as:

\begin{definition}[Boolean networks]
  A Boolean network is a tuple $\bn = (\x,\f)$ where $\x=(x_1,x_2,\ldots, x_n)$ such that each $x_i, 1\leq i\leq n$ is a Boolean variable and $\f=(f_1,f_2,\ldots,f_n)$ is a tuple of Boolean functions over $\x$. $|\x| = n$ denotes the number of variables.
\end{definition}

In what follows, $i$ will always range between 1 and $n$, unless stated otherwise. A Boolean network $\bn=(\x,\f)$ may be viewed as a directed graph $\graph_\bn = (V,E)$ where $V=\{v_1,v_2\ldots, v_n\}$ is the set of {\sf vertices} or {\sf nodes} and for every $1\leq i,j\leq n$, there is a directed edge from $v_j$ to $v_i$ if and only if $f_i$ depends on $x_j$. An edge from $v_j$ to $v_i$ will be often denoted as $v_j\rightarrow v_i$. A {\sf path} from a vertex $v$ to a vertex $v'$ is a (possibly empty) sequence of edges from $v$ to $v'$ in $\graph_\bn$. For any vertex $v\in V$ we define its set of parents as $\prt(v)=\{v'\in V\ |\ v'\rightarrow v\}$. 
For the rest of the exposition, we assume that an arbitrary but fixed network $\bn$ of $n$ variables is given to us and $\graph_\bn=(V,E)$ is its associated directed graph.

A {\sf state} $\state$ of $\bn$ is an element in $\{0,1\}^n$. Let $\St$ be the set of states of $\bn$. For any state $\state=(s_1,s_2,\ldots,s_n)$, and for every $i$, the value of $s_i$, often denoted as $\state[i]$, represents the value that the variable $x_i$ takes when the $\bn$ `is in state $\state$'. For some $i$, suppose $f_i$ depends on $x_{i_1},x_{i_2},\ldots, x_{i_k}$. Then $f_i(\state)$ will denote the value $f_i(\state[i_1],\state[i_2],\ldots, \state[i_k])$. 
For two states $\state,\state'\in\St$, the {\sf Hamming distance} between $\state$ and $\state'$ will be denoted as $\hd(\state,\state')$. For a state $\state$ and a subset $\St'\subseteq\St$, the Hamming distance between $\state$ and $\St'$ is defined as $\hd(\state,\St')=\min_{\state'\in\St'}\hd(\state,\state')$. We let $\arg(\hd(\state,\St'))$ denote the set of subsets of $\{1,2,\ldots,n\}$ such that $I\in \arg(\hd(\state,\St'))$ if and only if $I$ is a set of indices of the variables that realise this Hamming distance.

\subsection{Dynamics of Boolean networks}\label{sec:dynamics}
We assume that the Boolean network evolves in discrete time steps. It starts initially in a state $\state_0$ and its state changes in every time step according to the update functions $\f$. The updating may happen in various ways. Every such way of updating gives rise to a different dynamics for the network. In this work, we shall be interested primarily in the asynchronous updating scheme.

\begin{definition}[Asynchronous dynamics of Boolean networks]\label{def:dynamics} Suppose $\state_0\in\St$ is an initial state of $\bn$. The asynchronous evolution of $\bn$ is a function $\update: \naturals \rightarrow \wp(\St)$ such that $\update(0)=\state_0$ and for every $j\geq 0$, if $\state\in\update(j)$ then $\state'\in \update(j+1)$ if and only if $\hd(\state,\state')\leq 1$ and there exists $i$ such that $\state'[i]=f_i(\state)$.
\end{definition}

Note that the asynchronous dynamics is non-deterministic -- the value of exactly one variable is updated in a single time-step. The index of the variable that is updated is not known in advance. Henceforth, when we talk about the dynamics of $\bn$, we shall mean the asynchronous dynamics as defined above.

The dynamics of a Boolean network can be represented as a {\em state transition graph} or a {\em transition system (TS)}.

\begin{definition}[Transition system of $\bn$]\label{def:ts}The transition system of $\bn$, denoted by the generic notation $\ts$ is a tuple $(\St,\rightarrow)$ where the vertices are the set of states $\St$ and for any two states $\state$ and $\state'$ there is a directed edge from $\state$ to $\state'$, denoted $\state\rightarrow\state'$ if and only if $\hd(\state,\state')\leq 1$ and there exists $i$ such that $\state'[i]=\f_i(\state)$.
\end{definition}

\subsection{Attractors and basins of attraction}
A {\sf path} from a state $\state$ to a state $\state'$ is a (possibly empty) sequence of transitions from $\state$ to $\state'$ in $\ts$. A path from a state $\state$ to a subset $\St'$ of $\St$ is a path from $\state$ to any state $\state'\in \St'$. For any state $\state\in \St$,
let $\pre_\ts(\state) = \{\state'\in \St\ |\ \state'\rightarrow \state\}$ and let $\post_\ts(\state) = \{\state'\in \St\ |\ \state\rightarrow \state'\}$.
$\pre_\ts(\state)$ contains all the states that can reach $\state$ by performing a single transition in $\ts$ and
$\post_\ts(s)$ contains all the states that can be reached from $\state$ by a single transition in $\ts$. Note that, by
definition, $\hd(\state,\pre_\ts(\state))\leq 1$ and $\hd(\state,\post_\ts(\state))\leq 1$. $\pre_\ts$ and $\post_\ts$
can be lifted to a subset $\St'$ of $\St$ as: $\pre_\ts(\St') = \bigcup_{\state\in\St'}\pre_\ts(\state)$ and
$\post_\ts(\St') = \bigcup_{\state\in\St'}\post_\ts(\state)$.

For a state $\state\in\St$, $\reach_\ts(\state)$ denotes the set of states $\state'$ such that there is a path from $\state$ to $\state'$ in $\ts$ and can be defined as the transitive closure of the $\post_\ts$ operation. Thus, $\reach_\ts(\state)$ is the smallest subset of states in $\St$ such that $\state\in\reach_\ts(\state)$ and $\post_\ts(\reach_\ts(\state))\subseteq\reach_\ts(\state)$.

\begin{definition}[Attractor]An attractor $A$ of $\ts$ (or of $\bn$) is a subset of states of $\St$ such that for every $\state\in A, \reach_\ts(\state)=A$.
\end{definition}

Any state which is not part of an attractor is a transient state. An attractor $A$ of $\ts$ is said to be reachable from a state $\state$ if $\reach_\ts(\state)\cap A\neq\emptyset$. Attractors represent the stable behaviour of the $\bn$ according to the dynamics. The network starting at any initial state $\state_0\in \St$ will eventually end up in one of the attractors of $\ts$ and remain there forever unless perturbed. The following is a straightforward observation.
\begin{observation}\label{obs:attr} \normalfont Any attractor of $\ts$ is a bottom strongly connected component of $\ts$.
\end{observation}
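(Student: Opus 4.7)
The plan is to unpack the two definitions directly. A bottom strongly connected component (BSCC) is a maximal strongly connected subgraph with no outgoing edges, so I need to verify three things about any attractor $A$: it is strongly connected, it is closed under $\post_\ts$, and it is maximal with the strongly-connected property.

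\textbf{Step 1 (strong connectivity).} Pick any $\state,\state'\in A$. By the definition of an attractor, $\reach_\ts(\state)=A$, so $\state'\in\reach_\ts(\state)$, meaning there is a path from $\state$ to $\state'$ in $\ts$. Symmetrically, $\reach_\ts(\state')=A$ gives a path from $\state'$ to $\state$. Hence $A$ is strongly connected.

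\textbf{Step 2 (no outgoing edges).} For any $\state\in A$ and any $\state''\in\post_\ts(\state)$, the closure property of $\reach_\ts$ (namely $\post_\ts(\reach_\ts(\state))\subseteq\reach_\ts(\state)$) combined with $\reach_\ts(\state)=A$ yields $\state''\in A$. So no transition from a state in $A$ leaves $A$.

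\textbf{Step 3 (maximality).} Suppose for contradiction there exists $\tstate\notin A$ that is mutually reachable with every state of $A$. Then in particular, for some $\state\in A$, there is a path from $\state$ to $\tstate$, hence $\tstate\in\reach_\ts(\state)=A$, contradicting $\tstate\notin A$. Therefore $A$ is a maximal strongly connected subgraph, i.e.\ an SCC; together with Step 2, it is a bottom SCC.

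I do not anticipate a real obstacle; everything follows by unfolding definitions once. The only subtlety worth flagging is that the definition of $\reach_\ts$ given earlier is stated as a closure (the smallest set containing $\state$ and closed under $\post_\ts$), so Step 2 uses closure under $\post_\ts$ rather than a path-based argument, and Step 1 uses that $\reach_\ts(\state)$ actually equals the set of states reachable by a path from $\state$ (which the preceding paragraph of the paper already asserts). No further machinery is required.
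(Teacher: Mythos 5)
Your proof is correct. The paper itself offers no argument for this observation, labelling it ``straightforward,'' and your three-step unfolding of the definitions (strong connectivity from $\reach_\ts(\state)=A$, closedness from $\post_\ts(\reach_\ts(\state))\subseteq\reach_\ts(\state)$, and maximality by the same reachability identity) is precisely the routine verification the paper implicitly has in mind.
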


For an attractor $A$ of $\ts$, we define subsets of states of $\St$ called the weak and strong basins of attractions of $A$, denoted as $\bas^W_\ts(A)$ and $\bas^S_\ts(A)$ resp. as follows.

\begin{definition}[Basin of attraction]
  Let $A$ be an attractor of $\ts$.
  \begin{itemize}
  \item {\bf Weak basin:} The weak basin of attraction of $A$ with respect to $\ts$, is defined as $\bas^W_\ts(A) = \{\state\in\St\ |\ \reach_\ts(\state)\cap A\neq \emptyset\}$.
    \item {\bf Strong basin:} The strong basin of attraction of $A$ with respect to $\ts$, is defined as $\bas_\ts^S(A)= \bas_\ts^W(A)\setminus\bas_\ts^W(A')$ where $A'$ is an attractor of $\ts$ and $A'\neq A$.
  \end{itemize}
\end{definition}

Thus the weak basin of attraction of $A$ is the set of all states $\state$ from which there is a path to $A$. It is possible that there are paths from $\state$ to some other attractor $A'\neq A$. However, the notion of a strong basin does not allow this. Thus, if $\state\in\bas^S_\ts(A)$ then $\state\notin\bas^W_\ts(A')$ for any other attractor $A'$. We need the notion of strong basin to {\em ensure} reachability to the target attractor after applying control.

\begin{example}\label{eg:attrbas}
  Consider the three-node network $\bn=(\x,\f)$ where $\x=(x_1,x_2,x_3)$ and $\f=(f_1,f_2,f_3)$ where $f_1=\neg x_2 \lor (x_1\land x_2), f_2=x_1\land x_2$ and $f_3=x_3\land\neg (x_1\land x_2)$. The graph of the network $\graph_\bn$ and its associated transition system $\ts$ is given in Figure~\ref{fig:fullts}. $\ts$ has three attractors $\{(100)\}, \{(110)\}$ and $\{(101)\}$ shown shaded in pink. Their corresponding strong basins of attractions are shown by enclosing blue shaded regions. Note that for this particular example, both the strong and the weak basins are the same for all the attractors.

  \begin{figure}
\centering
\begin{tikzpicture}

\node        (s1)                  {$~v_1~$};
\node(s2)  [right=of s1]   {$~v_2~$};
 \node        (s3)  [right=of s2]   {$~v_3~$};

 \path     
 (s1)  edge [<-,loop above] node {} (s1)
 (s1)  edge [->,bend left] node {} (s2)

 (s2)  edge [<-,loop above] node {} (s2)
 (s2)  edge [->] node {} (s1)
 (s2)  edge [->] node {} (s3)
  (s1)  edge [->,bend right] node {} (s3)
(s3)  edge [<-,loop above] node {} (s3)
 ;
\end{tikzpicture}

{\begin{tikzpicture}
[attr/.style={draw=none,fill=red!50,rounded corners}]
\draw[draw=none,fill=blue!50,rounded corners,opacity=0.5] (0,1.4) -- (2,1.4) -- (2,0.6) -- (0.5,0.6) -- (0.5,-0.4) -- (-0.5,-0.4) -- (-0.5,1.4) -- (0,1.4);
\draw[draw=none,fill=blue!50,rounded corners,opacity=0.5] (1.5,0.4) -- (3.4,0.4) -- (3.4,-0.4) -- (1,-0.4) -- (1,0.4) -- (1.5,0.4);
\draw[draw=none,fill=blue!50,rounded corners,opacity=0.5] (3,1.4) -- (5,1.4) -- (5,-0.4) -- (4,-0.4) -- (4,0.6) -- (2.5,0.6) -- (2.5,1.4) -- (3,1.4);
\node (t1) at (0,1) {000};
\node (t2) at (1.5,1) {010};
\node (t3) at (3,1) {011};
\node (t4) at (4.5,1) {001};
\node (t5)[attr] at (0,0) {100};
\node (t6)[attr] at (1.5,0) {110};
\node (t7) at (3,0) {111};
\node (t8)[attr] at (4.5,0) {101};

\path
(t1) edge [<-,loop above] (t1)
(t2) edge [<-,loop above] (t2)
(t3) edge [<-,loop above] (t3)
     (t4) edge [<-,loop above] (t4)
    (t5) edge [<-,loop below,min distance=6mm,in=-65,out=-115,looseness=10] (t5)
(t6) edge [<-,loop below,min distance=6mm,in=-65,out=-115,looseness=10] (t6)
(t7) edge [<-,loop below,min distance=6mm,in=-65,out=-115,looseness=10] (t7)
(t8) edge [<-,loop below,min distance=6mm,in=-65,out=-115,looseness=10] (t8)
(t1) edge [->] (t5)
(t2) edge [->] (t1)
(t3) edge [->] (t4)
(t4) edge [->] (t8)
(t7) edge [->] (t6)
;
\end{tikzpicture}}
\caption{the graph of $\bn$ and its transition system}
\label{fig:fullts}%
\end{figure}
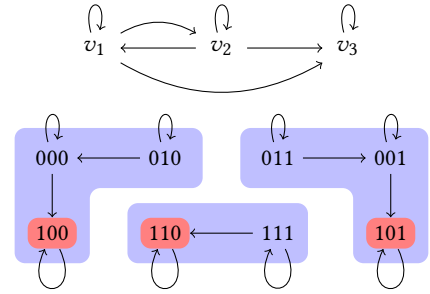
\end{example}

\begin{observation}\label{obs:weakbasin}
\normalfont  Given an attractor $A$, we can compute the weak basin $\bas_\ts^W(A)$ by a simple iterative fixpoint procedure. Indeed, $\bas_\ts^W(A)$ is the smallest subset $W$ of $\St$ such that $A\in W$ and $\pre_\ts(W)\subseteq W$. We shall call this procedure {\sc Compute\_Weak\_Basin} which will take as arguments the function tuple $\f$ and an attractor $A$.
\end{observation}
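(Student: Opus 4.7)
The plan is to justify Observation~\ref{obs:weakbasin} by showing two things: first, that $\bas_\ts^W(A)$ is in fact closed under $\pre_\ts$ and contains $A$; and second, that it is the \emph{smallest} such set, so any iterative fixpoint procedure that starts at $A$ and repeatedly adds $\pre_\ts$-predecessors will converge exactly to $\bas_\ts^W(A)$.

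For the first part I would argue directly from the definition $\bas_\ts^W(A)=\{\state\in\St\mid \reach_\ts(\state)\cap A\neq\emptyset\}$. Containment $A\subseteq\bas_\ts^W(A)$ is immediate: for every $\state\in A$, Observation~\ref{obs:attr} (or the definition of attractor) gives $\reach_\ts(\state)=A$, so $\reach_\ts(\state)\cap A = A\neq\emptyset$. For closure, take any $\state'\in\pre_\ts(\bas_\ts^W(A))$, so there exists $\state\in\bas_\ts^W(A)$ with $\state'\rightarrow\state$ in $\ts$. A path from $\state$ into $A$ exists by assumption; prepending the single transition $\state'\rightarrow\state$ yields a path from $\state'$ into $A$, so $\state'\in\bas_\ts^W(A)$.

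For minimality, let $W\subseteq\St$ satisfy $A\subseteq W$ and $\pre_\ts(W)\subseteq W$. I would show $\bas_\ts^W(A)\subseteq W$ by induction on the length of a witness path. For $\state\in\bas_\ts^W(A)$, pick any path $\state=\state_0\rightarrow\state_1\rightarrow\cdots\rightarrow\state_k$ with $\state_k\in A$. Then $\state_k\in A\subseteq W$, and if $\state_{j+1}\in W$ then $\state_j\in\pre_\ts(W)\subseteq W$; a reverse induction on $j$ gives $\state_0=\state\in W$. Combined with the first part, this identifies $\bas_\ts^W(A)$ as exactly the smallest $\pre_\ts$-closed superset of $A$.

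Finally, to justify the iterative procedure, I would define $W_0 = A$ and $W_{i+1} = W_i\cup\pre_\ts(W_i)$; the sequence is monotone and bounded by the finite set $\St$, so it stabilises at some $W_k$ with $\pre_\ts(W_k)\subseteq W_k$. Every $W_i$ is contained in any $\pre_\ts$-closed superset of $A$ (by a trivial induction on $i$), so $W_k\subseteq\bas_\ts^W(A)$, and conversely $W_k$ itself is such a superset, giving $\bas_\ts^W(A)\subseteq W_k$. There is no real obstacle here; the only thing to be careful about is the (apparent) typo ``$A\in W$'' which should read $A\subseteq W$, and that the finiteness of $\St$ is what makes the fixpoint iteration terminate after at most $|\St|$ steps.
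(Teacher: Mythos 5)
Your proof is correct: the paper states this as an unproven observation, and your argument (weak basin contains $A$ and is $\pre_\ts$-closed, is contained in any $\pre_\ts$-closed superset of $A$ by reverse induction along a witness path, and is reached by the monotone iteration $W_{i+1}=W_i\cup\pre_\ts(W_i)$ on the finite state space) is exactly the standard justification the paper implicitly relies on. Your reading of ``$A\in W$'' as the typo for $A\subseteq W$ is also the intended one.
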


Henceforth, to avoid clutter, we shall drop the subscript $\ts$ when the transition system is clear from the context. Also, we shall often drop the superscript $S$ as well the mention of the word ``strong'' when dealing with strong basins. Thus the ``basin of $A$'' will always mean the strong basin of attraction of $A$ unless mentioned otherwise and will be denoted as $\bas(A)$.

\subsection{The control problem}\label{ssec:controlProblem}
As described in the introduction, the attractors of a Boolean network represent the cellular phenotypes, the expressions of the genes etc. Some of these attractors may be diseased, weak or undesirable while others are healthy and desirable. Curing a disease is thus in effect, moving the dynamics of the network from an undesired `source' attractor to a desired `target' attractor.

One of the ways to achieve the above is by controlling the various `parameters' of the network, for eg. the values of the variables, or the Boolean functions themselves. In this exposition, we shall be interested in the former kind of control, that is, tweaking the values of the variables of the network. Such a control may be (i) {\it permanent} -- the value(s) of one or more variables are fixed forever, for all the following time steps or (ii) {\it temporary} -- the values of (some of) the variables are fixed for a finite number (one or more) of time steps and then the control is removed to let the system evolve on its own. Moreover, the variables can be either controlled (a) {\it simultaneously} -- the control is applied to all the variables at once or (b) {\it sequentially} -- the control is applied over a sequence of steps. 

In this work we shall be interested in the control of type (ii) and (a). Moreover, for us, the perturbations are applied only for a {\em single} time step. Thus we can formally define control as follows.

\begin{definition}[Control]\label{def:control}
  A control $\control$ is a (possibly empty) subset of $\{1,2,\ldots, n\}$. For a state $\state\in \St$, the application of a control $\control$ to $\state$, denoted $\control(\state)$ is defined as the state $\state'\in \St$ such that $\state'[i]=(1-\state[i])$ if $i\in \control$ and $\state'[i]=\state[i]$ otherwise. Given a control $\control$, the set of vertices $\{v_i\ |\ i\in \control\}$ of $\graph_\bn$ will be called the driver nodes for $\control$.
\end{definition}

Our aim is to make the control as less invasive to the system as possible. Thus not only is the control applied for just a single time step, it is also applied to as few of the nodes of the Boolean network as possible. The minimal simultaneous single-step target-control problem for Boolean networks that we are thus interested in can be formally stated as follows.

\smallskip
\noindent{\bf Minimal simultaneous target-control:} Given a Boolean network $\bn$, a `source state' $\state\in\St$ and a `target attractor' $A_t$ of $\ts$, compute a control $\control$ such that after the application of $\control(\state)$, $\bn$ eventually reaches $A_t$ and $\control$ is a minimal such subset of $\{1,2,\ldots, n\}$. We shall call such a control a {\em minimal control} from $\state$ to $A_t$. The set of all minimal controls from $\state$ to $A_t$ will be denoted as $\cset_{\min}^{\state\rightarrow A_t}$.

\smallskip
Note that the requirement of minimality is crucial, without which the problem is rendered trivial - simply pick some state $\state'\in A_t$ and move to it. The nodes required to be controlled will often be called the {\sf driver nodes} for the corresponding control. Our goal is to provide an efficient algorithm for the above question. That is, to devise an algorithm that takes as input only the Boolean functions $\f$ of $\bn$, a source state $\state$ and a target attractor $A_t$ of $\ts$ and outputs the indices of a minimal subset of nodes of $\state$ that need to be toggled or controlled (the driver nodes) so that after applying the control, the dynamics eventually and surely reaches $A_t$. It is known that in general the problem is computationally difficult -- PSPACE-hard \cite{MHP16} and unless certain open conjectures in computational complexity are false, these questions are computationally difficult and would require time exponential in the size of the Boolean network. That is intuitively because of the infamous state-space explosion phenomenon -- the number of states of the transition system is exponential in the network-size.

\begin{observation}\label{obs:mincontrol}
  \normalfont  It is important to note that if the BN is in some state $\state\in \bas(A)$ in some time step $t$, that is if $\xi(t)=\state$ then by the definition of $\bas(A)$ it will eventually and surely reach a state $\state'\in A$. That is, there exists a time step $t'>t$ such that $\xi(t')=\state'$. Hence given a source state $\state$ and a target attractor $A_t$, $\cset_{\min}^{\state\rightarrow A_t}$ can easily be seen to be equal to $\arg(\hd(\state,\bas(A_t)))$. In other words
  \begin{proposition}\label{prop:mincontrol}
    A control $\control$ from $\state$ to $A_t$ is minimal if and only if $\control(\state)\in \bas(A_t)$ and $\control\in \arg(\hd(\state,\bas(A)))$.
  \end{proposition}

  \begin{proof}
    Indeed, since if $\control(\state)\notin\bas(A_t)$ then $\bn$ is not assured to reach a state in $A_t$ or if $\control\notin\arg(\hd(\state,\bas(A)))$ then $C$ cannot be minimal, and conversely.
  \end{proof}
  
  Thus, solving the minimal simultaneous target-control problem efficiently boils down to how efficiently we can compute the strong basin of the target attractor.
\end{observation}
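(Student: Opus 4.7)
The plan is to prove the biconditional by unpacking the definitions in both directions. The key preliminary observation I would isolate first is that for any control $\control$ and state $\state$, we have $\hd(\state, \control(\state)) = |\control|$, since by Definition~\ref{def:control} the state $\control(\state)$ differs from $\state$ precisely on the indices in $\control$. Hence minimising $|\control|$ over controls that drive $\bn$ to $A_t$ is the same as minimising $\hd(\state, \control(\state))$ over such controls, and $\arg(\hd(\state, \bas(A_t)))$ picks out exactly the controls realising the minimum distance from $\state$ into the strong basin of $A_t$.

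For the forward direction, I would assume $\control$ is minimal from $\state$ to $A_t$ and establish two facts separately. First, $\control(\state) \in \bas(A_t)$: otherwise $\control(\state)$ either cannot reach $A_t$ at all, or, by non-determinism of the asynchronous dynamics, admits an execution terminating in some other attractor $A' \neq A_t$; either possibility contradicts the requirement that $\bn$ surely reaches $A_t$ after applying $\control$. Second, no strictly smaller control suffices: if some $\state^{*} \in \bas(A_t)$ satisfied $\hd(\state, \state^{*}) < |\control|$, then the control $\control^{*}$ consisting of the indices on which $\state$ and $\state^{*}$ disagree would give $\control^{*}(\state) = \state^{*} \in \bas(A_t)$ with $|\control^{*}| < |\control|$, contradicting minimality. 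The reverse direction is shorter and symmetric: if $\control(\state) \in \bas(A_t)$ then the definition of strong basin guarantees that every execution from $\control(\state)$ terminates in $A_t$, while $\control \in \arg(\hd(\state, \bas(A_t)))$ combined with the preliminary observation forces $|\control|$ to equal the smallest achievable control size.

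The main subtlety, and the step I would spell out most carefully, is explaining exactly why landing in the strong basin rather than merely the weak basin is what the \emph{eventually and surely reaches $A_t$} requirement forces. Because the asynchronous dynamics is non-deterministic, a state in $\bas^{W}(A_t) \setminus \bas(A_t)$ admits at least one run ending in an attractor $A' \neq A_t$, so the cell is not guaranteed to be reprogrammed to the desired phenotype. Pinning down this non-determinism argument cleanly, perhaps by invoking Observation~\ref{obs:attr} to note that every execution eventually enters some attractor, is the only nontrivial ingredient; once it is in place, the remainder of the argument is bookkeeping on set cardinalities and Hamming distances.
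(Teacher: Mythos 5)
Your proposal is correct and takes essentially the same route as the paper: the paper's one-sentence proof is precisely this definitional unpacking, with your observation that $\hd(\state,\control(\state))=|\control|$ and your construction of a smaller control from any closer basin state being the implicit content of ``$\control\notin\arg(\hd(\state,\bas(A_t)))$ implies $\control$ is not minimal,'' and your weak-versus-strong-basin discussion matching the ``not assured to reach $A_t$'' clause. One small point: the fact that every execution eventually settles in some attractor is the paper's standing modelling assumption stated in the text on attractors (and is what the surrounding Observation asserts for states in $\bas(A)$), not the content of Observation~\ref{obs:attr}, so you should lean on that assumption rather than cite Observation~\ref{obs:attr} for it.
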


\begin{example}\label{eg:egcontd}
  Continuing with Example \ref{eg:attrbas}, suppose we are in source state $\state=(101)$ (which is also an attractor) and we want to apply (minimal simultaneous) control to $\state$ so the system eventually and surely moves to the target attractor $A_t=\{(110)\}$. We could flip $\state[2]$ and $\state[3]$ to move directly to $A_t$ which would require a control $\control = \{2,3\}$. However, if we notice that the state $(111)$ is in the basin of $A_t$ we can simply apply a control $\control'=\{2\}$ and the dynamics of the $\bn$ will ensure that it eventually reaches $A_t$. Indeed, $\control'$ is also the minimal control in this case.
\end{example}

\subsection{A global algorithm}
In the rest of this section, we first describe a procedure for computing the (strong) basin 
of an attractor based on the computation of fixed point. We then use this procedure to design a simple global algorithm for solving the minimal simultaneous target-control problem based on a global computation of the basin of the target attractor $A_t$. This algorithm will act as a reference for comparing the decomposition-based algorithm which we shall later develop.

We first introduce an algorithm called {\sc Compute\_Strong\_Basin}, described in Algorithm~\ref{alg:stb}, for the computation of the strong basin of an attractor $A$ based on a fixpoint approach. 
We shall use this algorithm in both the global minimal control algorithm and later in the decomposition-based algorithm.
A proof of correctness of Algorithm~\ref{alg:stb} can be found in the appendix.
\begin{algorithm}[!t]
\caption{Fixpoint computation of strong basin}
\label{alg:stb}
\begin{algorithmic}[1]
  \Procedure{Compute\_Srong\_Basin}{$\f$,$A$}
  \State Let {\sc WB = Compute\_Weak\_Basin($\f,A$)}
  \State Initialise {\sc SB $=\emptyset$}
  \State Till {\sc SB $\neq$ WB} do
  \State \qquad If {\sf SB}$\neq \emptyset$ do {\sf WB} = {\sf SB} \label{st:fp}
  \State \qquad {\sc SB=WB}$\setminus(\pre(\post(${\sc WB}$)\setminus \wb)\cap${\sc WB}$)$
  \State done
  \State \Return {\sc SB}
  \EndProcedure
\end{algorithmic}
\end{algorithm}

\begin{algorithm}[!t]
\caption{Global minimal simultaneous target control}
\label{alg:controlglobal}
\begin{algorithmic}[1]
  \Procedure{Global\_Minimal\_control}{$\f,\state,A_t$}
  \State Let {\sc SB = Compute\_Strong\_Basin($\f,A_t$)}
  \State \Return $\arg(\hd(\state,${\sc SB}$))$
  \EndProcedure
\end{algorithmic}
\end{algorithm}

We now use the algorithm {\sc Compute\_Strong\_Basin} to give a global algorithm, Algorithm~\ref{alg:controlglobal}, for the minimal simultaneous target control problem.
Note that Algorithm~\ref{alg:controlglobal} is worst-case exponential in the size of the input (the description of $\bn$). Indeed, since the basin of attraction of $A_t$ might well be equal to all the states of the entire transition system $\ts$ which is exponential in the description of $\bn$. Now, although an efficient algorithm for this problem is highly unlikely, it is possible that when the network has a certain well-behaved structure, one can do better than this global approach. Most of the previous attempts at providing such an algorithm for such well-behaved networks either exploited exclusively the structure of the network or failed to minimise the number of driver nodes. Here we show that, when we take both the structure and the dynamics into account, we can have an algorithm which, for certain networks, is much more efficient than the global approach.


\section{A Decomposition-based Approach}
\label{sec:approach}
Note that our global solution for the minimal control problem, Algorithm~\ref{alg:controlglobal}, is generic, in that, we can plug into it any other algorithm for computing the basin of the target attractor and it would still work. Its performance, however, directly depends on the performance of the particular algorithm used to compute this basin.

In this section, we demonstrate an approach to compute the basin of attraction of $A_t$ based on the decomposition of the BN into structural components called {\em blocks}. This will then be used to solve the minimal control problem. The approach is based on that of~\cite{MPY17b} for computing the attractors of asynchronous Boolean networks. The overall idea is as follows. The network is divided into {\em blocks} based on its strongly connected components. The blocks are then sorted topologically resulting in a dependency graph of the blocks which is a directed acyclic graph (DAG). The transition systems of the blocks are computed inductively in the sorted order and the target attractor $A_t$ is then projected to these blocks. The local strong basins for each of these projections are computed in the transition system of the particular block. These local basins are then combined to compute the global basin $\bas(A_t)$.

\subsection{Blocks}
Let $\scc$ denote the set of maximal strongly connected components (SCCs) of $\graph_\bn$.\footnote{By convention, we assume that a single vertex (with or without a self loop) is always an SCC, although it may not be maximal.} Let $W$ be an SCC of $\graph_\bn$. The set of parents of $W$ is defined as $\prt(W)=(\bigcup_{v\in W}\prt(v))\setminus W$.

\begin{definition}[Basic Block]
A basic block $B$ is a subset of the vertices of $V$ such that $B= W\cup\prt(W)$ for some $W\in \scc$.
\end{definition}

Let $\blocks$ be the set of basic blocks of $\graph_\bn$. Since every vertex of $\graph_\bn$ is part of an SCC, we have $\bigcup\blocks = V$. The union of two or more basic blocks of $\blocks$ will also be called a {\em block}. For any block $B$, $|B|$ will denote the number of vertices in $B$. Using the set of basic blocks $\blocks$ as vertices, we can form a directed graph $\graph_\blocks = (\blocks, E_\blocks)$, which we shall call the {\em block graph} of $\bn$. The vertices of $\graph_\blocks$ are the basic blocks and for any pair of basic blocks $B',B\in \blocks, B'\neq B$, there is a directed edge from $B'$ to $B$ if and only if $B'\cap B\neq\emptyset$ and for every $v\in (B'\cap B)$, $\prt(v)\cap B =\emptyset$. In such a case, $B'$ is called a {\em parent} block of $B$ and $v$ is called a {\em control node} for $B$. Let $\prt(B)$ and $\ctr(B)$ denote the set of parent blocks and the set of control nodes of $B$ resp. It is easy to observe that

\begin{observation}
$\graph_\blocks$ is a directed acyclic graph (DAG).
\end{observation}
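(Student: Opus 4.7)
The plan is to reduce the acyclicity of $\graph_\blocks$ to the classical fact that the condensation of any digraph is a DAG. Let $\mathcal{D}$ denote the condensation of $\graph_\bn$: its vertices are the maximal SCCs in $\scc$ and it contains a directed edge $W\to W'$ whenever $W\neq W'$ and some vertex of $W$ has an edge to some vertex of $W'$ in $\graph_\bn$. Acyclicity of $\mathcal{D}$ is standard.

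First I would observe that the assignment $\phi:\blocks\to\scc$ sending a basic block $B=W\cup\prt(W)$ to its underlying SCC $W$ is a bijection: $W$ is recoverable from $B$ as the unique maximal SCC of $\graph_\bn$ entirely contained in $B$. It then suffices to prove the key claim that every edge $B'\to B$ in $\graph_\blocks$ satisfies $\phi(B')\neq\phi(B)$ and induces an edge $\phi(B')\to\phi(B)$ in $\mathcal{D}$. Given this claim, any directed cycle $B_1\to B_2\to\cdots\to B_k\to B_1$ in $\graph_\blocks$ would descend under $\phi$ to a directed cycle in $\mathcal{D}$, contradicting its acyclicity.

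To establish the claim, I would fix $B'\to B$, write $W=\phi(B)$ and $W'=\phi(B')$, and perform a case analysis on where an arbitrary witness $v\in B'\cap B$ can sit with respect to the disjoint decompositions $B=W\sqcup\prt(W)$ and $B'=W'\sqcup\prt(W')$. The piece $W\cap W'$ is empty because $W$ and $W'$ are distinct maximal SCCs. If $v\in W\cap\prt(W')$, then both the internal parents of $v$ inside $W$ and its external parents in $\prt(W)$ lie inside $B$, so the edge condition $\prt(v)\cap B=\emptyset$ forces $v$ to be a parentless singleton, a degenerate configuration to be dispatched separately. The case $v\in\prt(W)\cap\prt(W')$ is handled analogously by tracing the ancestry of $v$ through $\graph_\bn$. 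The remaining and principal case $v\in W'\cap\prt(W)$ directly exhibits a vertex of $W'$ with a child in $W$, i.e., an edge $W'\to W$ in $\mathcal{D}$, as required.

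The main obstacle will be tidying up the degenerate singleton-SCC corner cases: for a singleton SCC $\{v\}$ with no self-loop and no parents, the edge condition in $\graph_\blocks$ is vacuously satisfied in a symmetric way, so one must argue that along any putative cycle at least one edge is witnessed by some $v\in W'\cap\prt(W)$ with $W,W'$ non-degenerate, so that a genuine edge of $\mathcal{D}$ appears somewhere along the descended cycle. Once this bookkeeping is in place, acyclicity of $\mathcal{D}$ yields acyclicity of $\graph_\blocks$ immediately.
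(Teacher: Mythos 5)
Your overall route---push every edge of $\graph_\blocks$ down to the condensation of $\graph_\bn$ and invoke the standard fact that the condensation is acyclic---is certainly the intended "easy" argument (the paper itself states this observation without any proof), and in your principal case $v\in W'\cap\prt(W)$ the descent works exactly as you say. Two problems, though. A minor one first: your justification that $\phi$ is well defined is wrong. $W$ is in general \emph{not} the unique maximal SCC contained in $B=W\cup\prt(W)$; e.g.\ a two-node SCC $\{u_1,u_2\}$ feeding a single node $w$ lies entirely inside $w$'s basic block. Injectivity of $W\mapsto W\cup\prt(W)$ does hold, but you need a different argument: if $W_1\cup\prt(W_1)=W_2\cup\prt(W_2)$ with $W_1\neq W_2$, then $W_1\subseteq\prt(W_2)$ and $W_2\subseteq\prt(W_1)$, which puts edges in both directions between two distinct vertices of the condensation, a contradiction.

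The serious gap is your treatment of the remaining cases. The case $v\in\prt(W)\cap\prt(W')$ is not ``handled analogously'' and is not a degenerate corner that bookkeeping can dispatch: with the edge relation exactly as written in the paper it yields genuine cycles, so no completion of your plan exists. Concretely, take $f_1=x_1$, $f_2=x_1$, $f_3=x_3\vee x_2$, $f_4=x_4\vee x_2$. All SCCs are singletons, every node has a parent, and the basic blocks include $B=\{v_2,v_3\}$ and $B'=\{v_2,v_4\}$ with $B\cap B'=\{v_2\}$ and $\prt(v_2)=\{v_1\}$ disjoint from both $B$ and $B'$; the stated condition ``for every $v\in B'\cap B$, $\prt(v)\cap B=\emptyset$'' then holds in both directions, so $\graph_\blocks$ contains the $2$-cycle $B\leftrightarrow B'$. (Your parentless-singleton case is likewise a real counterexample, not a removable one: a constant-input node $v$ feeding an SCC $W'$ gives a $2$-cycle between $\{v\}$ and $W'\cup\prt(W')$.) Hence your hope that along any putative cycle some edge is witnessed by the principal case is false, and the observation is only true under the evidently intended reading of the parent-block relation: the shared nodes must have their values determined inside $B'$, i.e.\ $B'\cap B\subseteq W'$ (equivalently, $B'\to B$ iff $W'\cap\prt(W)\neq\emptyset$). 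Under that reading your principal case by itself is the entire proof---every edge of $\graph_\blocks$ projects to an edge of the condensation, which is acyclic, and the bijection transfers acyclicity back. A complete write-up should adopt that reading explicitly (or flag the discrepancy in the definition) rather than promise to dispatch cases that are in fact counterexamples to the literal statement.
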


A block $B$ (basic or non-basic) is called {\em elementary} if $\prt(v)\subseteq B$ for every $v\in B$. $B$ is called {\em non-elementary} otherwise. We shall henceforth assume that $\bn$ has $k$ basic blocks and they are topologically sorted as $\{B_1,B_2,\ldots,B_k\}$. Note that for every $j:1\leq j\leq k$, $(\bigcup_{\ell=1}^j B_\ell)$ is an elementary block. We shall denote it as $\B_j$.

For two basic blocks $B$ and $B'$ where $B$ is non-elementary, $B'$ is said to be an {\sf ancestor} of $B$ if there is a path from $B'$ to $B$ in the block graph $\graph_\blocks$. The {\sf ancestor-closure} of a basic block $B$ (elementary or non-elementary), denoted $\ac(B)$ is defined as the union of $B$ with all its ancestors. Note that $\ac(B)$ is an elementary block and so is $\{\ac(B')\ |\ B'\in\prt(B)\}$, which we denote as $\ac(B)^-$.


\subsection{Projection of states and the cross operation}
We shall assume that the vertices $\{v_1,v_2,\ldots, v_n\}$ of $\graph_\bn$ inherit the ordering of the variables $\x$ of $\bn$. Let $B$ be a block of $\bn$. Since $B$ is a subset of $V$ its state space is $\{0,1\}^{|B|}$ and is denoted as $\St_B$. For any state $\state\in \St$, where $\state=(s_1,s_2,\ldots, s_n)$, the projection of $\state$ to $B$, denoted $\state|_B$ is the tuple obtained from $\state$ by suppressing the values of the variables not in $B$. Thus if $B=\{v_{i_1}, v_{i_2},\ldots, v_{i_k}\}$ then $\state|_{B} = (s_{i_1},s_{i_2},\ldots, s_{i_k})$. Clearly $\state|_B\in\St_B$. For a subset $\St'$ of $\St$, $\St'|_B$ is defined as $\{\state|_B\ |\ \state\in \St'\}$.

\begin{definition}[Cross Operation] Let $B_1$ and $B_2$ be two blocks of $\bn$ and let $\state_1$ and $\state_2$ be states of $B_1$ and $B_2$ resp. $\state_1\cross\state_2$ is defined (called {\em crossable}) if there exists a state $\state\in \St_{B_1\cup B_2}$ such that $\state|_{B_1}=\state_1$ and $\state|_{B_2}=\state_2$. $\state_1\cross\state_2$ is then defined to be this unique state $\state$. For any subsets $\St_1$ and $\St_2$ of $\St_{B_1}$ and $\St_{B_2}$ resp. $\St_1\cross\St_2$ is a subset of $\St_{B_1\cup B_2}$ and is defined as:
$$\St_1\cross\St_2 = \{\state_1\cross\state_2\ |\ \state_1\in\St_1, \state_2\in\St_2 \text{ and } \state_1 \text{ and } \state_2 \text{ are crossable}\}$$
\end{definition}

Note that $\St_1\cross \St_2$ can be the empty set. The cross operation is easily seen to be associative. Hence for more than two states $\state_1,\state_2,\ldots,\state_k$, $\state_1\cross\state_2\cross\ldots\state_k$ can be defined as $(((\state_1\cross\state_2)\cross\ldots)\cross\state_k)$. We have a similar definition for the cross operation on more than two sets of states.

\subsection{Transition system of the blocks}\label{sec:tsblocks}
The next step is to describe how to construct the `local' transition systems of each of the blocks. These transition systems will be inductively defined starting from the elementary blocks and moving to the blocks further down the topological order. For an elementary block $B$ (basic or non-basic), its transition system $\ts_B$ is given exactly as Definition~\ref{def:ts} with the vertices being $\St_B$. This is well-defined since by the definition of an elementary block, the update functions of the vertices of $B$ do not depend on the value of any vertex outside $B$. On the other hand, the transition system of a non-elementary block $B$ depends on the transitions of its parent blocks (or its control nodes in its parent blocks). The transition system of such a block thus has to be defined based on (some or all of) the transitions of its parent blocks.

Towards that let $B$ be a non-elementary basic block of $\bn$ and let $A$ be an attractor of the transition system of the elementary block $\ac(B)^-$ and let $\bas(A)$ be its (strong) basin of attraction. Then

\begin{definition}[TS of non-elementary blocks]\label{def:tsblocks} The transition system of $B$ generated by $\bas(A)$ is defined as a tuple $\ts_B=(\St,\rightarrow)$ where the set of states $\St$ of $\ts_B$ is a subset of $\St_{\ac(B)}$ such that $\state\in \St$ if and only if $\state|_{\ac(B)^-}\in \bas(A)$ and for any two states $\state,\state'\in\St_{\ac(B)}$ there is a transition $\state\rightarrow\state'$ if and only if $\hd(\state,\state')\leq 1$ and there exists $i$ among the indices of the nodes in $\ac(B)$ such that $\state'[i]=f_i(\state)$.
\end{definition}

\noindent{\bf Remark.} Our construction of the transition system of the non-elementary blocks is different from that used in~\cite{MPY17b}. There, for a non-elementary block $B$, the set of states of $\ts_B$ was a subset of $\St_B$ and the transitions for the control nodes of $B$ were derived by projecting the transitions in the attractor of the parent block of $B$ to these control nodes. It can be shown that such an approach does not work for the decomposition-based solution to the minimal simultaneous target-control problem that we aim for here and we need the full behaviour of the basin of the attractor of the parent blocks of $B$ to generate the transition system of $B$.

\subsection{The main results}\label{sec:results}
We now give the key results of the above constructions which will form the basis of the decomposition-based control algorithm that we shall develop in the next section. To maintain the continuity and flow of the main text, we shall defer all the proofs to Appendix~\ref{appendix:proofs}.

Suppose $\bn$ has $k$ blocks which are topologically ordered as $\{B_1,B_2,\ldots, B_k\}$. Let $\ts$ be the transition system of $\bn$ and for every attractor $A$ of $\ts$ and for every $j:1\leq j\leq k$ let $A_j=A|_{B_j}$ be the projection of $A$ to $B_j$. We then have

\begin{theorem}[Preservation of attractors]\label{thm:attpres} Suppose for every attractor $A$ of $\ts$ and for every $i:1\leq i<k$, if $B_{i+1}$ is non-elementary then $\ts_{i+1}$ is realized by $\bas(\cross_{j\in I}A_j)$, its basin w.r.t. the transition system for $(\bigcup_{j\in I}B_j)$, where $I$ is the set of indices of the basic blocks in $\ac(B_{i+1})^-$. We then have, for every $i:1\leq i<k$, $A_{i+1}$ is an attractor of $\ts_{i+1}$,  $(\cross_{j\in I}A_j\cross A_{i+1})$ is an attractor of the transition system for the elementary block $(\bigcup_{j\in I}B_j\cup B_{i+1})$, $(\cross_{j=1}^{i+1}A_j)$ is an attractor of the transition system $\tshat_{i+1}$ of $\B_{i+1}$ and $A$ is an attractor of $\ts_k$.
\end{theorem}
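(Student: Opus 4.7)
My plan is to prove all four assertions jointly by induction on the block index $i$, using a coupled inductive invariant so that, at stage $i+1$, the attractor facts about ancestor blocks needed to make sense of the hypothesis on $\ts_{i+1}$ are already in place. Assertion (4) is just the $i+1=k$ specialisation of assertion (3), so it requires no separate argument.

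For the base case, $B_1$ has no parents (being first in the topological order) and is therefore elementary, so $\ts_1$ is the ordinary transition system on $\St_{B_1}$. I would verify that $A_1=A|_{B_1}$ is a bottom SCC of $\ts_1$ in the standard way: any $\ts_1$-transition $\state_1\to\state_1'$ updates some $x_i$ with $v_i\in B_1$ and, because $B_1$ is elementary, $f_i$ only reads $B_1$-coordinates, so the step lifts, for every $a\in A$ with $a|_{B_1}=\state_1$, to a $\ts$-transition $a\to a'$; bottomness of $A$ in $\ts$ forces $a'\in A$ and hence $\state_1'\in A_1$. Strong connectivity is obtained by projecting any $\ts$-path inside $A$ between two lifts and collapsing the steps that update no $B_1$-variable.

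For the inductive step, the elementary case is the base-case argument applied inside $\ts_{i+1}$. In the non-elementary case I would first use the inductive invariant to conclude that $\cross_{j\in I}A_j=A|_{\ac(B_{i+1})^-}$ is an attractor of the transition system of $\ac(B_{i+1})^-$, which makes the hypothesised basin well-defined and ensures every $\state\in A|_{\ac(B_{i+1})}$ belongs to the vertex set of $\ts_{i+1}$ (its $\ac(B_{i+1})^-$-projection lies in that attractor, hence in its basin). A lifting argument then shows $A|_{\ac(B_{i+1})}$ is a bottom SCC of $\ts_{i+1}$: by Definition~\ref{def:tsblocks} a $\ts_{i+1}$-transition updates a single $\ac(B_{i+1})$-coordinate via some $f_i$, and because $\ac(B_{i+1})$ is elementary, $f_i(\state)=f_i(a)$ for any $\state$-lift $a\in A$, so the step extends to a $\ts$-transition that remains in $A$ and projects back into $A|_{\ac(B_{i+1})}$. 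Assertion (3) at stage $i+1$ then follows from assertion (3) at stage $i$ combined with the attractor just produced for $\ac(B_{i+1})$, using associativity of $\cross$ and the product structure of transitions in the elementary block $\B_{i+1}$.

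The main obstacle I anticipate is the strong-connectivity half of the non-elementary case. Connecting two states of $A|_{\ac(B_{i+1})}$ by a $\ts_{i+1}$-path requires projecting a $\ts$-path inside $A$ and checking that every intermediate projection lies in the vertex set of $\ts_{i+1}$; this is exactly what the ``realized by $\bas(\cross_{j\in I}A_j)$'' hypothesis supplies, because those projections may fall outside $\cross_{j\in I}A_j$ while still remaining inside its basin, but the condition has to be invoked at each step of the projected path rather than only at its endpoints. A secondary subtlety is the bookkeeping for assertion (3): $\B_{i+1}$ can strictly contain $\ac(B_{i+1})$ because some of $B_1,\ldots,B_i$ are not ancestors of $B_{i+1}$, so the cross product in (3) must be built by merging the inductive attractor on $\B_i$ with the newly-obtained attractor on $\ac(B_{i+1})$ and verifying consistency on their shared coordinates; this should reduce to associativity of $\cross$ once the ancestor-level attractor claims are in place.
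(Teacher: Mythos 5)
Your proposal is correct and follows essentially the same route as the paper: an induction on the topologically sorted blocks whose base and inductive steps reduce to a two-block analysis, handled there by lifting/projection of paths for elementary blocks (Lemmas~\ref{lem:elemtofull}--\ref{lem:attrdisj}) and by the ``realized by $\bas(\cross_{j\in I}A_j)$'' construction for a non-elementary block with elementary parent $\bigcup_{j\in I}B_j$ (Lemma~\ref{lem:attrjoint}). The only remark worth making is that your anticipated obstacle is even milder than you fear: a $\ts$-path lying inside $A$ projects to states whose $\ac(B_{i+1})^-$-parts already lie in $\cross_{j\in I}A_j$ itself (not merely in its basin), so membership in the vertex set of $\ts_{i+1}$ is immediate at every step.
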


\begin{theorem}[Preservation of basins]\label{thm:bassubset} Given the hypothesis and the notations of Theorem~\ref{thm:attpres}, we have $(\cross_{i\leq k}\bas(A_i))= \bas(A)$ where $\bas(A)$ is the basin of attraction of the attractor $A=(A_1\cross A_2\cross \ldots\cross A_k)$ of $\ts$.
\end{theorem}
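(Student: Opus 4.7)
The plan is to strengthen the statement and prove it by induction on the topological order $j \in \{1,\ldots,k\}$, establishing at each step that
\[
\bas\bigl(\cross_{i \leq j} A_i\bigr) \;=\; \cross_{i \leq j}\bas(A_i)
\]
as subsets of the state space of $\tshat_j$, the transition system of the elementary block $\B_j$. Theorem~\ref{thm:bassubset} is then the case $j = k$. The base case $j = 1$ is immediate because $B_1$ is elementary and both sides coincide with $\bas(A_1)$ in $\ts_1$. For the inductive step, write $C_j = \cross_{i \leq j} A_i$ and split the inductive claim into the two inclusions $\bas(C_{j+1}) \subseteq \cross_{i \leq j+1}\bas(A_i)$ and its converse.

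For $(\subseteq)$, I would pick $\state \in \bas(C_{j+1})$ and an index $i \leq j+1$, and aim to show $\state|_{B_i} \in \bas(A_i)$. Given an arbitrary infinite path $\path$ in $\ts_{B_i}$ starting at $\state|_{B_i}$, I would lift it step by step to an infinite path $\pi$ in $\tshat_{j+1}$ starting at $\state$: each $B_i$-step of $\path$ is reproduced as the same variable update globally, while the remaining coordinates of $\pi$ can be filled in by any valid interleaving of local updates. By Definition~\ref{def:tsblocks}, the lifted path stays in the state space of $\tshat_{j+1}$ provided its projection to $\ac(B_i)^-$ stays in the relevant parent basin; this is guaranteed by the inductive hypothesis, which identifies $\bas(C_j)$ with $\cross_{i \leq j}\bas(A_i)$. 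Since $\pi$ is a valid path from $\state \in \bas(C_{j+1})$, it enters and remains in $C_{j+1}$; projecting back, $\path$ enters and remains in $A_i$. As $\path$ was arbitrary, $\state|_{B_i} \in \bas(A_i)$.

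For $(\supseteq)$, let $\state$ satisfy $\state|_{B_i} \in \bas(A_i)$ for every $i \leq j+1$, and consider any infinite path $\pi$ in $\tshat_{j+1}$ from $\state$. Since each transition of $\tshat_{j+1}$ updates exactly one variable, projecting $\pi$ to each $B_i$ and collapsing stuttering yields a path $\path_i$ in $\ts_{B_i}$ starting at $\state|_{B_i}$. If $\path_i$ is infinite, the hypothesis $\state|_{B_i} \in \bas(A_i)$ together with Observation~\ref{obs:mincontrol} forces it to enter and remain in $A_i$; if $\path_i$ is finite (the $B_i$-component eventually stops being updated along $\pi$), the final state of $\path_i$ must already lie in $A_i$, for otherwise one could extend $\pi$ by a $B_i$-update and obtain an infinite local projection that eventually leaves the strong basin, contradicting $\state|_{B_i} \in \bas(A_i)$. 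Since attractors are absorbing in their local systems, there is a global time beyond which every projection simultaneously sits in its $A_i$, so $\pi$ enters $C_{j+1}$ and stays there, proving $\state \in \bas(C_{j+1})$.

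The main obstacle is the forward (lifting) direction for non-elementary blocks: the local system $\ts_{B_i}$ is realised only over those $\ac(B_i)$-states whose projection to $\ac(B_i)^-$ lies in the parent basin (Definition~\ref{def:tsblocks}), so the lift must never exit this restricted state space. Handling this cleanly requires the full strength of the inductive hypothesis together with Theorem~\ref{thm:attpres}, which together ensure that the projections of the lifted trajectory always remain in the relevant parent basins. A secondary technical point is the treatment of finite local projections in the converse direction, where one must rule out leakage to a different local attractor — this is precisely why the proof must work with strong basins and not merely weak ones.
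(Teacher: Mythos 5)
Your overall skeleton --- inducting over the topologically sorted blocks to show $(\cross_{i\leq j}\bas(A_i)) = \bas(\cross_{i\leq j}A_i)$ in $\tshat_j$ and reading off the theorem at $j=k$ --- is exactly the paper's Lemma~\ref{lem:bassubset2}. However, both inclusions in your inductive step rest on a characterization of the strong basin that is false in this setting: you treat $\state\in\bas(A)$ as meaning that \emph{every} infinite path from $\state$ eventually enters and remains in $A$. The strong basin is defined purely by reachability: it is the set of states from which $A$ is reachable and no \emph{other} attractor is reachable. Under the asynchronous (unfair) semantics these are not equivalent --- a state of $\bas^S(A)$ may admit an infinite path that cycles forever among transient states of the basin and never reaches $A$ (strong basins are closed under transitions by Lemma~\ref{lem:obs}, so such a path never leaves the basin, but nothing forces it into the attractor). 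Hence your $(\subseteq)$ step (``since $\pi$ starts in $\bas(C_{j+1})$ it enters and remains in $C_{j+1}$, so the projection enters $A_i$'') and your $(\supseteq)$ step (infinite local projections are ``forced'' into $A_i$ by Observation~\ref{obs:mincontrol}) are both invalid inferences. Your treatment of finite projections is likewise broken: if a global path simply stops updating $B_i$ while its $B_i$-projection sits at a transient state of $\bas(A_i)$, there is no contradiction to extract --- that final local state need not lie in $A_i$, and ``extending $\pi$ by a $B_i$-update'' says nothing about $\pi$ itself.

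The workable route --- and the one the paper takes --- replaces inevitability by existence and non-existence of paths: to place $\state$ in $\bas(C_{j+1})$ one exhibits a single path to the attractor by lifting local paths block by block (as in Lemma~\ref{lem:elemtofull}), and one rules out any path to a different attractor by projecting a hypothetical such path onto the block where the two attractors differ (Lemma~\ref{lem:fulltoelem} for the elementary part, the construction of Definition~\ref{def:tsblocks} for the non-elementary block), contradicting $\state|_{B_i}\in\bas(A_i)$; Theorem~\ref{thm:attpres} is what identifies which local attractor that projection lands in, and the two-block cases are packaged in Lemmas~\ref{lem:subset} and~\ref{lem:subset2}. Your lifting/projection machinery (including the observation that lifted paths stay inside the restricted state space of Definition~\ref{def:tsblocks} thanks to the inductive hypothesis) can be salvaged for that argument, but as written the core of your proof relies on a property strong basins do not have, so there is a genuine gap.
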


\begin{example}\label{eg:egcontd2}
Continuing with Example \ref{eg:attrbas} and \ref{eg:egcontd}, we note that $\bn$ has two maximal SCCs $\{v_1,v_2\}$ and $\{v_3\}$. These give rise to two blocks $B_1=\{v_1,v_2\}$ and $B_2=\{v_1,v_2,v_3\}$ shown in Figure \ref{fig:blocks}. $B_1$ is elementary whereas $B_2$ is non-elementary where $B_1$ is its parent and it has control nodes $v_1$ and $v_2$.

\begin{figure}
\centering
\begin{tikzpicture}
\draw[draw=none,fill=blue!40,opacity=0.5] (1,0) ellipse (1.5cm and 0.7cm);
  \draw[draw=none,fill=blue!40,opacity=0.5] (2,0) ellipse (2.5cm and 1cm);

\node        (s1)                  {$~v_1~$};
\node(s2)  [right=of s1]   {$~v_2~$};
 \node        (s3)  [right=of s2]   {$~v_3~$};

 \path     
 (s1)  edge [<-,loop above] node {} (s1)
 (s1)  edge [->,bend left] node {} (s2)

 (s2)  edge [<-,loop above] node {} (s2)
 (s2)  edge [->] node {} (s1)
 (s2)  edge [->] node {} (s3)
  (s1)  edge [->,bend right] node {} (s3)
(s3)  edge [<-,loop above] node {} (s3)
 ;
\end{tikzpicture}
\caption{The blocks of $\bn$}
\label{fig:blocks}
\end{figure}
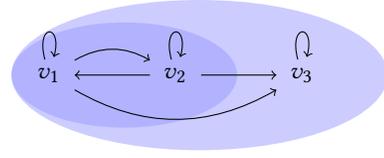

The transition system of block $B_1$ is shown in Figure \ref{fig:tsblocks}(a). It has two attractors $\{(10)\}$ and $\{(11)\}$ shown in pink with their corresponding strong basins shown in shaded blue regions. The transision system of the block $B_2$ generated by the basin of the attractor $\{(10)\}$ of the block $B_1$ is shown in Figure \ref{fig:tsblocks}(b). It has two attractors $\{(100)\}$ and $\{(101)\}$ shown again in pink with their corresponding basins of attractions shown in blue. Note that, indeed, according to Theorem \ref{thm:attpres} we have that $\{(10)\}\cross\{(100)\} = \{(100)\}$ and $\{(10)\}\cross\{(101)\} = \{(101)\}$ are attractors of the global transition system of $\bn$. Also note that taking the cross of the local basins of attractions does indeed result in the global basins.

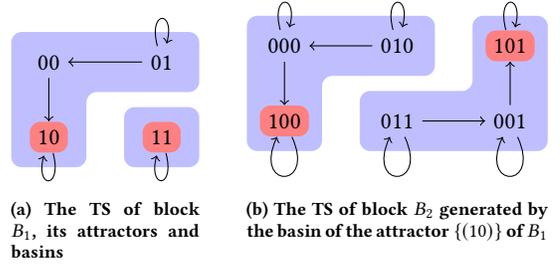
\begin{figure}
\centering
\subfloat[The TS of block $B_1$, its attractors and basins]{%
\begin{tikzpicture}
[attr/.style={draw=none,fill=red!50,rounded corners}]
\draw[draw=none,fill=blue!50,rounded corners,opacity=0.5] (0,1.4) -- (2,1.4) -- (2,0.6) -- (0.5,0.6) -- (0.5,-0.4) -- (-0.5,-0.4) -- (-0.5,1.4) -- (0,1.4);
\draw[draw=none,fill=blue!50,rounded corners,opacity=0.5] (1.5,-0.4) -- (2,-0.4) -- (2,0.4) -- (1,0.4) -- (1,-0.4) -- (1.5,-0.4);
\node (t1) at (0,1) {00};
\node (t2) at (1.5,1) {01};
\node (t3)[attr] at (0,0) {10};
\node (t4)[attr] at (1.5,0) {11};

\path
(t2) edge [<-,loop above] (t2)
(t3) edge [<-,loop below] (t3)
(t4) edge [<-,loop below] (t4)
(t1) edge [->] (t3)
(t2) edge [->] (t1);
\end{tikzpicture}}%
\qquad
\subfloat[The TS of block $B_2$ generated by the basin of the attractor $\{(10)\}$ of $B_1$]{%
\begin{tikzpicture}
[attr/.style={draw=none,fill=red!50,rounded corners}]
\draw[draw=none,fill=blue!50,rounded corners,opacity=0.5] (0,1.4) -- (2,1.4) -- (2,0.6) -- (0.5,0.6) -- (0.5,-0.4) -- (-0.5,-0.4) -- (-0.5,1.4) -- (0,1.4);
\draw[draw=none,fill=blue!50,rounded corners,opacity=0.5] (1.5,0.4) -- (2.5,0.4) -- (2.5,1.4) -- (3.5,1.4) -- (3.5,-0.4) -- (1,-0.4) -- (1,0.4) -- (1.5,0.4);
\node (t1) at (0,1) {000};
\node (t2) at (1.5,1) {010};
\node (t3)[attr] at (3,1) {101};
\node (t5)[attr] at (0,0) {100};
\node (t6) at (1.5,0) {011};
\node (t7) at (3,0) {001};

\path
(t1) edge [<-,loop above] (t1)
(t2) edge [<-,loop above] (t2)
(t3) edge [<-,loop above] (t3)

    (t5) edge [<-,loop below,min distance=6mm,in=-65,out=-115,looseness=10] (t5)
(t6) edge [<-,loop below,min distance=6mm,in=-65,out=-115,looseness=10] (t6)
(t7) edge [<-,loop below,min distance=6mm,in=-65,out=-115,looseness=10] (t7)

(t1) edge [->] (t5)
(t2) edge [->] (t1)
(t6) edge [->] (t7)
(t7) edge [->] (t3)
;
\end{tikzpicture}}%
\caption{The local transition systems of the blocks $B_1$ and $B_2$}
\label{fig:tsblocks}
\end{figure}

\end{example}

\subsection{The decomposition-based algorithm}
Equipped with the results in Theorems~\ref{thm:attpres} and~\ref{thm:bassubset},
we can describe our procedure for computing the strong basin of the target attractor based on decomposing the BN into smaller blocks.
We shall later use this procedure to give an algorithm for the minimal control problem.
Towards that, Theorem~\ref{thm:bassubset} tells us that in order to compute $\bas(A_t)$
it is sufficient to compute the local basins of the projection of $A_t$ to each block $B_i$
(which by Theorem~\ref{thm:attpres} is an attractor of $B_i$) and finally merge these local basins using the cross operation.

Algorithm~\ref{alg:stblocal} implements this idea in pseudo-code.
It takes as input the graph $\graph_\bn$ and the update functions $\f$ of a given Boolean network,
and an attractor $A$ and returns the strong basin of attraction of $A$.
Line 2 decomposes $\graph_\bn$ into the blocks $\blocks$ (resulting in $k$ blocks) using the procedure {\sc Form\_Block} from~\cite{MPY17b}
and line 3 topologically sorts the blocks by constructing the block graph $\graph_\blocks$.
Lines 5-7 decomposes the attractor $A$ into its projection to the blocks.
Lines 8-17 then cycles through the blocks of $\blocks$ in topological order and for each block $B_i$:
if $B_i$ is elementary then constructs its transition system $\ts_i$ independently or,
if $B_i$ is non-elementary it constructs $\ts_i$ realised by the basin of $(A_1\cross A_2\cross\ldots\cross A_{i-1})$
which by Theorem~\ref{thm:attpres} is an attractor of $\ts_{i-1}$, the transition system for the elementary (non-basic) block $\B_{i-1}$.
Thus at every iteration $i$ of the for-loop the invariant that $A_i$ is an attractor of $\ts_i$ is maintained.
The procedure {\sc Compute\_Strong\_Basin}($\f|_{\overline{B_i}},A_i$) (lines 11,14), described in Algorithm~\ref{alg:stb},
computes the strong basin of $A_i$ w.r.t $\ts_i$. Line 16 extends the global strong basin $\SB$ computed so far
by crossing it with the local basin computed at each step. At the end of the for-loop $\SB$ will thus be equal to the global basin (by Theorem~\ref{thm:bassubset}).
It then easily follows that
\begin{proposition}
Algorithm~\ref{alg:stblocal} correctly computes the strong basin of the attractor $A$.
\end{proposition}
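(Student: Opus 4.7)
The plan is to prove correctness by induction on the iteration index of the for-loop at lines 8--17, establishing a loop invariant that ties the partial accumulated set $\SB$ to the cross-products of local strong basins and simultaneously guarantees that the local transition systems $\ts_i$ are well-formed so that Theorems~\ref{thm:attpres} and~\ref{thm:bassubset} apply. Concretely, I would state the invariant as: after iteration $i$, (a) for every $j\leq i$, $A_j$ is an attractor of $\ts_j$, and (b) the variable $\SB$ equals $(\cross_{j=1}^{i}\bas_{\ts_j}(A_j))$, interpreted as a subset of $\St_{\overline{B}_i}$.

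For the base case $i=1$, the topological sort forces $B_1$ to be elementary, so $\ts_1$ is built directly from $\f|_{B_1}$ in the usual way, and $A_1=A|_{B_1}$ is an attractor of $\ts_1$ by Theorem~\ref{thm:attpres}. The call to {\sc Compute\_Strong\_Basin}$(\f|_{B_1},A_1)$ returns $\bas_{\ts_1}(A_1)$ by correctness of Algorithm~\ref{alg:stb}, and crossing with the empty initial $\SB$ (treated as the neutral element, since $B_1=\overline{B}_1$) yields exactly $\bas_{\ts_1}(A_1)$.

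For the inductive step, assume the invariant at step $i-1$. If $B_i$ is elementary, the construction of $\ts_i$ is independent of previous blocks and Theorem~\ref{thm:attpres} again gives that $A_i$ is an attractor of $\ts_i$. If $B_i$ is non-elementary, then by the topological ordering $\ac(B_i)^-\subseteq \overline{B}_{i-1}$, and by the inductive hypothesis $\SB$ already contains the strong basin of $(\cross_{j=1}^{i-1}A_j)$ restricted appropriately; this is precisely what Definition~\ref{def:tsblocks} requires in order to realise $\ts_i$, so the hypothesis of Theorem~\ref{thm:attpres} is satisfied and $A_i$ is an attractor of $\ts_i$. In either case, the call to {\sc Compute\_Strong\_Basin}$(\f|_{\overline{B_i}},A_i)$ returns $\bas_{\ts_i}(A_i)$, and the subsequent assignment $\SB\gets \SB\cross \bas_{\ts_i}(A_i)$ re-establishes part (b) of the invariant. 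After the final iteration $i=k$, part (b) yields $\SB=\cross_{i=1}^{k}\bas_{\ts_i}(A_i)$, which by Theorem~\ref{thm:bassubset} equals $\bas_\ts(A)$, as required.

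The main subtlety, and what I expect to be the principal obstacle to making this fully rigorous, is the bookkeeping around the ambient state space at each iteration: the local basin $\bas_{\ts_i}(A_i)$ lives in $\St_{\ac(B_i)}$ while the accumulated $\SB$ lives in $\St_{\overline{B}_{i-1}}$, and these overlap on the control nodes of $B_i$. One must verify that the cross operation in line 16 acts consistently on this overlap --- i.e.\ that the restriction of $\bas_{\ts_i}(A_i)$ to $\ac(B_i)^-$ is exactly $\bas(A_1\cross\cdots\cross A_{i-1})|_{\ac(B_i)^-}$, so that no elements are spuriously dropped or duplicated when crossing. This follows from Definition~\ref{def:tsblocks} together with an easy projection argument, but it is the point where a careful proof must be explicit, and it is precisely the reason the associativity of $\cross$ and the topological ordering of $\blocks$ are both invoked.
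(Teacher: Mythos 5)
Your proof is correct and follows essentially the same route as the paper: the paper's argument also maintains the invariant that $A_i$ is an attractor of $\ts_i$ (via Theorem~\ref{thm:attpres}), relies on the correctness of {\sc Compute\_Strong\_Basin} for the local basins, and concludes with Theorem~\ref{thm:bassubset} that the accumulated cross of local basins equals $\bas(A)$; your loop-invariant formulation simply makes explicit the induction that the paper delegates to the appendix lemmas behind those theorems.
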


\begin{algorithm}[!t]
\caption{A decomposition-based procedure for the computation of strong basin}
\label{alg:stblocal}
\begin{algorithmic}[1]
\Procedure{Compute\_Strong\_Basin\_Decomp}{$\graph_\bn$,$\f$,$A$}
\State \label{formblock} $\blocks:=~${\sc Form\_Block}($\graph_\bn$);
\State $\blocks:=~${\sc Top\_Sort}($\blocks$);
\State $k:=$ size of $\blocks$; $\SB=\phi$; $\SB_i=\emptyset$; \hfil{\it //for all i}
\For {$i=1$ to $k$}
\State $A_i:=${\sc Decompose}($A,B_i$); \hfil{\it //Decompose the target attractor into block $B_i$}
\EndFor
\For {$i:=1$ to $k$} 
\label{line:forloopstart}
\If{$B_i$ is an~elementary block}
\label{line:detectblockstart}
\State $\ts_i:=$ transition system of $B_i$;  
\State $\SB_i:=${\sc Compute\_Strong\_Basin}$(\f|_{\overline{B_i}},A_i)$;
\Else {}
\State $\ts_i:=$ transition system of $B_i$ based on the basin of $(\cross_{j<i}A_j)$ in $\ts_{i-1}$;  
\State $\SB_i:=${\sc Compute\_Strong\_Basin}$(\f|_{\overline{B_i}},A_i)$;
\EndIf 
\State $\SB=${\sc Cross} $(\SB,\SB_i)$;
\EndFor
\State \Return $\SB$
\EndProcedure 

\end{algorithmic}
\end{algorithm}

We now plug the procedure {\sc Compute\_Strong\_Basin\_Decomp} of Algorithm~\ref{alg:stblocal} into Algorithm~\ref{alg:controlglobal}
to derive our decomposition-based minimal target control algorithm, Algorithm~\ref{alg:controllocal}, from source state $\state$ to target attractor $A_t$.

\begin{algorithm}[!t]
\caption{Decomposition-based minimal simultaneous target control}
\label{alg:controllocal}
\begin{algorithmic}[1]
  \Procedure{Decomp\_Minimal\_control}{$\graph_\bn,\f,\state,A_t$}
  \State Let {\sc SB = Compute\_Strong\_Basin\_Decomp($\graph_\bn,\f,A_t$)}
  \State \Return $\arg(s\hd(\state,$\SB$))$
  \EndProcedure
\end{algorithmic}
\end{algorithm}



\section{Case Studies}
\label{sec:casestudies}

To demonstrate the correctness and efficiency of our control framework,
we compare our decomposition-based approach with the global approach on both real-life biological networks and randomly generated networks.
Note that we do not compare our approach with the works by Mandon et al.~\cite{MHP16,MHP17},
as we are informed by the authors, through personal communication, that currently their methods cannot deal with networks larger than around 20 nodes.
The global approach and the decomposition-based approach, described by Algorithm~\ref{alg:controlglobal} and Algorithm~\ref{alg:controllocal}, are implemented in the software tool ASSA-PBN~\cite{MPY16b},
which is based on the model checker~\cite{MCMAS} to encode BNs into the efficient data structure binary decision diagrams (BDDs).
All the experiments are performed on a high-performance computing (HPC) platform, which contains CPUs of Intel Xeon X5675@3.07 GHz. 

\subsection{Case studies on biological networks}
\smallskip
\noindent{\bf The PC12 cell differentiation network} was developed by Offermann et al.~\cite{OKB16}.
It is a comprehensive model used to clarify the cellular decisions towards proliferation or differentiation.
It combines the temporal sequence of protein signalling, transcriptional response and subsequent autocrine feedback.
The model shows the interactions between protein signalling, transcription factor activity and
gene regulatory feedback in the regulation of PC12 cell differentiation after the stimulation of NGF.
Notice that the PC12 cell network is simulated in synchronous mode in~\cite{OKB16}.
In this paper, we treat the networks in asynchronous mode, as per Definition~\ref{def:dynamics}.
The BN model of the PC12 cell network consists of $32$ nodes and it has $7$ single-state attractors.
The network structure is divided into $19$ blocks by our decomposition approach (the procedure {\sc Form\_Block} in Algorithm~\ref{alg:stblocal}).
Details on the attractors and the decomposition of the network can be found in Appendix~\ref{app.casestudy}.

\smallskip
\noindent{\bf The apoptosis network} was constructed by Schlatter et al.~\cite{SSVSSBEMS09} based on extensive literature research.
Apoptosis is a kind of programmed cell death, the malfunction of which has been linked to many diseases.
In~\cite{SSVSSBEMS09}, they took into consideration the survival and metabolic insulin pathways,
the intrinsic and extrinsic apoptotic pathways and their crosstalks to build the Boolean network,
which simulates apoptotic signal transduction pathways with regards to different input stimulus. 
The BN model of this apoptosis network comprises 97 nodes and can be decomposed into 60 blocks by our decomposition approach (the procedure {\sc Form\_Block} in Algorithm~\ref{alg:stblocal}).
Using the asynchronous updating mode of BNs [Definition~\ref{def:dynamics}], 16 single-state attractors are detected when the housekeeping node
is set to on and six nodes (FASL, FASL\_2, IL\_1,TNF, UV, UV\_2) are set to false.
Details on the network structure and the decomposition are given in Appendix~\ref{app.casestudy} as well.

For the PC12 cell network and the apoptosis network, we aim to compute a minimal control $\control$ that can realise the minimal simultaneous single-step target control as explained in Section~\ref{ssec:controlProblem}.
That is to say, we compute the minimal set of driver nodes,
whose simultaneous single-step control can drive the network from a source state to a target attractor.
Since the attractors of the two networks are all single-state attractor, any of them can be taken as a source state.
All possible combinations of source and target attractors of the networks are explored and each case is repeated 100 times. 
The Hamming distances between attractors and the number of driver nodes for all cases are summarised in Table~\ref{tab:dr-pc12} and Table~\ref{tab:dr-apop}.
The attractors are labelled with numbers. The numbers in the first column and the first row represent the source and target attractors, respectively.
For each combination of source and target attractors, we list its Hamming distance (HD) and the number of driver nodes (\#D).
The numbers of driver nodes computed by the global and our decomposition-based approaches are identical,
demonstrating the correctness of our decomposition-based approach.
The \#D represents the results of both approaches. 

Table~\ref{tab:dr-pc12} and Table~\ref{tab:dr-apop} show that compared to the size of the network and the Hamming distance between the source and target attractors,
the minimal set of driver nodes required is quite small.
Especially for the apoptosis network with 97 nodes, the numbers of driver nodes are less than or equal to 4 for all the cases.
The PC12 cell network always reaches the same steady state with "cell differentiation" set to on by setting NGF to `on'~\cite{OKB16}.
To drive the network from any other attractor to this steady state, only NGF is required, which also shows the outstanding role of NGF in the network.

The speedups gained by our decomposition-based approach for different combinations of source and target attractors of the two networks are shown in Table~\ref{tab:sp-pc12} and Table~\ref{tab:sp-apop}.
\footnote{More details can be found in Appendix~\ref{app.casestudy}.}
For each case, the speedup is calculated with the formula $\textrm{speedup}=\frac{t_{\it global}}{t_{\it decom}}$,
where $t_{\it global}$ and $t_{\it decom}$ are the time costs of the global approach and our decomposition-based approach, respectively.
Each entity in the tables is an average value of the repeated experiments (100 times).
The numbers in the first column and the first row represent the source and target attractors, respectively.
The results show that our decomposition-based approach outperforms the global approach for any combination of source and target attractors.
It is also obvious that the speedups are highly related to the target attractors.
The speedups with different target attractors vary a lot regarding to the same source attractor.

Table~\ref{tab:networks} gives an overview of the two biological networks and their evaluation results.
For the PC12 cell network, the ranges of the time costs of the global approach and our decomposition-based approach are $16-56$ (ms) and $5-12$ (ms) resp.
The speedups gained by our decomposition-based approach are between $1.375$ and $9.672$.
For the apoptosis network, the ranges of the time costs of the global approach and our decomposition-based approach are $1,472-46,560$ (ms) and $747-994$ (ms) resp.
The speedups gained by our decomposition-based approach are between $1.932$ and $51.504$.
Benefited from the fixpoint computation of strong basin, described in Algorithm~\ref{alg:stb}, both approaches are efficient.
Compared with the global approach, our decomposition-based approach has an evident advantage in terms of efficiency, especially for large networks.

\begin{table*}[!t] 
\centering 
\begin{tabular}{|C{1.5cm}|C{0.5cm}C{0.5cm}|C{0.5cm}C{0.5cm}|C{0.5cm}C{0.5cm}|C{0.5cm}C{0.5cm}|C{0.5cm}C{0.5cm}|C{0.5cm}C{0.5cm}|C{0.5cm}C{0.5cm}|}
 \hline 
\multicolumn{1}{|c|}{\multirow{2}{*}{Attractor}} & \multicolumn{2}{c|}{1} & \multicolumn{2}{c|}{2} & \multicolumn{2}{c|}{3} & \multicolumn{2}{c|}{4} & \multicolumn{2}{c|}{5} & \multicolumn{2}{c|}{6} & \multicolumn{2}{c|}{7} 
 \\ \cline{2-15}
\multicolumn{1}{|c|}{} & \multicolumn{1}{c}{HD} & \multicolumn{1}{c|}{\#D} & \multicolumn{1}{c}{HD} & \multicolumn{1}{c|}{\#D} & \multicolumn{1}{c}{HD} & \multicolumn{1}{c|}{\#D} & \multicolumn{1}{c}{HD} & \multicolumn{1}{c|}{\#D} & \multicolumn{1}{c}{HD} & \multicolumn{1}{c|}{\#D} & \multicolumn{1}{c}{HD} & \multicolumn{1}{c|}{\#D} & \multicolumn{1}{c}{HD} & \multicolumn{1}{c|}{\#D} 
 \\ \hline
$1$ &$-$ &$-$ &$2$ &$1$ &$21$ &$7$ &$22$ &$8$ &$22$ &$8$ &$23$ &$9$ &$7$ &$1$ \\ 
$2$ &$2$ &$1$ &$-$ &$-$ &$23$ &$8$ &$22$ &$7$ &$24$ &$9$ &$23$ &$8$ &$9$ &$1$ \\ 
$3$ &$21$ &$10$ &$23$ &$11$ &$-$ &$-$ &$1$ &$1$ &$1$ &$1$ &$2$ &$2$ &$28$ &$1$ \\ 
$4$ &$22$ &$11$ &$22$ &$10$ &$1$ &$1$ &$-$ &$-$ &$2$ &$2$ &$1$ &$1$ &$29$ &$1$ \\ 
$5$ &$22$ &$10$ &$24$ &$11$ &$1$ &$1$ &$2$ &$2$ &$-$ &$-$ &$1$ &$1$ &$29$ &$1$ \\ 
$6$ &$23$ &$11$ &$23$ &$10$ &$2$ &$2$ &$1$ &$1$ &$1$ &$1$ &$-$ &$-$ &$30$ &$1$ \\ 
$7$ &$7$ &$1$ &$9$ &$3$ &$28$ &$9$ &$29$ &$10$ &$29$ &$10$ &$30$ &$11$ &$-$ &$-$ \\ 
\hline 
\end{tabular}
\caption{The Hamming distance between attractors and the number of driver nodes computed using the global and decomposition-based approaches on the PC12 cell network.}
\label{tab:dr-pc12}
\end{table*}

\begin{table*}[!t] 
\centering 
\begin{tabular}{|C{1.5cm}|C{0.5cm}C{0.5cm}|C{0.5cm}C{0.5cm}|C{0.5cm}C{0.5cm}|C{0.5cm}C{0.5cm}|C{0.5cm}C{0.5cm}|C{0.5cm}C{0.5cm}|C{0.5cm}C{0.5cm}|}
 \hline 
\multicolumn{1}{|c|}{\multirow{2}{*}{Attractor}} & \multicolumn{2}{c|}{1} & \multicolumn{2}{c|}{2} & \multicolumn{2}{c|}{3} & \multicolumn{2}{c|}{4} & \multicolumn{2}{c|}{5} & \multicolumn{2}{c|}{6} & \multicolumn{2}{c|}{7} 
 \\ \cline{2-15}
\multicolumn{1}{|c|}{} & \multicolumn{1}{c}{HD} & \multicolumn{1}{c|}{\#D} & \multicolumn{1}{c}{HD} & \multicolumn{1}{c|}{\#D} & \multicolumn{1}{c}{HD} & \multicolumn{1}{c|}{\#D} & \multicolumn{1}{c}{HD} & \multicolumn{1}{c|}{\#D} & \multicolumn{1}{c}{HD} & \multicolumn{1}{c|}{\#D} & \multicolumn{1}{c}{HD} & \multicolumn{1}{c|}{\#D} & \multicolumn{1}{c}{HD} & \multicolumn{1}{c|}{\#D} 
 \\ \hline
$9$ &$1$ &$1$ &$6$ &$2$ &$3$ &$2$ &$8$ &$3$ &$21$ &$2$ &$26$ &$3$ &$23$ &$3$ \\ 
$10$ &$6$ &$2$ &$1$ &$1$ &$8$ &$3$ &$3$ &$2$ &$26$ &$3$ &$21$ &$2$ &$28$ &$4$  \\ 
$11$ &$3$ &$2$ &$8$ &$3$ &$1$ &$1$ &$6$ &$2$ &$23$ &$3$ &$28$ &$4$ &$21$ &$2$ \\ 
$12$ &$8$ &$3$ &$3$ &$2$ &$6$ &$2$ &$1$ &$1$ &$28$ &$4$ &$23$ &$3$ &$26$ &$3$  \\ 
$13$ &$10$ &$2$ &$15$ &$3$ &$12$ &$3$ &$17$ &$4$ &$12$ &$1$ &$17$ &$2$ &$14$ &$2$  \\ 
$14$ &$15$ &$3$ &$10$ &$2$ &$17$ &$4$ &$12$ &$3$ &$17$ &$2$ &$12$ &$1$ &$19$ &$3$  \\ 
$15$ &$12$ &$3$ &$17$ &$4$ &$10$ &$2$ &$15$ &$3$ &$14$ &$2$ &$19$ &$3$ &$12$ &$1$  \\ 
$16$ &$17$ &$4$ &$12$ &$3$ &$15$ &$3$ &$10$ &$2$ &$19$ &$3$ &$14$ &$2$ &$17$ &$2$  \\ 
\hline 
\end{tabular}
\caption{The Hamming distance between attractors and the number of driver nodes computed using the global and decomposition-based approaches on the apoptosis network.}
\label{tab:dr-apop}
\end{table*}

\begin{table}[!t] 
\centering 
\begin{tabular}{|C{1.3cm}|C{0.6cm}|C{0.6cm}|C{0.6cm}|C{0.6cm}|C{0.6cm}|C{0.6cm}|C{0.6cm}|}
 \hline 
\multicolumn{1}{|c|}{\multirow{2}{*}{Attractor}} & \multicolumn{7}{c|}{Speedups}
 \\ \cline{2-8} 
\multicolumn{1}{|c|}{} & \multicolumn{1}{c|}{~~~~~~$1$~~~~~~} & \multicolumn{1}{c|}{~~~~~~$2$~~~~~~} & \multicolumn{1}{c|}{~~~~~~$3$~~~~~~} & \multicolumn{1}{c|}{~~~~~~$4$~~~~~~} & \multicolumn{1}{c|}{~~~~~~$5$~~~~~~} & \multicolumn{1}{c|}{~~~~~~$6$~~~~~~} & \multicolumn{1}{c|}{~~~~~~$7$~~~~~~} 
 \\ \hline
$1$ &$-$ &$5.08$ &$9.63$ &$6.14$ &$5.52$ &$4.46$ &$1.38$ \\ 
$2$ &$5.57$ &$-$ &$9.38$ &$6.37$ &$5.50$ &$4.42$ &$1.38$ \\ 
$3$ &$5.44$ &$4.26$ &$-$ &$8.11$ &$5.66$ &$3.14$ &$1.93$ \\ 
$4$ &$5.34$ &$4.26$ &$9.67$ &$-$ &$5.36$ &$3.41$ &$2.30$ \\ 
$5$ &$5.20$ &$4.08$ &$8.99$ &$6.04$ &$-$ &$3.70$ &$1.85$ \\ 
$6$ &$5.25$ &$4.28$ &$9.65$ &$5.83$ &$5.74$ &$-$ &$1.91$ \\ 
$7$ &$5.29$ &$4.27$ &$9.66$ &$5.72$ &$5.70$ &$4.33$ &$-$ \\ 
\hline 
\end{tabular}
\caption{Speedups gained by the decomposition-based approach on the PC12 network.}
\label{tab:sp-pc12}
\end{table}

\begin{table}[!t] 
\centering 
\begin{tabular}{|C{1.3cm}|C{0.6cm}|C{0.6cm}|C{0.6cm}|C{0.6cm}|C{0.6cm}|C{0.6cm}|C{0.6cm}|}
 \hline 
\multicolumn{1}{|c|}{\multirow{2}{*}{Attractor}} & \multicolumn{7}{c|}{Speedups}
 \\ \cline{2-8}  
\multicolumn{1}{|c|}{} & \multicolumn{1}{c|}{~~~~~~$1$~~~~~~} & \multicolumn{1}{c|}{~~~~~~$2$~~~~~~} & \multicolumn{1}{c|}{~~~~~~$3$~~~~~~} & \multicolumn{1}{c|}{~~~~~~$4$~~~~~~} & \multicolumn{1}{c|}{~~~~~~$5$~~~~~~} & \multicolumn{1}{c|}{~~~~~~$6$~~~~~~} & \multicolumn{1}{c|}{~~~~~~$7$~~~~~~} 
 \\ \hline
$9$  &$2.06$ &$2.00$ &$2.26$ &$2.22$ &$20.27$ &$17.44$ &$24.88$  \\ 
$10$ &$2.13$ &$1.93$ &$2.37$ &$2.14$ &$22.13$ &$17.58$ &$24.95$  \\ 
$11$ &$2.14$ &$1.94$ &$2.37$ &$2.15$ &$22.22$ &$17.24$ &$24.71$  \\ 
$12$ &$2.17$ &$1.98$ &$2.38$ &$2.18$ &$22.11$ &$17.12$ &$25.31$  \\ 
$13$ &$2.17$ &$1.96$ &$2.40$ &$2.18$ &$22.58$ &$17.65$ &$24.86$  \\ 
$14$ &$2.15$ &$1.95$ &$2.40$ &$2.17$ &$22.30$ &$17.46$ &$25.09$  \\ 
$15$ &$2.14$ &$1.95$ &$2.39$ &$2.17$ &$22.24$ &$18.09$ &$24.85$  \\ 
$16$ &$2.16$ &$1.97$ &$2.42$ &$2.19$ &$22.58$ &$17.53$ &$25.23$  \\ 
\hline 
\end{tabular}
\caption{Speedups gained by the decomposition-based approach on the apoptosis network.}
\label{tab:sp-apop}
\end{table}

\begin{table*}[!t]
\centering
\begin{tabular}{|C{1.5cm}|C{1cm}C{1cm}C{1.5cm}C{2.4cm}C{1.8cm}C{2.4cm}|}
\hline 
\multirow{2}{*}{Networks}  & \multirow{2}{*}{\begin{tabular}[c]{@{}c@{}}\# \\ ~~~nodes~~~\end{tabular}}  & \multirow{2}{*}{\begin{tabular}[c]{@{}c@{}}\# \\ ~~~blocks~~~\end{tabular}} & \multirow{2}{*}{\begin{tabular}[c]{@{}c@{}}\# \\ ~~~attractors~~~\end{tabular}} &Range of & Range of & Range of\\
 & & & & $t_{\it global}$ (ms) & $t_{\it decom}$ (ms) & speedups\\ \hline
PC12      & $32$    & $19$   & $7$  &$16-56$ &$5-12$ & $1.375-9.672$  \\
apoptosis & $97$    & $60$   & $16$ & $1,472-46,560$ & $747-994$ & $1.932-51.504$ \\ \hline
BN-100    & $100$   & $36$   & $9$  & $3,275-4424,750$ & $94-1,141$ & $11.738-35973.577$ \\  
BN-120    & $120$   & $27$   & $4$  & $257,3-14774,2$ & $2,840 - 6,466$ & $39,89-4818,72$ \\  
BN-180    & $180$   & $62$   & $2$  & $*$ & $1,402 - 1,462$ & $*$ \\  
 \hline
\end{tabular}
\caption{An overview of the evaluation results of the two real-life biological networks and the three randomly generated Boolean networks. 
The $*$ means the program fails to return any results within five hours.}
\label{tab:networks}
\end{table*}

\subsection{Case studies on randomly generated networks}
The same procedures are applied to three randomly generated Boolean networks with 100, 120 and 180 nodes.
An overview of the three networks and their evaluation results is given in Table~\ref{tab:networks}.
The BNs with 100, 120 and 180 nodes are labelled as BN-100, BN-120 and BN-180 and they have 9, 4 and 2 single-state attractors, respectively. 
The global approach fails to compute the driver nodes for the BN-180 network and for some cases of the BN-100 and BN-120 networks. 
The corresponding results are denoted as $*$.  
The range of the time costs of the decomposition-based approach for the BN-180 network is $1,402 - 1,462$ (ms).
For the BN-120 network, the ranges of the time costs of the global approach and our decomposition-based approach are $257,3-14774,2$ (ms) and $2,840-6,466$ (ms) resp.

Table~\ref{tab:BN-100} shows the time costs of the global approach and the decomposition-based approach on the BN-100 network.
When the target attractors are 1, 6 and 8, the global approach fails to return any results within five hours.
From Table~\ref{tab:BN-100}, it is clear that the execution time is highly dependent on the target attractor.
Especially for the global approach, it may cost a considerable amount of time when the basin of the target attractor is large.
In terms of the number of driver nodes, the results computed by the two approaches are identical (not shown here).

From experimental results on three randomly generated BNs,
we can conclude that the proposed decomposition-based approach scales well for large networks,
thanks to its `divide and conquer' strategy,
while the global approach fails to compute the results in some cases
due to the fact that it deals with the entire networks at once.

\begin{table*}[!t] 
\centering 
\begin{tabular}{|C{1.5cm}|C{1cm}C{1cm}C{1cm}C{1cm}C{1cm}C{1cm}C{1cm}C{1cm}C{1cm}|}
 \hline 
\multicolumn{1}{|c|}{\multirow{2}{*}{Attractors}} & \multicolumn{9}{c|}{Time (ms)}
 \\ \cline{2-10} 
\multicolumn{1}{|c|}{} & \multicolumn{1}{c}{~~~~~~~~$1$~~~~~~~~} & \multicolumn{1}{c}{~~~~~~~~$2$~~~~~~~~} & \multicolumn{1}{c}{~~~~~~~~$3$~~~~~~~~} & \multicolumn{1}{c}{~~~~~~~~$4$~~~~~~~~} & \multicolumn{1}{c}{~~~~~~~~$5$~~~~~~~~} & \multicolumn{1}{c}{~~~~~~~~$6$~~~~~~~~} & \multicolumn{1}{c}{~~~~~~~~$7$~~~~~~~~} & \multicolumn{1}{c}{~~~~~~~~$8$~~~~~~~~} & \multicolumn{1}{c|}{~~~~~~~~$9$~~~~~~~~} 
 \\ \hline
\multirow{2}{*}{$1$} &$-$ &$20,812$ &$70,816$ &$3,674$ &$8,278$ &$*$ &$4182,340$ &$*$ &$3,935$ \\ 
&$-$ &$595$ &$388$ &$279$ &$259$ &$276$ &$131$ &$114$ &$96$ \\ \hline 
\multirow{2}{*}{$2$} &$*$ &$-$ &$74,023$ &$3,694$ &$7,732$ &$*$ &$4072,920$ &$*$ &$3,808$ \\ 
&$1,136$ &$-$ &$389$ &$281$ &$260$ &$275$ &$123$ &$112$ &$96$ \\ \hline 
\multirow{2}{*}{$3$} &$*$ &$19,956$ &$-$ &$3,706$ &$7,688$ &$*$ &$3750,570$ &$*$ &$3,865$ \\ 
&$1,138$ &$595$ &$-$ &$282$ &$260$ &$275$ &$123$ &$112$ &$98$ \\ \hline 
\multirow{2}{*}{$4$} &$*$ &$19,747$ &$72,511$ &$-$ &$7,721$ &$*$ &$2021,220$ &$*$ &$3,904$ \\ 
&$1,138$ &$595$ &$388$ &$-$ &$260$ &$276$ &$123$ &$112$ &$97$ \\ \hline 
\multirow{2}{*}{$5$} &$*$ &$22,598$ &$29,760$ &$3,275$ &$-$ &$*$ &$4424,750$ &$*$ &$4,249$ \\ 
&$1,137$ &$595$ &$389$ &$279$ &$-$ &$275$ &$123$ &$112$ &$94$ \\ \hline 
\multirow{2}{*}{$6$} &$*$ &$19,744$ &$73,355$ &$3,707$ &$7,750$ &$-$ &$2149,410$ &$*$ &$3,883$ \\ 
&$1,141$ &$595$ &$389$ &$282$ &$259$ &$-$ &$121$ &$111$ &$96$ \\ \hline 
\multirow{2}{*}{$7$} &$*$ &$19,742$ &$72,197$ &$3,706$ &$7,689$ &$*$ &$-$ &$*$ &$3,886$ \\ 
&$1,139$ &$595$ &$390$ &$280$ &$259$ &$274$ &$-$ &$111$ &$97$ \\ \hline 
\multirow{2}{*}{$8$} &$*$ &$19,763$ &$73,115$ &$3,706$ &$7,701$ &$*$ &$2089,800$ &$-$ &$3,842$ \\ 
&$1,139$ &$594$ &$388$ &$281$ &$259$ &$274$ &$124$ &$-$ &$96$ \\ \hline 
\multirow{2}{*}{$9$} &$*$ &$19,719$ &$73,397$ &$3,710$ &$7,460$ &$*$ &$2343,120$ &$*$ &$-$ \\ 
&$1,139$ &$595$ &$389$ &$283$ &$259$ &$274$ &$124$ &$111$ &$-$ \\ \hline 
 \end{tabular}
\caption{Time costs of the global approach and the decomposition-based approach on the BN-100 network.
The $*$ means the program fails to return any results within five hours.}
\label{tab:BN-100}
\end{table*}



\section{Conclusions and Future Work}
\label{sec:conclusion}
In this work, we have described a decomposition-based approach towards the computation of a minimal set
of nodes (variables) to be simultaneously controlled of a BN so as to drive its dynamics from a source
state to a target attractor. Our approach is generic and can be applied based on any algorithm for
computing the strong basin of attraction of an attractor. For certain modular real-life networks, the
approach results in significant increase in efficiency compared with a global approach and its
generality means that the improvement in efficiency can be attained irrespective of the exact algorithm
used for the computation of the strong basins.

We have only scratched the surface of what we believe to be an exciting approach towards the control of
BNs which utilises both its structure and dynamics. We conclude by looking back critically
at our approach, summarising various extensions and discussing future directions.

As mentioned in Section~\ref{sec:intro}, the problem of minimal control is PSPACE-hard and efficient algorithms
are unlikely for the general cases. Yet in retrospect, one might ask what is the inherent characteristic of
our decomposition-based approach that makes it so efficient compared with the global approach for the
real-life networks that we studied.  We put forward a couple of heuristics which we believe explains and
crucially determines the success of our approach. One such heuristic is that the basins of attraction
computed at each step is small compared with the size of the transition system. 
This reduces the state space that needs to be considered in every subsequent step thus improving efficiency.

Another heuristic, which depends on the structure of the network, is that the number of blocks is small
compared with the total number of nodes in the network. Otherwise, the approach has to compute a large
number of local transition systems (as many as the number of blocks) which hampers its efficiency.
However, the number of blocks in the network cannot be too few either. Otherwise, our approach comes
close to the global approach in terms of efficiency. Note that if the entire network is one single giant
block, then the decomposition-based approach is the same as the global approach (given that the same
procedure is used for the computation of the strong basins) and there is no gain in efficiency.
One might thus conjecture that there is an optimal block-to-node ratio, given which, our
decomposition-based approach fares the best.

As discussed at the end of Section~\ref{sec:tsblocks}, in~\cite{MPY17b} the construction of the TS of a
non-elementary block $B$ depends on the transitions of the control nodes of $B$ which can be derived
by projecting the transitions in the attractors of the parent block(s) of $B$ to these control nodes.
By this process of projection, the states of the TS of $B$ had smaller dimension (equal to $|B|$) as
compared with our current approach where the states of $B$ have dimension equal to $|\ac(B)|$. This,
in effect, can speed up the decomposition-based approach. Unfortunately, it turns out that such a
projection does not work when we require to preserve the basins of the attractors across the blocks.
Projection results in loss of information, without which it is not possible to derive the global
basin of an attractor of the entire BN in terms of the cross of the local basins.
However, it can be shown that if we do generate the transition system of a non-elementary block $B$ by
projecting the basins of attractions of the parent blocks to the control nodes of $B$, the cross of the
local basins is a subset of the corresponding global basin of the attractor of the entire network.
Thus, if we are ready to sacrifice accuracy for efficiency, such a projection-based technique might be
faster for certain networks while not exactly giving the minimal nodes to control but a good-enough
approximation of it. We would like to study the gain in efficiency in our approach by applying the above
technique. 

One way to reduce the number of `small' blocks (which, as discussed, might degrade efficiency) might
be to combine multiple basic blocks into larger blocks. While constructing the local transition systems,
such merged blocks are treated as single basic blocks and their dynamics, attractors and basins are
computed in one-go. We believe there are many real-life networks which might benefit from such a process
of merging before applying our decomposition-based approach for control. This is another line of work that
we are pursuing at the moment.
As mentioned in the related work, the control approaches based on computation of the feedback vertex
set~\cite{ABGD13,BAGD13,ZYA17} and the stable motifs~\cite{ZA15} are promising approximate control algorithms for nonlinear dynamical networks.
We would like to compare our approaches with these two in terms of efficiency and the number of driver nodes. 
Finally, we plan to extend our decomposition-based approach to the control of probabilistic Boolean
networks~\cite{SD10,TMPTSS13}.


\begin{acks}
S. Paul and C. Su were supported by the research project SEC-PBN funded by the University of Luxembourg.
This work was also partially supported by the ANR-FNR project AlgoReCell ({\sf INTER/ANR/15/11191283}).
\end{acks}

\balance
\bibliographystyle{ACM-Reference-Format}
\bibliography{control}


\begin{thebibliography}{24}


\ifx \showCODEN    \undefined \def \showCODEN     #1{\unskip}     \fi
\ifx \showDOI      \undefined \def \showDOI       #1{#1}\fi
\ifx \showISBNx    \undefined \def \showISBNx     #1{\unskip}     \fi
\ifx \showISBNxiii \undefined \def \showISBNxiii  #1{\unskip}     \fi
\ifx \showISSN     \undefined \def \showISSN      #1{\unskip}     \fi
\ifx \showLCCN     \undefined \def \showLCCN      #1{\unskip}     \fi
\ifx \shownote     \undefined \def \shownote      #1{#1}          \fi
\ifx \showarticletitle \undefined \def \showarticletitle #1{#1}   \fi
\ifx \showURL      \undefined \def \showURL       {\relax}        \fi
\providecommand\bibfield[2]{#2}
\providecommand\bibinfo[2]{#2}
\providecommand\natexlab[1]{#1}
\providecommand\showeprint[2][]{arXiv:#2}

\bibitem[\protect\citeauthoryear{Czeizler, Gratie, Chiu, Kanhaiya, and
  Petre}{Czeizler et~al\mbox{.}}{2016}]%
        {CGCK16}
\bibfield{author}{\bibinfo{person}{E. Czeizler}, \bibinfo{person}{C. Gratie},
  \bibinfo{person}{W.~K. Chiu}, \bibinfo{person}{K. Kanhaiya}, {and}
  \bibinfo{person}{I. Petre}.} \bibinfo{year}{2016}\natexlab{}.
\newblock \showarticletitle{Target Controllability of Linear Networks}. In
  \bibinfo{booktitle}{\emph{Proc.\ 14th International Conference on
  Computational Methods in Systems Biology}} \emph{(\bibinfo{series}{LNCS})},
  Vol.~\bibinfo{volume}{9859}. \bibinfo{publisher}{Springer},
  \bibinfo{pages}{67--81}.
\newblock


\bibitem[\protect\citeauthoryear{Fiedler, Mochizuki, Kurosawa, and
  Saito}{Fiedler et~al\mbox{.}}{2013}]%
        {BAGD13}
\bibfield{author}{\bibinfo{person}{B. Fiedler}, \bibinfo{person}{A. Mochizuki},
  \bibinfo{person}{G. Kurosawa}, {and} \bibinfo{person}{D. Saito}.}
  \bibinfo{year}{2013}\natexlab{}.
\newblock \showarticletitle{Dynamics and control at feedback vertex sets. I:
  Informative and determining nodes in regulatory networks}.
\newblock \bibinfo{journal}{\emph{Journal of Dynamics and Differential
  Equations}} \bibinfo{volume}{25}, \bibinfo{number}{3} (\bibinfo{year}{2013}),
  \bibinfo{pages}{563--604}.
\newblock


\bibitem[\protect\citeauthoryear{Gao, Liu, D'Souza, and Barab\'asi}{Gao
  et~al\mbox{.}}{2014}]%
        {GLDB14}
\bibfield{author}{\bibinfo{person}{J. Gao}, \bibinfo{person}{Y.-Y. Liu},
  \bibinfo{person}{R.~M. D'Souza}, {and} \bibinfo{person}{A.-L. Barab\'asi}.}
  \bibinfo{year}{2014}\natexlab{}.
\newblock \showarticletitle{Target control of complex networks}.
\newblock \bibinfo{journal}{\emph{Nature Communications}}  \bibinfo{volume}{5}
  (\bibinfo{year}{2014}), \bibinfo{pages}{5415}.
\newblock


\bibitem[\protect\citeauthoryear{Gates and Rocha}{Gates and Rocha}{2016}]%
        {GR16}
\bibfield{author}{\bibinfo{person}{A.~J. Gates} {and} \bibinfo{person}{L.~M.
  Rocha}.} \bibinfo{year}{2016}\natexlab{}.
\newblock \showarticletitle{Control of complex networks requires both structure
  and dynamics}.
\newblock \bibinfo{journal}{\emph{Scientific Reports}} \bibinfo{volume}{6},
  \bibinfo{number}{24456} (\bibinfo{year}{2016}).
\newblock


\bibitem[\protect\citeauthoryear{Graf and Enver}{Graf and Enver}{2009}]%
        {Nature09}
\bibfield{author}{\bibinfo{person}{T. Graf} {and} \bibinfo{person}{T. Enver}.}
  \bibinfo{year}{2009}\natexlab{}.
\newblock \showarticletitle{Forcing cells to change lineages}.
\newblock \bibinfo{journal}{\emph{Nature}} \bibinfo{volume}{462},
  \bibinfo{number}{7273} (\bibinfo{year}{2009}), \bibinfo{pages}{587--594}.
\newblock


\bibitem[\protect\citeauthoryear{Huang}{Huang}{2001}]%
        {HS01}
\bibfield{author}{\bibinfo{person}{S. Huang}.} \bibinfo{year}{2001}\natexlab{}.
\newblock \showarticletitle{Genomics, complexity and drug discovery: insights
  from {B}oolean network models of cellular regulation}.
\newblock \bibinfo{journal}{\emph{Pharmacogenomics}} \bibinfo{volume}{2},
  \bibinfo{number}{3} (\bibinfo{year}{2001}), \bibinfo{pages}{203--222}.
\newblock


\bibitem[\protect\citeauthoryear{Kauffman}{Kauffman}{1969}]%
        {KS69}
\bibfield{author}{\bibinfo{person}{S.~A. Kauffman}.}
  \bibinfo{year}{1969}\natexlab{}.
\newblock \showarticletitle{Homeostasis and differentiation in random genetic
  control networks}.
\newblock \bibinfo{journal}{\emph{Nature}}  \bibinfo{volume}{224}
  (\bibinfo{year}{1969}), \bibinfo{pages}{177--178}.
\newblock


\bibitem[\protect\citeauthoryear{Lai}{Lai}{2014}]%
        {Lai14}
\bibfield{author}{\bibinfo{person}{Y.-C. Lai}.}
  \bibinfo{year}{2014}\natexlab{}.
\newblock \showarticletitle{Controlling complex, non-linear dynamical
  networks}.
\newblock \bibinfo{journal}{\emph{National Science Review}}
  \bibinfo{volume}{1}, \bibinfo{number}{3} (\bibinfo{year}{2014}),
  \bibinfo{pages}{339--341}.
\newblock


\bibitem[\protect\citeauthoryear{Liu, Slotine, and Barab\'asi}{Liu
  et~al\mbox{.}}{2011}]%
        {LSB11}
\bibfield{author}{\bibinfo{person}{Y.-Y. Liu}, \bibinfo{person}{J.-J. Slotine},
  {and} \bibinfo{person}{A.-L. Barab\'asi}.} \bibinfo{year}{2011}\natexlab{}.
\newblock \showarticletitle{Controllability of complex networks}.
\newblock \bibinfo{journal}{\emph{Nature}}  \bibinfo{volume}{473}
  (\bibinfo{year}{2011}), \bibinfo{pages}{167--€"--173}.
\newblock


\bibitem[\protect\citeauthoryear{Lomuscio, Qu, and Raimondi}{Lomuscio
  et~al\mbox{.}}{2017}]%
        {MCMAS}
\bibfield{author}{\bibinfo{person}{A. Lomuscio}, \bibinfo{person}{H. Qu}, {and}
  \bibinfo{person}{F. Raimondi}.} \bibinfo{year}{2017}\natexlab{}.
\newblock \showarticletitle{{MCMAS}: An open-source model checker for the
  verification of multi-agent systems}.
\newblock \bibinfo{journal}{\emph{International Journal on Software Tools for
  Technology Transfer}} \bibinfo{volume}{19}, \bibinfo{number}{1}
  (\bibinfo{year}{2017}), \bibinfo{pages}{9--30}.
\newblock


\bibitem[\protect\citeauthoryear{Mandon, Haar, and Paulev{\'e}}{Mandon
  et~al\mbox{.}}{2016}]%
        {MHP16}
\bibfield{author}{\bibinfo{person}{H. Mandon}, \bibinfo{person}{S. Haar}, {and}
  \bibinfo{person}{L. Paulev{\'e}}.} \bibinfo{year}{2016}\natexlab{}.
\newblock \showarticletitle{Relationship between the Reprogramming Determinants
  of Boolean Networks and their Interaction Graph}. In
  \bibinfo{booktitle}{\emph{Proc.\ 5th International Workshop on Hybrid Systems
  Biology}} \emph{(\bibinfo{series}{LNCS})}, Vol.~\bibinfo{volume}{9957}.
  \bibinfo{publisher}{Springer}, \bibinfo{pages}{113--127}.
\newblock


\bibitem[\protect\citeauthoryear{Mandon, Haar, and Paulev{\'e}}{Mandon
  et~al\mbox{.}}{2017}]%
        {MHP17}
\bibfield{author}{\bibinfo{person}{H. Mandon}, \bibinfo{person}{S. Haar}, {and}
  \bibinfo{person}{L. Paulev{\'e}}.} \bibinfo{year}{2017}\natexlab{}.
\newblock \showarticletitle{Temporal Reprogramming of Boolean Networks}. In
  \bibinfo{booktitle}{\emph{Proc.\ 15th International Conference on
  Computational Methods in Systems Biology}} \emph{(\bibinfo{series}{LNCS})},
  Vol.~\bibinfo{volume}{10545}. \bibinfo{publisher}{Springer},
  \bibinfo{pages}{179--195}.
\newblock


\bibitem[\protect\citeauthoryear{Marques-Pita and Rocha}{Marques-Pita and
  Rocha}{2013}]%
        {MPR13}
\bibfield{author}{\bibinfo{person}{M. Marques-Pita} {and} \bibinfo{person}{L.M.
  Rocha}.} \bibinfo{year}{2013}\natexlab{}.
\newblock \showarticletitle{Canalization and control in automata networks: body
  segmentation in Drosophila melanogaster}.
\newblock \bibinfo{journal}{\emph{PLoS One}} \bibinfo{volume}{8},
  \bibinfo{number}{3} (\bibinfo{year}{2013}).
\newblock
\newblock
\shownote{e55946.}


\bibitem[\protect\citeauthoryear{Mizera, Pang, Qu, and Yuan}{Mizera
  et~al\mbox{.}}{2018}]%
        {MPY17b}
\bibfield{author}{\bibinfo{person}{A. Mizera}, \bibinfo{person}{J. Pang},
  \bibinfo{person}{H. Qu}, {and} \bibinfo{person}{Q. Yuan}.}
  \bibinfo{year}{2018}\natexlab{}.
\newblock \showarticletitle{Taming Asynchrony for Attractor Detection in Large
  {B}oolean Networks}.
\newblock \bibinfo{journal}{\emph{IEEE/ACM Transactions on Computational
  Biology and Bioinformatics (Special issue of APBC'18)}}
  (\bibinfo{year}{2018}).
\newblock


\bibitem[\protect\citeauthoryear{Mizera, Pang, and Yuan}{Mizera
  et~al\mbox{.}}{2016}]%
        {MPY16b}
\bibfield{author}{\bibinfo{person}{A. Mizera}, \bibinfo{person}{J. Pang}, {and}
  \bibinfo{person}{Q. Yuan}.} \bibinfo{year}{2016}\natexlab{}.
\newblock \showarticletitle{{ASSA-PBN} 2.0: {A} software tool for probabilistic
  {B}oolean networks}. In \bibinfo{booktitle}{\emph{Proc.\ 14th International
  Conference on Computational Methods in Systems Biology}}
  \emph{(\bibinfo{series}{LNCS})}, Vol.~\bibinfo{volume}{9859}.
  \bibinfo{publisher}{Springer}, \bibinfo{pages}{309--315}.
\newblock


\bibitem[\protect\citeauthoryear{Mochizuki, Fiedler, Kurosawa, and
  Saito}{Mochizuki et~al\mbox{.}}{2013}]%
        {ABGD13}
\bibfield{author}{\bibinfo{person}{A. Mochizuki}, \bibinfo{person}{B. Fiedler},
  \bibinfo{person}{G. Kurosawa}, {and} \bibinfo{person}{D. Saito}.}
  \bibinfo{year}{2013}\natexlab{}.
\newblock \showarticletitle{Dynamics and control at feedback vertex sets. {II}:
  A faithful monitor to determine the diversity of molecular activities in
  regulatory networks}.
\newblock \bibinfo{journal}{\emph{J. Theor. Biol.}}  \bibinfo{volume}{335}
  (\bibinfo{year}{2013}), \bibinfo{pages}{130--146}.
\newblock


\bibitem[\protect\citeauthoryear{nudo and Albert}{nudo and Albert}{2015}]%
        {ZA15}
\bibfield{author}{\bibinfo{person}{J.~G. T.~Za\ nudo} {and} \bibinfo{person}{R.
  Albert}.} \bibinfo{year}{2015}\natexlab{}.
\newblock \showarticletitle{Cell fate reprogramming by control of intracellular
  network dynamics}.
\newblock \bibinfo{journal}{\emph{PLoS Computational Biology}}
  \bibinfo{volume}{11}, \bibinfo{number}{4} (\bibinfo{year}{2015}),
  \bibinfo{pages}{e1004193}.
\newblock


\bibitem[\protect\citeauthoryear{Offermann, Knauer, Singh,
  Fern{\'a}ndez-Cach{\'o}n, Klose, Kowar, Busch, and Boerries}{Offermann
  et~al\mbox{.}}{2016}]%
        {OKB16}
\bibfield{author}{\bibinfo{person}{B. Offermann}, \bibinfo{person}{S. Knauer},
  \bibinfo{person}{A. Singh}, \bibinfo{person}{M.~L. Fern{\'a}ndez-Cach{\'o}n},
  \bibinfo{person}{M. Klose}, \bibinfo{person}{S. Kowar}, \bibinfo{person}{H.
  Busch}, {and} \bibinfo{person}{M. Boerries}.}
  \bibinfo{year}{2016}\natexlab{}.
\newblock \showarticletitle{{Boolean} modeling reveals the necessity of
  transcriptional regulation for bistability in {PC12} cell differentiation}.
\newblock \bibinfo{journal}{\emph{F. Genetics}}  \bibinfo{volume}{7}
  (\bibinfo{year}{2016}), \bibinfo{pages}{44}.
\newblock


\bibitem[\protect\citeauthoryear{Schlatter, Schmich, Vizcarra, Scheurich,
  Sauter, Borner, Ederer, Merfort, and Sawodny}{Schlatter
  et~al\mbox{.}}{2009}]%
        {SSVSSBEMS09}
\bibfield{author}{\bibinfo{person}{R. Schlatter}, \bibinfo{person}{K. Schmich},
  \bibinfo{person}{I.~A. Vizcarra}, \bibinfo{person}{P. Scheurich},
  \bibinfo{person}{T. Sauter}, \bibinfo{person}{C. Borner}, \bibinfo{person}{M.
  Ederer}, \bibinfo{person}{I. Merfort}, {and} \bibinfo{person}{O. Sawodny}.}
  \bibinfo{year}{2009}\natexlab{}.
\newblock \showarticletitle{{ON/OFF} and Beyond -~A Boolean Model of
  Apoptosis}.
\newblock \bibinfo{journal}{\emph{PLOS Computational Biology}}
  \bibinfo{volume}{5}, \bibinfo{number}{12} (\bibinfo{year}{2009}),
  \bibinfo{pages}{e1000595}.
\newblock


\bibitem[\protect\citeauthoryear{Shmulevich and Dougherty}{Shmulevich and
  Dougherty}{2010}]%
        {SD10}
\bibfield{author}{\bibinfo{person}{I. Shmulevich} {and} \bibinfo{person}{E.~R.
  Dougherty}.} \bibinfo{year}{2010}\natexlab{}.
\newblock \bibinfo{booktitle}{\emph{Probabilistic Boolean Networks: The
  Modeling and Control of Gene Regulatory Networks}}.
\newblock \bibinfo{publisher}{SIAM Press}.
\newblock


\bibitem[\protect\citeauthoryear{Sol and Buckley}{Sol and Buckley}{2014}]%
        {SC14}
\bibfield{author}{\bibinfo{person}{A.~del Sol} {and} \bibinfo{person}{N.J.
  Buckley}.} \bibinfo{year}{2014}\natexlab{}.
\newblock \showarticletitle{Concise review: A population shift view of cellular
  reprogramming}.
\newblock \bibinfo{journal}{\emph{STEM CELLS}} \bibinfo{volume}{32},
  \bibinfo{number}{6} (\bibinfo{year}{2014}), \bibinfo{pages}{1367--1372}.
\newblock


\bibitem[\protect\citeauthoryear{Trairatphisan, Mizera, Pang, Tantar,
  Schneider, and Sauter}{Trairatphisan et~al\mbox{.}}{2013}]%
        {TMPTSS13}
\bibfield{author}{\bibinfo{person}{P. Trairatphisan}, \bibinfo{person}{A.
  Mizera}, \bibinfo{person}{J. Pang}, \bibinfo{person}{A.-A. Tantar},
  \bibinfo{person}{J. Schneider}, {and} \bibinfo{person}{T. Sauter}.}
  \bibinfo{year}{2013}\natexlab{}.
\newblock \showarticletitle{Recent development and biomedical applications of
  probabilistic {B}oolean networks}.
\newblock \bibinfo{journal}{\emph{Cell Communication and Signaling}}
  \bibinfo{volume}{11} (\bibinfo{year}{2013}), \bibinfo{pages}{46}.
\newblock


\bibitem[\protect\citeauthoryear{Wang, Su, Huang, Wang, Wang, Grebogi, and
  Lai}{Wang et~al\mbox{.}}{2016}]%
        {L16}
\bibfield{author}{\bibinfo{person}{L.-Z. Wang}, \bibinfo{person}{R.-Q. Su},
  \bibinfo{person}{Z.-G. Huang}, \bibinfo{person}{X. Wang},
  \bibinfo{person}{W.-X. Wang}, \bibinfo{person}{C. Grebogi}, {and}
  \bibinfo{person}{Y.-C. Lai}.} \bibinfo{year}{2016}\natexlab{}.
\newblock \showarticletitle{A geometrical approach to control and
  controllability of nonlinear dynamical networks}.
\newblock \bibinfo{journal}{\emph{Nature Communications}}  \bibinfo{volume}{7}
  (\bibinfo{year}{2016}), \bibinfo{pages}{11323}.
\newblock


\bibitem[\protect\citeauthoryear{Za{\~n}udo, Yang, and Albert}{Za{\~n}udo
  et~al\mbox{.}}{2017}]%
        {ZYA17}
\bibfield{author}{\bibinfo{person}{J.~G.~T. Za{\~n}udo}, \bibinfo{person}{G.
  Yang}, {and} \bibinfo{person}{R. Albert}.} \bibinfo{year}{2017}\natexlab{}.
\newblock \showarticletitle{Structure-based control of complex networks with
  nonlinear dynamics}.
\newblock \bibinfo{journal}{\emph{Proceedings of the National Academy of
  Sciences}} \bibinfo{volume}{114}, \bibinfo{number}{28}
  (\bibinfo{year}{2017}), \bibinfo{pages}{7234--7239}.
\newblock


\end{thebibliography}

\clearpage
\appendix


\section{Detailed Proofs}\label{appendix:proofs}
\subsection{Correctness of Algorithm~\ref{alg:stb}}
Define an operator $F$ on $\St$ as follows. For any subset $\T$ of state:
$$F(\T) = \T\setminus (\pre(\post(\T)\setminus\T)\cap\T)$$

It is easy to see that $F$ is monotonically decreasing and hence its greatest fixed point exists. We want to show that for any attractor $A$ of $\ts$, $F^\infty(\bas^W(A))=\bas^S(A)$. That is, to compute the strong basin of $A$ once can start with its weak basin and apply the operator $F$ repeatedly till a fixed point is reached which gives its strong basin. The operation has to be repeated $m$ times where $m$ is the index of $F^\infty(\bas^W(A))$. Note that this would immediately prove the correctness of Algorithm~\ref{alg:stb} since this operation corresponds to the iterative update operation in Algorithm~\ref{alg:stb}, line~\ref{st:fp}. We do so by proving the following lemmas.

\begin{lemma}\label{lem:direct}
  For any state $\state\in \St$, if $\state\notin \bas^S(A)$ then $\state\notin F^\infty(\bas^W(A))$.
\end{lemma}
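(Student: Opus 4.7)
The idea is to track how the iteration $W_k := F^k(\bas^W(A))$ peels away the states of $\bas^W(A) \setminus \bas^S(A)$ layer by layer. Since $F$ is monotonically decreasing, the chain $W_0 \supseteq W_1 \supseteq \cdots$ stabilises to $F^\infty(\bas^W(A))$, and each $W_k \subseteq W_0 = \bas^W(A)$. If $\state \notin \bas^W(A)$, monotonicity immediately gives $\state \notin W_k$ for every $k$, so the trivial half of the case split is done.

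The substantive case is $\state \in \bas^W(A) \setminus \bas^S(A)$. The key quantity I would introduce is the \emph{escape distance}
\[
d(\state) \;=\; \min\{\,\ell \ge 0 : \exists\; \state_0 \to \state_1 \to \cdots \to \state_\ell \text{ with } \state_0 = \state \text{ and } \state_\ell \notin \bas^W(A)\,\},
\]
with $d(\state) = \infty$ if no such path exists. The first subgoal is to show $d(\state) < \infty$. By definition of the strong basin, such a $\state$ lies in $\bas^W(A')$ for some attractor $A' \neq A$, so there is a path from $\state$ to $A'$. Since $A$ and $A'$ are distinct bottom SCCs by Observation~\ref{obs:attr}, no state of $A'$ can reach $A$, hence $A' \cap \bas^W(A) = \emptyset$, and any path from $\state$ to $A'$ must exit $\bas^W(A)$ at some step.

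Next I would prove by induction on $k$ that $d(\state) \le k$ implies $\state \notin W_k$. The base case $k = 1$ is direct: an escape successor $\state_1 \notin \bas^W(A) = W_0$ witnesses $\state \in \pre(\post(W_0) \setminus W_0) \cap W_0$, so $\state$ is removed by the first application of $F$. For the inductive step, given $d(\state) = k+1$, the first step of a shortest escape path yields $\state_1 \in \post(\state)$ with $d(\state_1) = k$, so the IH gives $\state_1 \notin W_k$. If $\state \notin W_k$ already, monotonicity finishes; otherwise $\state_1 \in \post(\state) \setminus W_k$ certifies $\state \in \pre(\post(W_k) \setminus W_k) \cap W_k$, hence $\state \notin W_{k+1}$. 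Combined with finiteness of $d$, this yields $\state \notin F^\infty(\bas^W(A))$.

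The main obstacle I anticipate is the bookkeeping around the escape distance -- specifically, verifying that the first edge of a \emph{shortest} escape path lands at a state whose escape distance drops by exactly one (so that the IH applies to $\state_1$), and handling the two subcases of the inductive step (whether $\state$ has already been removed at an earlier stage) uniformly. Once this is in place, the remainder is routine set-algebra with $F$ and monotonicity of the chain $(W_k)$.
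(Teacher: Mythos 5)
Your proposal is correct and follows essentially the same route as the paper's proof: both arguments take a path from $\state$ to a different attractor $A'$, observe it must exit $\bas^W(A)$, and induct on the position of that first exiting transition to show $\state$ is peeled off by the corresponding iterate of $F$. Your ``escape distance'' is just a cleaner bookkeeping device for the same layer-by-layer induction (and it quietly repairs the paper's slightly informal handling of the induction hypothesis along the shortest path), so no substantive difference remains.
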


\begin{proof}
  Suppose for some $\state\in \St$, $\state\notin\bas^S(A)$. Then either (i) there is no path from $\state$ to $A$ or (ii) there is a path from $\state$ to another attractor $A'\neq A$ of $\ts$. If (i) holds then $\state\notin\bas^W(A)$ either and hence $\state\notin F^\infty(\bas^W(A))$. So suppose (ii) holds and there is a path from $\state$ to another attractor $A'\neq A$. Consider the shortest such path $\state_0\rightarrow\state_1\rightarrow\ldots\rightarrow \state_n$, where $\state_0=\state$ and $\state_n\in A'$ and let $\state_i \rightarrow \state_{(i+1)},\ 0\leq i<n$ be the first transition along this path that moves out of $\bas^W(A)$. That is, $\state_i\in \bas^W(A)$ but $\state_{(i+1)}\notin\bas^W(A)$. We claim that $\state\notin F^j(\bas^W(A))$ for all $j\geq (i+1)$. That is, $\state$ is removed in the $(i+1)$th step in the inductive construction of $F^\infty(\bas^W(A))$. We prove this by induction on $i$.

  Suppose $i=0$. Then there is already a transition from $\state$ out of $\bas^W(A)$ and hence $\state\in (\pre(\post(\bas^W(A))\setminus A)\cap \bas^W(A))$. Thus $\state\notin F(\bas^W(A))$. Next, suppose $i> 0$ and the premise holds for all $j: 0\leq j < i$. Then by induction hypothesis we have $\state_1\notin F^i(\bas^W(A))$. Hence $\state\in (\pre(\post(F^i(\bas^W(A)))\setminus F^i(\bas^W(A)))\cap F^i(\bas^W(A)))$. and will be removed in the $(i+1)$th step of the inductive construction.
\end{proof}

For the converse direction, first, we easily observe from the definition of weak and strong basins that:

\begin{lemma}\label{lem:obs}
  Let $A$ be an attractor of $\ts$. Then
  \begin{itemize}
    \item $\bas^S(A)\subseteq \bas^W(A)$,
    \item for any state $\state\in \St$, $\state\in\bas^S(A)$ iff, for all transitions $\state\rightarrow\state'$, we have $\state'\in\bas^S(A)$.
  \end{itemize}
\end{lemma}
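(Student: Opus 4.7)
The first bullet is essentially a definitional unpacking: $\bas^S(A)$ is defined as $\bas^W(A)$ minus all the other weak basins, so the inclusion is immediate and needs no argument beyond citing the definition.

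For the second bullet, my plan is to prove the two implications separately, leaning on two background observations about $\ts$ that I will establish up front: (a) every state has at least one outgoing transition, directly from Definition~\ref{def:dynamics} (pick any $i$ with $f_i(\state) \neq \state[i]$, or any $i$ when $\state$ is a fixed point of $\f$, yielding a self-loop); and (b) from every state some attractor is reachable, since $\St$ is finite and the forward-reachable set must contain a bottom SCC, which is an attractor by Observation~\ref{obs:attr}.

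For the forward direction, suppose $\state \in \bas^S(A)$ and take any $\state \rightarrow \state'$. The key observation is $\reach(\state') \subseteq \reach(\state)$, so any attractor reachable from $\state'$ is also reachable from $\state$, and by the strong-basin condition on $\state$ this attractor must be $A$. Together with (b) this shows $\state'$ reaches $A$ and no other attractor, placing it in $\bas^S(A)$.

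For the backward direction, assume every successor of $\state$ lies in $\bas^S(A)$. Use (a) to pick one such $\state'$; from $\state' \in \bas^W(A)$ and $\state \rightarrow \state'$ we immediately get $\state \in \bas^W(A)$. The non-trivial step, and the one I expect to be the main obstacle, is ruling out $\state \in \bas^W(A'')$ for some $A'' \neq A$: if a witness path has length $\geq 1$, then its first step lands in a successor of $\state$ that is both in $\bas^S(A)$ (by hypothesis) and in $\bas^W(A'')$, giving the desired contradiction; the residual edge case, where $\state$ itself already lies in another attractor $A''$, is handled by noting that a BSCC traps all successors inside $A''$, so every successor is in $\bas^W(A'')$ and cannot be in $\bas^S(A)$, again contradicting the hypothesis. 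All other steps are routine reachability manipulations.
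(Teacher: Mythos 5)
Your proof is correct. Note, however, that the paper does not actually prove this lemma: it is introduced with ``we easily observe from the definition of weak and strong basins that\ldots'' and left as an immediate consequence of the definitions. Your write-up therefore supplies the missing details, and it does so along the natural lines. The first bullet is indeed pure definition-unpacking, as you say. For the second bullet you correctly identify that the equivalence is \emph{not} purely definitional: it relies on the two structural facts you isolate, namely that every state of $\ts$ has at least one outgoing transition (true here because a fixed point has a self-loop and a non-fixed point can fire some $i$ with $f_i(\state)\neq\state[i]$) and that, by finiteness of $\St$, every state reaches some bottom SCC and hence some attractor. Without the first fact the right-hand side of the ``iff'' could hold vacuously at a deadlocked state, and without the second the forward direction could not conclude $\state'\in\bas^W(A)$; so these hypotheses are genuinely doing work, and your use of them (monotonicity $\reach(\state')\subseteq\reach(\state)$ in one direction, the first-step/length-zero case split in the other, with the attractor-as-trap argument for the residual case $\state\in A''$) is sound. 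The only cosmetic quibble is that fact (a) is most directly read off Definition~\ref{def:ts} rather than Definition~\ref{def:dynamics}, though the two are equivalent formulations of the same transition relation.
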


We thus have

\begin{lemma}\label{lem:converse}
  For any state $\state\in \St$, if $\state\notin F^\infty(\bas^W(A))$ then $\state\notin \bas^S(A)$.
\end{lemma}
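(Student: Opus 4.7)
The plan is to establish the contrapositive: $\bas^S(A)\subseteq F^\infty(\bas^W(A))$, which can be proved by induction on the iteration index, showing $\bas^S(A)\subseteq F^i(\bas^W(A))$ for every $i\geq 0$. Since $F$ is monotonically decreasing and the state space is finite, $F^\infty(\bas^W(A))$ is reached in finitely many steps, so the inductive statement implies the limit containment.

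For the base case $i=0$, the first bullet of Lemma~\ref{lem:obs} gives $\bas^S(A)\subseteq \bas^W(A) = F^0(\bas^W(A))$. For the inductive step, assume $\bas^S(A)\subseteq F^i(\bas^W(A))$ and take any $\state\in \bas^S(A)$. By the hypothesis, $\state\in F^i(\bas^W(A))$, so it suffices to rule out the possibility that $\state\in \pre(\post(F^i(\bas^W(A)))\setminus F^i(\bas^W(A)))$. Suppose for contradiction that there exists a transition $\state\rightarrow\state'$ with $\state'\notin F^i(\bas^W(A))$. By the second bullet of Lemma~\ref{lem:obs}, every successor of a strong-basin state lies again in $\bas^S(A)$, so $\state'\in \bas^S(A)$. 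The inductive hypothesis then forces $\state'\in F^i(\bas^W(A))$, contradicting the choice of $\state'$. Hence $\state\in F^{i+1}(\bas^W(A))$, completing the induction.

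Taking the intersection over all $i$ yields $\bas^S(A)\subseteq F^\infty(\bas^W(A))$, which is exactly the contrapositive of the lemma: if $\state\notin F^\infty(\bas^W(A))$ then $\state\notin \bas^S(A)$. Combined with Lemma~\ref{lem:direct}, this establishes $F^\infty(\bas^W(A))=\bas^S(A)$, and consequently the correctness of Algorithm~\ref{alg:stb}, since the while loop at line~\ref{st:fp} computes precisely this greatest fixed point of $F$ starting from {\sc WB}.

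I do not expect any serious obstacle here; the proof is essentially an exercise in unfolding definitions. The only subtlety worth noting is the direction of the induction: the natural temptation is to induct on the length of a witnessing path (as in Lemma~\ref{lem:direct}), but the cleaner argument for this converse direction is to induct on the fixed-point approximant index and lean on the closure property of $\bas^S(A)$ under successors provided by the second bullet of Lemma~\ref{lem:obs}.
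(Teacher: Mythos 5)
Your proof is correct and takes essentially the same route as the paper's: an induction over the fixpoint iteration stages whose engine is the closure of $\bas^S(A)$ under successors from Lemma~\ref{lem:obs}. You phrase it in contrapositive/invariant form ($\bas^S(A)\subseteq F^i(\bas^W(A))$ for every $i$), while the paper inducts on the step at which a state is removed from the approximant; the two formulations are interchangeable and equally valid.
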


\begin{proof}
  For some state $\state\in\St$, if $\state\notin F^\infty(\bas^W(A))$ then either $\state\notin \bas^W(A)$, in which case $\state\notin\bas^S(A)$ [by Lemma \ref{lem:obs}] or $\state\in\bas^W(A)$ but gets removed from $F^\infty(\bas^W(A))$ at the $i$th step of the inductive construction for some $i\geq 1$. We do an induction on $i$ to show that in that case $\state\notin\bas^S(A)$. Suppose $i=1$. Then by definition $\state\in (\pre(\post(\bas^W(A))\setminus\bas^W(A))\cap \bas^W(A))$ which means there is a transition from $\state$ to some $\state'\notin\bas^W(A)$. Thus $\state\notin \bas^S(A)$ [by Lemma \ref{lem:obs}]. Next suppose $i>1$ and the premise holds for  all $j: 1\leq j< i$. Then, $\state\in (\pre(\post(F^{(i-1)}(\bas^W(A)))\setminus F^{(i-1)}(\bas^W(A)))\cap F^{(i-1)}(\bas^W(A)))$. This means there is a state $\state'\in F^{(i-1)}(\bas^W(A))$ such that there is a transition from $\state$ to $\state'$. But since by induction hypothesis $\state'\notin \bas^S(A)$ we must have that $\state\notin\bas^S(A)$ [by Lemma \ref{lem:obs}].
\end{proof}

Combining Lemmas~\ref{lem:direct} and \ref{lem:converse} we have

\begin{theorem}[Correctness of Algorithm \ref{alg:stb}]\label{thm:strongweak}
  For any attractor $A$ of $\ts$ we have $$\bas^S(A) =  F^\infty(\bas^W(A))$$
\end{theorem}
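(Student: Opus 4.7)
The plan is to establish $\bas^S(A) = F^\infty(\bas^W(A))$ by proving both inclusions, leveraging the key closure property that $\post(\bas^S(A)) \subseteq \bas^S(A)$: if $\state \in \bas^S(A)$ and $\state \to \state'$, then $\state'$ can still only reach $A$ (any other attractor reachable from $\state'$ would also be reachable from $\state$), so $\state' \in \bas^S(A)$.

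For the direction $\bas^S(A) \subseteq F^\infty(\bas^W(A))$, I would proceed by induction on $i$ to show $\bas^S(A) \subseteq F^i(\bas^W(A))$. The base case is immediate since $\bas^S(A) \subseteq \bas^W(A)$. Writing $T_i := F^i(\bas^W(A))$ and assuming $\bas^S(A) \subseteq T_i$, any $\state \in \bas^S(A)$ satisfies $\post(\state) \subseteq \bas^S(A) \subseteq T_i$, so $\post(\state)$ is disjoint from $\post(T_i) \setminus T_i$ and therefore $\state \notin \pre(\post(T_i) \setminus T_i)$. Hence $\state \in F(T_i) = T_{i+1}$, and the induction carries through to $F^\infty$.

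For the reverse inclusion $F^\infty(\bas^W(A)) \subseteq \bas^S(A)$, I would argue the contrapositive. Take $\state \in \bas^W(A) \setminus \bas^S(A)$; then some path in $\ts$ from $\state$ reaches another attractor $A' \neq A$. Because states of $A'$ cannot reach $A$, such a path must contain a transition that crosses from inside $\bas^W(A)$ to outside. Choose a witness path $\state = \state_0 \to \state_1 \to \cdots \to \state_n$ that minimises the index $k$ for which $\state_k \in \bas^W(A)$ and $\state_{k+1} \notin \bas^W(A)$. I would then show by induction on $k$ that $\state \notin F^{k+1}(\bas^W(A))$: when $k = 0$, $\state$ has a direct successor outside $\bas^W(A)$, so $F$ removes $\state$ at the first step; when $k > 0$, the shorter tail $\state_1 \to \cdots \to \state_n$ has escape index $k - 1$, so by the inductive hypothesis $\state_1 \notin F^k(\bas^W(A))$, and since $\state \to \state_1$ with $\state_1 \in \post(F^k(\bas^W(A))) \setminus F^k(\bas^W(A))$, the definition of $F$ forces $\state \notin F^{k+1}(\bas^W(A))$.

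The main obstacle, I anticipate, is the bookkeeping in the second direction: the operator $F$ peels states away in layers measured by their transition-distance to an escape edge, and the induction must be applied to a genuinely shorter witness. Some care is needed to check that if $\state \in F^k(\bas^W(A))$ then $\state_1 \in \post(F^k(\bas^W(A)))$ (which is immediate), and to verify monotonicity of $F$ so that a state already removed at an earlier stage stays removed. Once this is laid out, combining the two inclusions yields the theorem, thereby justifying the fixpoint update in Algorithm~\ref{alg:stb}.
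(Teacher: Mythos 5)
Your proposal is correct and takes essentially the same route as the paper: your second inclusion is exactly the argument of Lemma~\ref{lem:direct} (track the first transition leaving $\bas^W(A)$ along a path to a different attractor and induct on its position), while your first inclusion is the contrapositive of Lemma~\ref{lem:converse}, resting on the same transition-closure property of $\bas^S(A)$ that the paper records as Lemma~\ref{lem:obs}. The only cosmetic difference is that you phrase that direction as invariance of $\bas^S(A)$ under each application of $F$ rather than as an induction on the step at which a state is removed.
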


\subsection{Proofs of Theorem~\ref{thm:attpres} and Theorem~\ref{thm:bassubset}}
Let us start with the case where our given Boolean network $\bn$ has two basic blocks $B_1$ and $B_2$. We shall later generalise the results to the case where $\bn$ has more than two basic blocks by inductive arguments.

Note that either one or both the blocks $B_1$ and $B_2$ are elementary. If only one of the blocks is elementary, we shall without loss in generality, assume that it is $B_1$. Let $\ts,\ts_1$ and $\ts_2$ be the transition systems of $\bn,B_1$ and $B_2$ respectively where, if $B_2$ is non elementary, we shall assume that $\ts_2$ is transition system of $B_2$ generated by the basin of an attractor $A_1$ of $\ts_1$.

The states of a transition system will be denoted by $\state$ or $\tstate$ with appropriate subscripts and/or superscripts. For any state $\state\in \ts$ (resp. $\tstate\in \ts$), we shall denote $\state|_{B_1}$ (resp. $\tstate|_{B_1}$) by $\state_1$ (resp. $\tstate_1$) and $\state|_{B_2}$ (resp. $\tstate|_{B_2}$) by $\state_2$ (resp. $\tstate_2$). Similarly, for a set of states $T$ of $\ts$, $T_1$ and $T_2$ will denote the set of projections of the states in $T$ to $B_1$ and $B_2$ respectively.

Let $B_1^- = B_1\setminus(B_1\cap B_2)$ and $B_2^-= B_2\setminus(B_1\cap B_2)$. We shall denote any transition $\state \longrightarrow \state'$ in $\ts$ by $\state\move{B}\state'$ if the variable whose value changes in the transition is in the set $B$.

\begin{lemma}\label{lem:elemtofull}
  For an elementary block $B_i$ of $\bn$ and for every $\state_i,\state'_i$ of $\ts_i$, if there is a path from $\state_i$ to $\state'_i$ in $\ts_i$, then there is a path from $\state$ to $\state'$ in $\ts$ such that $\state|_{B_i}=\state_i, \state'|_{B_i} = \state'_i$ and $\state|_{B_j^-} = \state'|_{B_j^-},\ j\neq i$.
\end{lemma}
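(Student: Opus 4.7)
The plan is to proceed by induction on the length $m$ of the given path $\state_i = \state_i^0 \rightarrow \state_i^1 \rightarrow \cdots \rightarrow \state_i^m = \state'_i$ in $\ts_i$. The key observation to exploit is that $B_i$ being elementary means that for every $v \in B_i$, the update function $f_v$ depends only on variables in $B_i$; equivalently, for any $\state, \tstate \in \St$ with $\state|_{B_i} = \tstate|_{B_i}$ we have $f_v(\state) = f_v(\tstate)$. Combined with the fact that asynchronous transitions flip at most one variable, this will let me lift any local transition in $\ts_i$ to a global transition in $\ts$ that leaves every coordinate outside $B_i$ untouched.

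For the base case $m=0$, I would simply pick any extension $\state \in \St$ with $\state|_{B_i} = \state_i$ (such an extension exists by the definition of the cross operation, taking any state over the complement of $B_i$) and set $\state' = \state$; the required equalities then hold trivially, and the empty path works in $\ts$. For the inductive step, assume the statement holds for paths of length $m$, and consider a path of length $m+1$ ending with a transition $\state_i^m \rightarrow \state_i^{m+1}$ in $\ts_i$. By the induction hypothesis, there exists a path in $\ts$ from some $\state$ with $\state|_{B_i} = \state_i^0$ to some $\state^m$ with $\state^m|_{B_i} = \state_i^m$ and $\state|_{B_j^-} = \state^m|_{B_j^-}$ for $j \neq i$. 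The local transition $\state_i^m \rightarrow \state_i^{m+1}$ is realised by flipping some variable $v \in B_i$ so that $\state_i^{m+1}[v] = f_v(\state_i^m)$. I would then define $\state^{m+1}$ by flipping exactly the same coordinate $v$ in $\state^m$; since $B_i$ is elementary and $\state^m|_{B_i} = \state_i^m$, we get $f_v(\state^m) = f_v(\state_i^m) = \state_i^{m+1}[v]$, so $\state^m \rightarrow \state^{m+1}$ is a valid transition in $\ts$. By construction $\state^{m+1}|_{B_i} = \state_i^{m+1}$, and since the flipped variable $v$ lies in $B_i$ and hence outside $B_j^-$ (because $B_j^- = B_j \setminus (B_1 \cap B_2)$ is disjoint from $B_i$), all coordinates in $B_j^-$ are preserved, giving $\state^{m+1}|_{B_j^-} = \state^m|_{B_j^-} = \state|_{B_j^-}$. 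Concatenating with the path produced by the induction hypothesis yields the required lifted path, so taking $\state' = \state^{m+1}$ completes the step.

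The only delicate point, which I would flag explicitly, is the treatment of shared variables in $B_1 \cap B_2$: when the updated variable $v$ happens to lie in $B_1 \cap B_2 \subseteq B_i$, the coordinate $v$ of the global state does change, but this is consistent with the lemma statement because the preserved coordinates are those in $B_j^- = B_j \setminus (B_1 \cap B_2)$, not those in the intersection. Thus no problem arises from the overlap between blocks, and the elementarity of $B_i$ is exactly what makes the single-step lifting sound. I do not anticipate any genuine obstacle beyond bookkeeping; the whole argument rests on the observation that an elementary block's local dynamics is obtained from the global dynamics simply by restricting attention to moves inside $B_i$, which run identically regardless of the values of variables outside $B_i$.
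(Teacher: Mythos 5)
Your proposal is correct and follows essentially the same route as the paper: the paper lifts the whole local path in one step by crossing each state $\state_i^\ell$ with a fixed assignment $\state_j^-$ to the nodes in $B_j^-$, justified by the elementarity of $B_i$, while you carry out exactly this lifting transition-by-transition via induction on the path length. Your explicit check that the witness variable's update value agrees ($f_v(\state^m)=f_v(\state_i^m)$) and your remark on shared nodes in $B_1\cap B_2$ are just careful bookkeeping of what the paper's ``it is clear that'' asserts.
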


\begin{proof}
  Let $B_i$ be elementary and suppose $\state_i^0 \move{B_1} \state_i^1 \move{B_1} \ldots \move{B_1} \state_i^m$, where $\state_i^0 = \state_i$ and $\state_i^m = \state'_i$, be a path from $\state_i$ to $\state'_i$ in $\ts_i$. Let $\state|_{B_j^-} = \state'|_{B_j^-}=\state^-_j$. It is clear that $(\state_i^0\cross\state^-_j) \move{B_1} (\state_i^1\cross\state^-_j) \move{B_1} \ldots \move{B_1} (\state_i^m\cross\state^-_j)$ is a path from $\state$ to $\state'$ in $\ts$ where $\state=(\state_i^0\cross\state^-_j), \state'=(\state_i^m\cross\state^-_j)$ and $\state$ and $\state'$ have the required properties. Indeed, since $B_i$ is elementary and values of the nodes in $B_j^-$ are not modified along the path.
\end{proof}

\begin{lemma}\label{lem:fulltoelem} For every $\state,\state'$ of $\ts$ if there is a path from $\state$ to $\state'$ in $\ts$ then there is a path from $\state_i$ to $\state'_i$ in $\ts_i$ for every elementary block $B_i$.
\end{lemma}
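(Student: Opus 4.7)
The plan is to induct on the length of the path in $\ts$, peeling off one transition at a time and showing that each either induces a one-step transition in $\ts_i$ or leaves the projection to $B_i$ unchanged. Concatenating the resulting fragments then yields the required path in $\ts_i$.

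Let $\state^0 \move{} \state^1 \move{} \cdots \move{} \state^m$ be a path in $\ts$ with $\state^0=\state$ and $\state^m=\state'$. By Definition~\ref{def:ts} each single step $\state^j \move{} \state^{j+1}$ flips the value of exactly one coordinate $k_j \in \{1,\ldots,n\}$, and moreover $\state^{j+1}[k_j] = f_{k_j}(\state^j)$. Split the indices into two cases. First, if $k_j \notin B_i$ (i.e., $v_{k_j} \notin B_i$), then since the projection to $B_i$ is unaffected, $\state^j|_{B_i} = \state^{j+1}|_{B_i}$, which gives an empty path in $\ts_i$. Second, if $k_j \in B_i$, then the elementarity of $B_i$ gives $\prt(v_{k_j}) \subseteq B_i$, so $f_{k_j}$ depends only on variables indexed in $B_i$; hence $f_{k_j}(\state^j) = f_{k_j}(\state^j|_{B_i})$. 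Combined with the fact that $\state^{j+1}|_{B_i}$ and $\state^j|_{B_i}$ differ only in position $k_j$ with Hamming distance $\leq 1$, this matches Definition~\ref{def:ts} applied to the block $B_i$, giving a valid single-step transition $\state^j|_{B_i} \move{} \state^{j+1}|_{B_i}$ in $\ts_i$.

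Concatenating the (possibly trivial) fragments produced at each step yields a walk in $\ts_i$ from $\state^0|_{B_i}=\state_i$ to $\state^m|_{B_i}=\state'_i$, after deleting the stutter steps corresponding to updates outside $B_i$. This is the desired path.

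The only nontrivial point is the invocation of elementarity, which is precisely what guarantees that $f_{k_j}$ does not ``see'' any variable outside $B_i$ and hence can be evaluated on the projected state. Without this property the value $f_{k_j}(\state^j)$ could depend on coordinates that are not recorded in $\state^j|_{B_i}$, and the projected transition would not necessarily exist in $\ts_i$. Since that is exactly the hypothesis of the lemma, the argument closes cleanly and no further case analysis is required; I do not anticipate any substantial obstacle.
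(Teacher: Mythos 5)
Your proposal is correct and follows essentially the same route as the paper's proof: walk along the given path in $\ts$, drop the steps that update nodes outside $B_i$ (which leave the projection unchanged), and project the steps that update nodes inside $B_i$, using elementarity of $B_i$ to see that the update function can be evaluated on the projected state so the transition exists in $\ts_i$. Your write-up actually spells out the elementarity argument slightly more explicitly than the paper does, but the construction is the same.
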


\begin{proof}
  Suppose $\path=\state^0\rightarrow \state^1\rightarrow \ldots \state^m$, where $\state^0=\state$ and $\state^m=\state'$ be a path from $\state$ to $\state'$ in $\ts$. Let $B_i$ be an elementary block of $\bn$. We inductively construct a path $\path_i$ from $\state_i$ to $\state'_i$ in $\ts_i$ using $\path$. $\path_i^j,\ 0\leq j<m$, will denote the prefix of $\path_i$ constructed in the $j$th step of the induction. Initially $\path_i^0=\state^0_i$. Suppose $\path_i^j$ has been already constructed and consider the next transition $\state^j\rightarrow \state^{j+1}$ in $\path$. If this transition is labeled with $B_i$ then we let $\path_i^{j+1} = \path_i^j\move{B_i}\state^{j+1}_i$. Otherwise if this transition is labeled with $B^-_j, j\neq i$, then we let $\path_i^{j+1} = \path_i^j$. Since by induction hypothesis $\path_i^j$ is a path in $\ts_i$ and we add to this a transition from $\path$ only if a node of the elementary block $B_i$ is modified in this transition, such a transition exists in $\ts_i$. Hence, $\path_i^{j+1}$ is also a path in $\ts_i$. Continuing in this manner, we shall have a path from $\state_i$ to $\state'_i$ in $\ts_i$ at the last step when $j+1=m$.
\end{proof}

\begin{lemma}\label{lem:disjoint}
  Suppose $B_1$ and $B_2$ are both elementary blocks and $B_1\cap B_2=\emptyset$. Then for every $\state,\state'\in \ts$, there is a path from $\state$ to $\state'$ in $\ts$ if and only if there is a path from $\state_i$ to $\state'_i$ in every $\ts_i$.
\end{lemma}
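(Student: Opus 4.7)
\textbf{Proof plan for Lemma~\ref{lem:disjoint}.} The statement is an ``if and only if'', and I would handle the two directions separately, noting first that since $B_1$ and $B_2$ are disjoint, we have $B_1^- = B_1$ and $B_2^- = B_2$, and any state $\state \in \ts$ decomposes uniquely as $\state = \state_1 \cross \state_2$.

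\textbf{Forward direction.} This is essentially free: both $B_1$ and $B_2$ are elementary by hypothesis, so a direct application of Lemma~\ref{lem:fulltoelem} to the path from $\state$ to $\state'$ in $\ts$ yields paths from $\state_i$ to $\state'_i$ in $\ts_i$ for each $i \in \{1,2\}$. Nothing more is needed.

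\textbf{Reverse direction.} This is the part that uses the disjointness hypothesis in an essential way. Suppose we are given a path $\pi_1$ from $\state_1$ to $\state'_1$ in $\ts_1$ and a path $\pi_2$ from $\state_2$ to $\state'_2$ in $\ts_2$. The plan is to lift them one at a time and concatenate. First, since $B_1$ is elementary and $B_2^- = B_2$, Lemma~\ref{lem:elemtofull} applied to $\pi_1$ produces a path in $\ts$ from $\state_1 \cross \state_2$ to $\state'_1 \cross \state_2$ that keeps the $B_2$-coordinates fixed at $\state_2$. Second, since $B_2$ is elementary and $B_1^- = B_1$, Lemma~\ref{lem:elemtofull} applied to $\pi_2$, this time with the $B_1$-coordinates fixed at $\state'_1$, produces a path in $\ts$ from $\state'_1 \cross \state_2$ to $\state'_1 \cross \state'_2$. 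Concatenating the two yields the desired path from $\state = \state_1 \cross \state_2$ to $\state' = \state'_1 \cross \state'_2$ in $\ts$.

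\textbf{Where the work is.} There is really no significant obstacle here; the lemma is a clean corollary of the two preceding lemmas once one observes that disjointness gives $B_i^- = B_i$ and hence that $\state$ and $\state'$ split uniquely into a $B_1$-part and a $B_2$-part that can be updated independently. The only thing worth checking carefully is that the hypothesis of Lemma~\ref{lem:elemtofull} (which requires equality of the projections onto $B_j^-$ for the ``other'' block) is met at each of the two lifting steps, which it is by construction: in the first lift both endpoints agree on $B_2$, and in the second lift both endpoints agree on $B_1$.
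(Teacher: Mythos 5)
Your proposal is correct and matches the paper's argument: the paper proves this lemma simply by citing Lemma~\ref{lem:fulltoelem} for the forward direction and Lemma~\ref{lem:elemtofull} for the reverse, and your lift-and-concatenate argument (using disjointness so that $B_i^- = B_i$) is exactly the intended way those two lemmas combine. You have merely written out the details that the paper leaves implicit.
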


\begin{proof}
  Follows directly from Lemma~\ref{lem:elemtofull} and Lemma~\ref{lem:fulltoelem}.
\end{proof}



\begin{lemma}\label{lem:attrdisj} Let $B_1$ and $B_2$ be two elementary blocks of $\bn$, $B_1\cap B_2=\emptyset$. Then we have that $A$ is an attractor of of $\ts$ if and only if there are attractors $A_1$ and $A_2$ of $\ts_1$ and $\ts_2$ resp. such that $A=A_1\cross A_2$.
\end{lemma}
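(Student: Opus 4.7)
I will prove the two directions separately, exploiting the fact that since $B_1$ and $B_2$ are disjoint basic blocks of $\bn$ and $\bigcup\blocks = V$, we have $B_1 \cup B_2 = V$, so every state $\state$ of $\ts$ is uniquely determined by the pair $(\state_1,\state_2)$ and every transition $\state \to \state'$ in $\ts$ modifies exactly one variable, which lies in either $B_1$ or $B_2$ but not both. Lemmas~\ref{lem:elemtofull} and~\ref{lem:fulltoelem} give the bridge between paths in $\ts$ and paths in the $\ts_i$.

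For the forward direction, suppose $A$ is an attractor of $\ts$. Set $A_1 := A|_{B_1}$ and $A_2 := A|_{B_2}$. I will first verify that $A_1$ is an attractor of $\ts_1$ (and symmetrically $A_2$ of $\ts_2$). Strong connectivity: for any $\state_1, \state_1' \in A_1$ pick witnesses $\state, \state' \in A$ projecting to them; $A$ is an attractor in $\ts$, so there is a path $\state \leadsto \state'$ in $\ts$, and Lemma~\ref{lem:fulltoelem} projects it to a path $\state_1 \leadsto \state_1'$ in $\ts_1$. Closedness: for $\state_1 \in A_1$ with $\state_1 \to \tstate_1$ in $\ts_1$, take a witness $\state = \state_1 \cross \state_2 \in A$ and invoke Lemma~\ref{lem:elemtofull} on the single-edge path to lift to a transition $\state \to \tstate_1 \cross \state_2$ in $\ts$; since $A$ is closed, $\tstate_1 \cross \state_2 \in A$, hence $\tstate_1 \in A_1$.

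Next I must show $A = A_1 \cross A_2$. The inclusion $A \subseteq A_1 \cross A_2$ is immediate from the definitions. For the reverse inclusion, fix $\state_1 \in A_1$, $\state_2 \in A_2$; I want $\state_1 \cross \state_2 \in A$. Pick any $\state = \state_1 \cross \state_2' \in A$ (such a state exists since $\state_1 \in A_1$). By strong connectivity of $A_2$ in $\ts_2$ there is a path $\state_2' \leadsto \state_2$ in $\ts_2$; by Lemma~\ref{lem:elemtofull} this lifts to a path $\state \leadsto \state_1 \cross \state_2$ in $\ts$ which keeps the $B_1$-coordinate fixed (here I use that $B_2$ is elementary and $B_1 \cap B_2 = \emptyset$). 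Since $A$ is closed under transitions, $\state_1 \cross \state_2 \in A$.

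For the converse direction, assume $A_1, A_2$ are attractors of $\ts_1, \ts_2$ and set $A := A_1 \cross A_2$. Strong connectivity: given $\state, \state' \in A$, concatenate the lift of a $\state_1 \leadsto \state_1'$ path in $\ts_1$ (keeping $\state_2$ fixed) with the lift of a $\state_2 \leadsto \state_2'$ path in $\ts_2$ (keeping $\state_1'$ fixed), both via Lemma~\ref{lem:elemtofull}. Closedness: any transition $\state \to \tstate$ in $\ts$ modifies one variable belonging to either $B_1$ or $B_2$; in the former case the elementarity of $B_1$ yields $\state_1 \to \tstate_1$ in $\ts_1$ with $\tstate_2 = \state_2$, so $\tstate_1 \in A_1$ (closedness of $A_1$) and $\tstate \in A$, symmetrically for $B_2$. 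Combined with Observation~\ref{obs:attr}, this gives that $A$ is a bottom SCC of $\ts$, hence an attractor.

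The main obstacle is the reverse inclusion $A_1 \cross A_2 \subseteq A$ in the forward direction: it is the step that genuinely uses the disjointness of $B_1$ and $B_2$ together with the elementarity of both, and without it the equality $A = A_1 \cross A_2$ could fail even when the projections $A_i$ happen to be attractors. Everything else is a relatively mechanical application of the two bridging lemmas.
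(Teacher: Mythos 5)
Your proof is correct and follows essentially the same route as the paper: the paper disposes of this lemma by appealing to Lemma~\ref{lem:disjoint} (the path correspondence between $\ts$ and the block transition systems), which rests on exactly the two lifting/projection results (Lemmas~\ref{lem:elemtofull} and~\ref{lem:fulltoelem}) that you invoke. You simply spell out the details (projections of $A$ are attractors, $A=A_1\cross A_2$, and the converse closedness/connectivity check) that the paper leaves implicit.
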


\begin{proof}
  Follows directly from Lemma~\ref{lem:disjoint}.
\end{proof}

\begin{lemma}\label{lem:attrjoint} Let $\bn$ have two blocks $B_1$ and $B_2$ where $B_2$ is non-elementary, $B_1$ is elementary and is the parent of $B_2$. Then we have $A$ is an attractor of $\ts$ if and only if $A_1$ is an attractor of $\ts_1$ and $A$ is also an attractor of $\ts_2$ where $\ts_2$ is realized by $\bas(A_1)$.
\end{lemma}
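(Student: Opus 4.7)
The plan is to leverage three facts: (a) since $B_1$ is elementary, Lemmas~\ref{lem:elemtofull} and~\ref{lem:fulltoelem} let us move paths freely between $\ts_1$ and $\ts$; (b) for any attractor, the strong basin contains the attractor itself ($A_1\subseteq\bas(A_1)$), since the only attractor reachable from states of $A_1$ is $A_1$; and (c) transitions of $\ts_2$ are precisely transitions of $\ts$ restricted to the states whose projection to $B_1$ lies in $\bas(A_1)$. I will also use the elementary observation that if a set is strongly connected and closed under transitions, then it is already maximal strongly connected, hence a BSCC -- any strictly larger SC set would have to reach $A$ from outside and then return, violating closure.

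For the forward direction, assume $A$ is an attractor of $\ts$ and write $A_1 = A|_{B_1}$. Strong connectivity of $A_1$ in $\ts_1$ follows by applying Lemma~\ref{lem:fulltoelem} to paths between any two states of $A$; closure of $A_1$ in $\ts_1$ follows by lifting (Lemma~\ref{lem:elemtofull}) any hypothetical escaping transition to a transition of $\ts$ that must remain in the BSCC $A$, contradicting escape. Hence $A_1$ is an attractor of $\ts_1$. For $A$ as an attractor of $\ts_2$: every $\state\in A$ satisfies $\state|_{B_1}\in A_1\subseteq\bas(A_1)$, so $A$ lies in the state space of $\ts_2$; every $\ts$-transition out of $A$ stays in $A$ (so in the $\ts_2$-state space) and thus is also a $\ts_2$-transition, giving closure; strong connectivity of $A$ in $\ts$ is witnessed by paths that, since $A$ is closed, stay in $A$ and therefore are also $\ts_2$-paths.

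For the reverse direction, assume $A_1$ is an attractor of $\ts_1$ and $A$ is an attractor of $\ts_2$. Strong connectivity of $A$ in $\ts$ transfers directly, since every $\ts_2$-transition is a $\ts$-transition. The core step is closure of $A$ under $\ts$-transitions: given $\state\in A$ and a $\ts$-transition $\state\rightarrow\tstate$, split on the block of the updated variable. If the updated node lies in $B_2\setminus B_1$, then $\state|_{B_1}=\tstate|_{B_1}\in\bas(A_1)$, so $\tstate$ lies in the state space of $\ts_2$ and the transition is a $\ts_2$-transition, hence $\tstate\in A$ by closure in $\ts_2$. If the updated node lies in $B_1$, then elementariness of $B_1$ yields a $\ts_1$-transition $\state|_{B_1}\rightarrow\tstate|_{B_1}$; since $A_1$ is an attractor of $\ts_1$, the target remains in $A_1\subseteq\bas(A_1)$, so again $\tstate$ is in the $\ts_2$-state space and the transition is present in $\ts_2$, giving $\tstate\in A$. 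Combined with strong connectivity, closure of $A$ in $\ts$ makes $A$ a BSCC of $\ts$, i.e., an attractor.

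The main obstacle is compatibility between the strong-basin restriction defining the state space of $\ts_2$ and the attractor dynamics of $\ts$. Concretely, one must confirm that no $\ts$-transition from a state of $A$ takes the $B_1$-projection out of $\bas(A_1)$; this is exactly where the two-sided hypothesis of the biconditional is used -- on the forward side via the fact that $A_1$ is a BSCC in $\ts_1$ forces projections of $A$-transitions to remain in $A_1$, and on the reverse side via the assumption that $A_1$ is an attractor of $\ts_1$, yielding the same conclusion and ensuring the case analysis above closes cleanly.
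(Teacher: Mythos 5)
Your proof is correct and follows essentially the same route as the paper's: the forward direction uses the projection/lifting lemmas (Lemmas~\ref{lem:elemtofull} and~\ref{lem:fulltoelem}) together with the fact that transitions of $\ts_2$ are transitions of $\ts$, and the converse proceeds by the same case split on whether the updated variable lies in $B_1$ or in $B_2\setminus B_1$, using closure of $A_1$ in $\ts_1$ and of $A$ in $\ts_2$. If anything, you are slightly more explicit than the paper, checking that $A$ lies in the state space of $\ts_2$ via $A_1\subseteq\bas(A_1)$, transferring strong connectivity, and resolving the $B_1$-labelled case through membership in $\bas(A_1)$ rather than asserting outright that the target's projection leaves $A_1$.
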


\begin{proof}
  Suppose $A$ is an attractor of $\ts$ and for contradiction suppose $A_1$ is not an attractor of $\ts_1$. Then either there exist $\state,\state'\in A$ such that there is no path from $\state_1$ to $\state'_1$ in $\ts_1$. But that is not possible by Lemma~\ref{lem:fulltoelem}. Or there exist $\state_1\in A_1$ and $\state'_1\notin A_1$ such that there is a transition from $\state_1$ to $\state'_1$. But then by Lemma~\ref{lem:elemtofull}, there is a transition from $\state\in A$ to $\state'\notin A$ in $\ts$ where $\state|_{B_1}=\state_1$ and $\state'|_{B_2}=\state'_2$. This contradicts the assumption that $A$ is an attractor of $A$. Next suppose $A$ is not an attractor of $\ts_2$. Then there is a transition in $\ts_2$ from $\state\in A$ to $\state'\notin A$. But we have, by the construction of $\ts_2$ (Definition~\ref{def:tsblocks}), that this is also a transition in $\ts$ which again contradicts the assumption that $A$ is an attractor of $\ts$.

  For the converse direction, suppose for contradiction that $A$ is an attractor of $\ts_2$ and $A_1$ is an attractor of $\ts_1$ but $A$ is not an attractor of $\ts$. We must then have that there is a transition in $\ts$ from $\state\in A$ to $\state'\notin A$. If this transition is labelled with $B_1$ then we must have, by Lemma~\ref{lem:fulltoelem}, that there is a transition in $\ts_1$ from $\state_1$ to $\state'_1$. But since $\state'_1\notin A_1$ this contradicts the assumption that $A_1$ is an attractor of $\ts_1$. Next, suppose that this transition is labelled with $B_2^{-}$. We must then have that $\state_1 = \state'_1\in A_1$. Hence, by the construction of $\ts_2$ (Definition~\ref{def:tsblocks}) it must be the case that $\state'\in \ts_2$ and this transition from $\state$ to $\state'$ is also present in $\ts_2$. But this contradicts the assumption that $A$ is an attractor of $\ts_2$.
\end{proof}

Now suppose $\bn$ has $k$ blocks that are topologically sorted as $\{B_1,B_2,\ldots, B_k\}$. Note that for every $i$ such that $1\leq i\leq k$, ($\bigcup_{j\leq i}B_j$) is an elementary block of $\bn$ and we denote its transition system by $\tshat_i$.

{
\renewcommand{\thetheorem}{\ref{thm:attpres}}
\begin{theorem}[preservation of attractors] Suppose $\bn$ has $k$ basic blocks that are topologically sorted as $\{B_1,B_2,\ldots, B_k\}$. Suppose for every attractor $A$ of $\ts$ and for every $i:1\leq i<k$, if $B_{i+1}$ is non-elementary then $\ts_{i+1}$ is realized by $\bas(\cross_{j\in I}A_j)$, its basin w.r.t. the TS for $(\bigcup_{j\in I}B_j)$, where $I$ is the set of indices of the basic blocks in $\ac(B_{i+1})^-$. We then have, for every $i:1\leq i<k$, $A_{i+1}$ is an attractor of $\ts_{i+1}$,  $(\cross_{j\in I}A_j\cross A_{i+1})$ is an attractor of the TS for the elementary block $(\bigcup_{j\in I}B_j\cup B_{i+1})$, $(\cross_{j=1}^{i+1}A_j)$ is an attractor of $\tshat_{i+1}$ and $A$ is an attractor of $\ts_k$.
\end{theorem}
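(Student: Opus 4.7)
The plan is to proceed by induction on $i$, leveraging the two two-block results already in hand: Lemma~\ref{lem:attrdisj} for disjoint elementary blocks and Lemma~\ref{lem:attrjoint} for an elementary block extended by a non-elementary child. The inductive invariant I would carry forward is that $(\cross_{j=1}^{i} A_j)$ is an attractor of $\tshat_i$, the TS of the elementary block $\B_i$. The base case $i = 1$ follows because $B_1$ must be elementary by the topological order, $\tshat_1 = \ts_1$, and Lemmas~\ref{lem:elemtofull}--\ref{lem:fulltoelem} applied to $\ts$ and $\ts_1$ yield that $A_1 = A|_{B_1}$ is an attractor of $\ts_1$.

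For the inductive step, assume the invariant at $i$ and consider $B_{i+1}$. If $B_{i+1}$ is elementary, then by the definition of elementarity combined with the topological order, $B_{i+1}$ has no parents in the block DAG and is thus disjoint from $\B_i$; both $\B_i$ and $B_{i+1}$ are elementary and $\B_{i+1}$ is their disjoint union. Applying Lemma~\ref{lem:attrdisj}, lifted from two basic blocks to two arbitrary disjoint elementary blocks (the proof via Lemma~\ref{lem:disjoint} goes through verbatim), yields simultaneously that $A_{i+1}$ is an attractor of $\ts_{i+1}$ and that $(\cross_{j=1}^{i+1} A_j)$ is an attractor of $\tshat_{i+1}$.

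If instead $B_{i+1}$ is non-elementary, let $I$ index the basic blocks in $\ac(B_{i+1})^-$, set $E = \bigcup_{j\in I} B_j$ and $E^+ = \ac(B_{i+1}) = E \cup B_{i+1}$; both $E$ and $E^+$ are elementary. The topological order allows me to split $\B_i$ as the disjoint elementary union $E \sqcup (\B_i\setminus E)$, so Lemma~\ref{lem:attrdisj} applied to the inductive invariant yields that $\cross_{j\in I} A_j$ is an attractor of the TS of $E$. The hypothesis of the theorem guarantees that $\ts_{i+1}$ is realised by $\bas(\cross_{j\in I} A_j)$ in this TS, so Lemma~\ref{lem:attrjoint} applied to the pair $(E, B_{i+1})$ with parent attractor $\cross_{j\in I} A_j$ gives that $A_{i+1}$ is an attractor of $\ts_{i+1}$ and that $(\cross_{j\in I} A_j \cross A_{i+1})$ is an attractor of the TS of $E^+$. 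Finally, $\B_{i+1}$ decomposes as the disjoint elementary union $E^+ \sqcup (\B_i\setminus E)$, and a second application of the lifted Lemma~\ref{lem:attrdisj} combines the attractors already obtained on $E^+$ and on $\B_i\setminus E$ into $(\cross_{j=1}^{i+1} A_j)$ as an attractor of $\tshat_{i+1}$, closing the induction. The final assertion that $A$ is an attractor of $\ts_k$ is then the instance $i+1 = k$, since $\B_k = V$ forces $\tshat_k = \ts$.

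The main obstacle I anticipate is the disjointness bookkeeping in the non-elementary case: I must verify that $\B_i$ really does split as the disjoint elementary union $E \sqcup (\B_i\setminus E)$, and likewise $\B_{i+1} = E^+ \sqcup (\B_i\setminus E)$. This is not immediate from the block-graph definition and comes down to two short structural facts I would have to establish as intermediate lemmas: (i) no basic block of $\B_i$ outside $\ac(B_{i+1})^-$ contains a parent of any vertex in $E^+$ (else it would itself have to be an ancestor of $B_{i+1}$ by the topological order), and (ii) no node of $E^+$ is a parent of any node in $\B_i\setminus E$, since such an edge would create a cycle in the block DAG. Once these structural facts are in hand, the lifted two-block lemmas compose cleanly and the induction proceeds as outlined.
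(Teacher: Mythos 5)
Your overall skeleton---induction on $i$, Lemma~\ref{lem:attrdisj} for the elementary case, Lemma~\ref{lem:attrjoint} for the non-elementary case---is the same as the paper's proof. The genuine problem is the recombination machinery you commit to in the non-elementary step. You claim that $\B_i$ splits as a disjoint union of \emph{elementary} blocks $E\sqcup(\B_i\setminus E)$ with $E=\bigcup_{j\in I}B_j$, and likewise $\B_{i+1}=\ac(B_{i+1})\sqcup(\B_i\setminus E)$, and you justify this by your structural fact (ii): no node of $\ac(B_{i+1})$ is a parent of a node of $\B_i\setminus E$, ``since such an edge would create a cycle in the block DAG''. That is false. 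Such an edge only makes a block of $E$ an ancestor of some block $B_b$ with $b\leq i$, $b\notin I$, which is perfectly consistent with acyclicity: ancestors of $B_{i+1}$ may simultaneously be ancestors of blocks that are \emph{not} ancestors of $B_{i+1}$. Concretely, take a block graph with edges $B_1\rightarrow B_2$, $B_1\rightarrow B_3$, $B_2\rightarrow B_4$, topologically sorted as $B_1,B_2,B_3,B_4$, and consider $i+1=4$: then $E=B_1\cup B_2$, yet nodes of $B_1$ are parents of the nodes of $W_3\subseteq B_3\setminus E\subseteq\B_3\setminus E$. Hence $\B_i\setminus E$ is not elementary, and Lemma~\ref{lem:attrdisj} (whose proof via Lemma~\ref{lem:disjoint} needs both parts elementary, because otherwise the dynamics of one part depends on the values in the other and attractors do not factor as products) applies to neither of your two splits. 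Consequently both your intermediate claim that $\cross_{j\in I}A_j$ is an attractor of the TS of $E$ and, more importantly, your final recombination yielding $(\cross_{j=1}^{i+1}A_j)$ as an attractor of $\tshat_{i+1}$ are not established whenever the block DAG is not a forest---which is exactly the general situation the theorem is meant to cover.

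To be fair, the paper's own inductive step is terse at precisely this point (it invokes Lemma~\ref{lem:attrjoint} for the pair $\bigl(\bigcup_{j\in I}B_j,\,B_{i+1}\bigr)$ and does not spell out how the blocks of $\B_i$ outside $\ac(B_{i+1})^-$ are reattached to conclude the statement about $\tshat_{i+1}$), so the difficulty you identified is real; but the patch you propose rests on a false structural claim rather than filling the gap. What is actually needed is an argument for \emph{overlapping} elementary blocks ($\B_i$ and $\ac(B_{i+1})$ intersect exactly in $E$), e.g.\ a gluing statement proved by the path-projection and path-lifting arguments of Lemmas~\ref{lem:elemtofull} and~\ref{lem:fulltoelem} generalised to an arbitrary elementary sub-block of an elementary ambient block; neither Lemma~\ref{lem:attrdisj} nor Lemma~\ref{lem:attrjoint}, as stated, covers this configuration, so your induction does not close as written.
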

\addtocounter{theorem}{-1}
}

\begin{proof}
  The proof is by induction on $i$. The base case is when $i=2$ and $\bn$ has two blocks $B_1$ and $B_2$. If $B_1$ and $B_2$ are both elementary then the result follows from Lemma~\ref{lem:attrdisj}. If $B_1$ is elementary and is the parent of $B_2$ then the result follows from Lemma~\ref{lem:attrjoint}.

  For the inductive case suppose the result holds for some $i$ where $2\leq i<k$. Now both $(\bigcup_{j\in I}B_j)$, where $I$ is the set of indices of the basic blocks in $\ac(B_{i+1})^-$, and $(\bigcup_{j\leq i}B_j)$ are elementary. Now, if $B_{i+1}$ is elementary then the result follows from Lemma~\ref{lem:attrdisj}. If $B_{i+1}$ is non-elementary then $(\bigcup_{j\in I}B_j)$ is the parent of $B_{i+1}$ and the result follows from Lemma~\ref{lem:attrjoint}.
\end{proof}

Next, let us come back to the case where $\bn$ has two blocks $B_1$ and $B_2$.

\begin{lemma}\label{lem:subset}
  Suppose $B_1\cap B_2=\emptyset$ and both $B_1$ and $B_2$ are elementary blocks of $\bn$. Let $A, A_1$ and $A_2$ be attractors of $\ts, \ts_1$ and $\ts_2$ respectively where $A=A_1\cross A_2$. Then $\bas(A)=\bas(A_1)\cross\bas(A_2)$.
\end{lemma}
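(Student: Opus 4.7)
The plan is to prove both inclusions by leveraging the path correspondence between $\ts$ and its elementary factors $\ts_1,\ts_2$ established in Lemmas~\ref{lem:elemtofull} and~\ref{lem:fulltoelem}, together with the product decomposition of attractors from Lemma~\ref{lem:attrdisj}. Since $B_1\cap B_2=\emptyset$, every pair $(\state_1,\state_2)\in \St_{B_1}\times\St_{B_2}$ is crossable, so the right-hand side equals $\{\state\in\St\mid \state_1\in\bas(A_1)\text{ and }\state_2\in\bas(A_2)\}$, and I can reason entirely at the level of projections.

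For the $(\supseteq)$ direction I would pick $\state_1\in\bas(A_1)$ and $\state_2\in\bas(A_2)$ and set $\state=\state_1\cross\state_2$. To see $\state\in\bas^W(A)$, I would take a path from $\state_1$ into $A_1$ in $\ts_1$ and a path from $\state_2$ into $A_2$ in $\ts_2$ (these exist because strong basins are contained in weak basins), lift each via Lemma~\ref{lem:elemtofull} while keeping the other block's coordinates frozen, and concatenate to obtain a path from $\state$ into $A=A_1\cross A_2$ in $\ts$. To rule out $\state$ reaching any other attractor, I would invoke Lemma~\ref{lem:attrdisj} to write an arbitrary attractor $A'\neq A$ as $A'_1\cross A'_2$ with, without loss of generality, $A'_1\neq A_1$; any path from $\state$ to $A'$ in $\ts$ would project via Lemma~\ref{lem:fulltoelem} to a path from $\state_1$ into $A'_1$ in $\ts_1$, contradicting $\state_1\in\bas(A_1)$.

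For the $(\subseteq)$ direction I would take $\state\in\bas(A)$ and argue that $\state_1\in\bas(A_1)$ (the argument for $\state_2$ is symmetric). Assuming otherwise, since every state of the finite system $\ts_1$ reaches some attractor, there is a path in $\ts_1$ from $\state_1$ into some attractor $A'_1\neq A_1$. Lifting this via Lemma~\ref{lem:elemtofull} while holding the $B_2$-coordinate fixed at $\state_2$ produces a path in $\ts$ from $\state$ to a state whose $B_1$-projection lies in $A'_1$. From that state the dynamics of $\ts$ must eventually settle into some attractor $A''=A''_1\cross A''_2$ of $\ts$, and a short closure argument (described next) shows that $A''_1=A'_1$; hence $A''\neq A$ is reachable from $\state$, contradicting $\state\in\bas(A)$.

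The main obstacle I anticipate is precisely this closure step: once the $B_1$-projection of the current state lies in $A'_1$, I need to show it remains in $A'_1$ throughout the remainder of the run in $\ts$. This will follow because only $B_1$-labelled transitions modify the $B_1$-coordinate, every $B_1$-labelled transition of $\ts$ projects via Lemma~\ref{lem:fulltoelem} to a transition of $\ts_1$, and $A'_1$ being an attractor of $\ts_1$ is closed under $\ts_1$-transitions; therefore the $B_1$-coordinate is trapped in $A'_1$, which forces $A''_1=A'_1$ and completes the contradiction.
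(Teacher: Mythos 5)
Your proof is correct and follows essentially the same route as the paper, which simply observes that the claim ``follows easily from Lemma~\ref{lem:disjoint}'' --- you have unfolded that one-liner into the explicit lift/project arguments via Lemmas~\ref{lem:elemtofull}, \ref{lem:fulltoelem} and the attractor decomposition of Lemma~\ref{lem:attrdisj}. The only stylistic difference is your closure step in the $(\subseteq)$ direction: it can be bypassed by also choosing a $\ts_2$-path from $\state_2$ to some attractor of $\ts_2$ and lifting both paths at once with Lemma~\ref{lem:disjoint}, which directly exhibits a path from $\state$ to an attractor different from $A$.
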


\begin{proof}
  Follows easily from Lemma~\ref{lem:disjoint}.
\end{proof}

\begin{lemma}\label{lem:subset2}
  Let $A, A_1$ and $A_2$ be the attractors of $\ts, \ts_1$ and $\ts_2$ respectively where $B_1$ and $B_1$ are elementary and non-elementary blocks respectively of $\bn$ with $B_1$ being the parent of $B_2$ and $\ts_2$ being realized by $\bas(A_1)$ and $A=A_2$. Then $\bas(A_1)\cross\bas(A_2) = \bas(A_2) = \bas(A)$.
\end{lemma}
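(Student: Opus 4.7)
The plan is to first establish a structural fact about how $\ts_2$ sits inside $\ts$, and then derive both equalities from it. Let $U = \{\state \in \St_{\ac(B_2)} : \state|_{B_1} \in \bas(A_1)\}$ denote the state space of $\ts_2$. The central observation I would prove is that $U$ is \emph{forward-closed} under transitions of $\ts$: for any $\state \in U$ and any $\state \to \state'$ in $\ts$, we have $\state' \in U$. Indeed, a transition that changes only a node in $B_2^-$ leaves $\state|_{B_1}$ untouched, while a transition that updates a node in $B_1$ corresponds, via Lemma~\ref{lem:fulltoelem} and the elementarity of $B_1$, to a transition $\state|_{B_1} \to \state'|_{B_1}$ in $\ts_1$; since $\state|_{B_1} \in \bas(A_1) = \bas^S_{\ts_1}(A_1)$, Lemma~\ref{lem:obs} gives $\state'|_{B_1} \in \bas(A_1)$. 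Combined with the construction in Definition~\ref{def:tsblocks}, this shows that the transitions of $\ts_2$ are exactly the $\ts$-transitions out of states in $U$. Consequently, for any $\state \in U$, $\reach_\ts(\state) = \reach_{\ts_2}(\state)$, and the attractors of $\ts_2$ are precisely those attractors of $\ts$ that lie in $U$.

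With this structural fact in hand, the equality $\bas(A_2) = \bas(A)$ follows from a symmetric containment argument. For $(\subseteq)$, if $\state \in \bas^S_{\ts_2}(A_2)$, every path from $\state$ in $\ts_2$ (and hence in $\ts$, by closure) that leads to an attractor must lead to $A_2 = A$; moreover any other attractor $A'$ of $\ts$ reachable from $\state$ in $\ts$ would have to be contained in $U$ (by closure) and would therefore be an attractor of $\ts_2$ distinct from $A_2$, contradicting $\state \in \bas^S_{\ts_2}(A_2)$. For $(\supseteq)$, given $\state \in \bas(A)$, I would first argue $\state \in U$: otherwise $\state|_{B_1} \notin \bas(A_1)$, and using Lemma~\ref{lem:elemtofull} with the elementarity of $B_1$ one can lift a $\ts_1$-path from $\state|_{B_1}$ to a different attractor $A'_1 \neq A_1$ into a $\ts$-path from $\state$ reaching an attractor of $\ts$ whose $B_1$-projection is $A'_1 \neq A_1$, contradicting $\state \in \bas(A)$. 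Once $\state \in U$, closure again lets us identify paths in $\ts$ with paths in $\ts_2$, so $\state \in \bas^S_{\ts_2}(A_2)$.

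The first equality $\bas(A_1) \cross \bas(A_2) = \bas(A_2)$ is then an essentially bookkeeping step. By definition of the cross operation on $B_1$ and $\ac(B_2) = B_1 \cup B_2$, an element of $\bas(A_1) \cross \bas(A_2)$ is precisely a state $\tstate \in \bas(A_2)$ whose projection to $B_1$ lies in $\bas(A_1)$; but every state of $\ts_2$ already satisfies this by construction of $U$, so the cross neither adds nor removes states. This completes the chain $\bas(A_1) \cross \bas(A_2) = \bas(A_2) = \bas(A)$.

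The main obstacle I anticipate is making the forward-closure argument airtight, specifically handling the two cases of which block the updated variable belongs to and invoking the strong-basin closure property (Lemma~\ref{lem:obs}) without accidentally using an assumption on $\ts_2$ that has not yet been justified. Once closure is nailed down, the rest is a clean translation between attractors and basins in $\ts$ and in $\ts_2$, and the cross-operation equality becomes almost definitional.
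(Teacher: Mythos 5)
Your proposal is correct and follows essentially the same route as the paper: both obtain $\bas(A_1)\cross\bas(A_2)=\bas(A_2)$ directly from the construction of $\ts_2$, and prove $\bas(A_2)=\bas(A)$ by double inclusion, translating paths between $\ts_2$ and $\ts$ using the projection/lifting lemmas and attractor preservation. Your explicit forward-closure fact for the state set of $\ts_2$ is just a cleaner packaging of what the paper invokes case-by-case as ``by the construction of $\ts_2$,'' and it additionally makes explicit a step the paper glosses over, namely that $\state\in\bas(A)$ forces $\state|_{B_1}\in\bas(A_1)$.
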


\begin{proof}
  Since $\ts_2$ is realized by $\bas(A_1)$, by its construction (Definition~\ref{def:tsblocks}) we have, for every state $\state\in \ts_2$, $\state_1 \in \bas(A_1)$. Hence $\bas(A_1)\cross\bas(A_2) = \bas(A_2)$.

We next show that $\bas(A_2) = \bas(A)$. Suppose $\state\in\bas(A_2)$. To show that $\state\in \bas(A)$, it is enough to show that:\\
\noindent (i) There is a path from $\state$ to some $\state^A\in A$ in $\ts$ and\\
\noindent (ii) There is no path from $\state$ to $\tstate\in A'$ for some attractor $A'\neq A$ of $\ts$.

(i) Since $\state\in \bas(A_2)$, and $A_2=A$, there is a path $\path$ from $\state$ to $\state^A\in A$ in $\ts_2$. It is easy to see from the construction of $\ts_2$ (Definition~\ref{def:tsblocks}) that $\path$ is also a path in $\ts$ from $\state$ to $\state^A$.

(ii) Suppose for contradiction that there is a path $\path'$ in $\ts$ from $\state$ to $\tstate\in A'$ for some attractor $A'\neq A$ of $\ts$. Since $A'\neq A$ we must have that either (a) $A_1\neq A'_1$ or (b) $A_1=A'_1$ but $A_2\neq A'_2$.

(a) In this case, by Lemma~\ref{lem:fulltoelem}, there must be a path from $\state_1$ to $\tstate_1\in A'_1$ which is a contradiction to the fact that $\state_1\in \bas(A_1)$.

(b) We have by Theorem~\ref{thm:attpres} that $A'_2=A'$. Once again from the construction of $\ts_2$ (Definition~\ref{def:tsblocks}) it is easy to see that $\path'$ is also a path in $\ts_2$ from $\state$ to $\tstate\in A'$. But this contradicts the fact that $\state\in\bas(A_2)$.

For the converse direction suppose that $\state\in\bas(A)$. To show that $\state\in \bas(A_2)$, it is enough to show that:\\
\noindent(iii) There is a path from $\state$ to some $\state^{A_2}\in A_2$ and\\
\noindent(iv) There is no path from $\state$ to $\tstate\in A'_2$ for some attractor $A'_2\neq A_2$ of $\ts_2$.

(iii) Since $\state\in \bas(A)$, there is a path $\path$ in $\ts$ from $\state$ to some $\state^A\in A$. By the fact that $A_2=A$ and by the construction of $\ts_2$ (Definition~\ref{def:tsblocks}) it is clear that $\path$ is also a path in $\ts_2$ from $\state$ to $\state^A\in A_2$.

(iv) Suppose for contradiction that there is a path $\path'$ in $\ts_2$ from $\state$ to $\tstate\in A'_2$ for some attractor $A'_2\neq A_2$ of $\ts_2$. By Theorem~\ref{thm:attpres}, $A'_2$ is equal to an attractor $A'$ of $\ts$ and $A'\neq A$. It is then easy to see again from the construction of $\ts_2$ (Definition~\ref{def:tsblocks}) that $\path'$ is also a path in $\ts$ from $\state$ to $\tstate\in A'$. But this contradicts the assumption that $\state\in \bas(A)$.
\end{proof}

Let us, for the final time, come back to the case where $\bn$ has $k>2$ blocks and these blocks are topologically sorted as $\{B_1,B_2,\ldots, B_k\}$. Let $i$ range over $\{1,2,\ldots, k\}$. By the theorem on attractor preservation, Theorem~\ref{thm:attpres}, we have that ($\cross_{j\leq i} A_j$) is an attractor of $\tshat_i$.

\begin{lemma}\label{lem:bassubset2} Suppose $\bn$ has $k$ basic blocks that are topologically sorted as $\{B_1,\\B_2,\ldots, B_k\}$. Suppose for every attractor $A$ of $\ts$ and for every $i:1\leq i<k$, if $B_{i+1}$ is non-elementary then $\ts_{i+1}$ is realised by $\bas(\cross_{j\in I}A_j)$, its basin w.r.t. the TS for $(\bigcup_{j\in I}B_j)$, where $I$ is the set of indices of the basic blocks in $\ac(B_{i+1})^-$ [where $(\cross_{j\in I}A_j)$, by Theorem~\ref{thm:attpres}, is an attractor of the TS for $(\bigcup_{j\in I}B_j)$]. Then for every $i$, $(\cross_{j\leq i}\bas(A_j)) = \bas(\cross_{j\leq i} A_i)$ where $\bas(\cross_{j\leq i} A_j)$ is the basin of attraction of $(\cross_{j\leq i} A_j)$ with respect to transition system $\tshat_i$ of $(\bigcup_{j\leq i}\! B_j)$.
\end{lemma}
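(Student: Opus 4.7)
The plan is to proceed by induction on $i$. The base case $i = 1$ is immediate, since $\tshat_1 = \ts_1$ and $(\cross_{j\leq 1}\bas(A_j)) = \bas(A_1) = \bas(\cross_{j\leq 1}A_j)$ trivially.

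For the inductive step from $i$ to $i+1$, I would view the elementary block $\bigcup_{j\leq i+1}B_j$ as a network built out of two blocks: the elementary block $E := \bigcup_{j\leq i}B_j$, whose transition system is $\tshat_i$, together with the newly added basic block $B_{i+1}$, whose transition system is $\ts_{i+1}$. The combined transition system is exactly $\tshat_{i+1}$. By Theorem~\ref{thm:attpres}, $\cross_{j\leq i}A_j$ is an attractor of $\tshat_i$, $A_{i+1}$ is an attractor of $\ts_{i+1}$, and $\cross_{j\leq i+1}A_j = (\cross_{j\leq i}A_j)\cross A_{i+1}$ is an attractor of $\tshat_{i+1}$.

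I would then split into two subcases according to the type of $B_{i+1}$. If $B_{i+1}$ is elementary, then by the topological ordering of blocks it must be disjoint from $E$, and Lemma~\ref{lem:subset} applied to the pair $(E, B_{i+1})$ yields
\[
\bas_{\tshat_{i+1}}\!\bigl(\cross_{j\leq i+1}A_j\bigr) \;=\; \bas_{\tshat_i}\!\bigl(\cross_{j\leq i}A_j\bigr)\,\cross\,\bas_{\ts_{i+1}}(A_{i+1}).
\]
Applying the induction hypothesis to rewrite the first factor as $\cross_{j\leq i}\bas(A_j)$ gives the desired equality. If instead $B_{i+1}$ is non-elementary, then its parent blocks all lie in $E$, so $E$ plays the role of the single elementary parent, and $\ts_{i+1}$ is (by the hypothesis of the lemma) realised by the basin of $\cross_{j\in I}A_j$, where $I$ indexes the ancestors of $B_{i+1}$. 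I would invoke Lemma~\ref{lem:subset2} to obtain the same product decomposition of $\bas_{\tshat_{i+1}}(\cross_{j\leq i+1}A_j)$, after which the induction hypothesis once more finishes the argument.

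The main obstacle is the non-elementary subcase, where Lemma~\ref{lem:subset2} in its stated form assumes the TS of the non-elementary block is realised over the \emph{full} parent block $V = B_1\cup B_2$ via the basin of an attractor of the parent's TS, whereas in our inductive setup $\ts_{i+1}$ lives over $\ac(B_{i+1})\subseteq V':=E\cup B_{i+1}$ and is realised via the basin of $\cross_{j\in I}A_j$ in the TS of $\bigcup_{j\in I}B_j$, which may be a strict sub-block of $E$. The reconciliation step I would carry out is to lift the ancestor basin to a basin over all of $E$ by crossing with $\bas$ of $\cross_{j\leq i,\,j\notin I}A_j$; the inductive hypothesis guarantees that $\bas_{\tshat_i}(\cross_{j\leq i}A_j)$ decomposes compatibly, so that the two descriptions of $\ts_{i+1}$ agree on their shared state space under the natural embedding. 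Once this is in place, the proof of Lemma~\ref{lem:subset2} carries through verbatim in the inductive step; all remaining steps are routine bookkeeping with the cross operation and the associativity of $\cross$.
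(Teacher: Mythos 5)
Your proposal follows essentially the same route as the paper's proof: induction on $i$, applying Lemma~\ref{lem:subset} when $B_{i+1}$ is elementary (disjoint from $\bigcup_{j\leq i}B_j$) and Lemma~\ref{lem:subset2} when it is non-elementary with the previously built elementary block playing the role of the parent, and then closing with the induction hypothesis. The extra reconciliation you flag --- lifting the basin of $\cross_{j\in I}A_j$ over $\ac(B_{i+1})^-$ to the full elementary block $\bigcup_{j\leq i}B_j$ by crossing with the basins of the remaining blocks --- is a detail the paper's proof glosses over, so your version is, if anything, slightly more careful than the original.
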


\begin{proof}
  The proof is by induction on $i$. The base case is when $i=2$. Then either $B_1$ and $B_2$ are both elementary and disjoint in which case the proof follows from Lemma~\ref{lem:subset}. Or, $B_1$ is elementary and $B_2$ is non-elementary and $B_1$ is the parent block of $B_2$. In this case the proof follows from Lemma~\ref{lem:subset2}.

  For the inductive case, suppose that the conclusion of the theorem holds for some $i: 2\leq i<k$. Now, consider $(\cross_{j\leq (i+1)}\bas(A_j))$. By the induction hypothesis, we have that $(\cross_{j\leq i}\bas(A_j)) = \bas(\cross_{j\leq i}A_j)$ where $(\cross_{j\leq i}A_j)$ is an attractor of the transition system $\tshat_i$ of the elementary block $(\bigcup_{j\leq i}B_j)$ and $\bas(\cross_{j\leq i}A_j)$ is its basin. Now, either $B_{i+1}$ is elementary in which case we use Lemma~\ref{lem:subset} or $B_{i+1}$ is non-elementary and $(\bigcup_{j\in I}B_j)$ is its parent in which case we use Lemma~\ref{lem:subset2}.

  In either case, we have $(\cross_{j\leq (i+1)}\bas(A_j))=\bas(\cross_{j\leq (i+1)} A_j)$, where $\bas(\cross\\_{j\leq (i+1)}A_j)$ is the basin of attraction of the attractor $(\cross_{j\leq (i+1)}A_j)$ of $\tshat_{i+1}$.
\end{proof}

{
\renewcommand{\thetheorem}{\ref{thm:bassubset}}
\begin{theorem}[preservation of basins] Given the hypothesis and the notations of Lemma~\ref{lem:bassubset2}, we have $(\cross_{i\leq k}\bas(A_i))= \bas(A)$ where $\bas(A)$ is the basin of attraction of the attractor $A=(A_1\cross A_2\cross \ldots\cross A_k)$ of $\ts$.
\end{theorem}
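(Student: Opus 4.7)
The plan is to derive the theorem as an immediate specialisation of Lemma \ref{lem:bassubset2} to the final index $i=k$. First I would observe that since every vertex of $\graph_\bn$ belongs to some SCC and hence to some basic block, we have $\bigcup \blocks = V$, so the elementary block $\B_k = \bigcup_{j \leq k} B_j$ is precisely the entire vertex set of $\bn$. Consequently, the transition system $\tshat_k$ of $\B_k$ coincides with the global transition system $\ts$ of $\bn$.

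Next, I would invoke Lemma \ref{lem:bassubset2} at $i = k$ to obtain
\[
(\cross_{j \leq k} \bas(A_j)) \;=\; \bas(\cross_{j \leq k} A_j),
\]
where the basin on the right-hand side is taken with respect to $\tshat_k$. Since $\tshat_k = \ts$ and, by Theorem \ref{thm:attpres}, $A = (\cross_{j=1}^{k} A_j)$ is an attractor of $\ts = \ts_k$, the right-hand side equals $\bas(A)$, the strong basin of $A$ in $\ts$. This is precisely the claim of the theorem.

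The real work is already carried out inside Lemma \ref{lem:bassubset2}, whose inductive step splits into two cases depending on whether $B_{i+1}$ is elementary (handled via Lemma \ref{lem:subset}, which uses path-lifting in the disjoint elementary case) or non-elementary (handled via Lemma \ref{lem:subset2}, which relies on the specific construction of $\ts_{i+1}$ as being realised by $\bas(\cross_{j \in I} A_j)$). The main conceptual obstacle, which is resolved in those auxiliary lemmas rather than here, is showing that no escape path to a competing attractor $A' \neq A$ in the global system can exist once all local basins agree: the containment $\state_1 \in \bas(A_1)$ forbids projection-paths to any $A'_1 \neq A_1$ by Lemma \ref{lem:fulltoelem}, while the restriction of $\ts_{i+1}$ to states whose projection lies in $\bas(\cross_{j \in I} A_j)$ ensures that any global escape path would also be an escape path in the local system, contradicting $\state \in \bas(A_{i+1})$. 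For the theorem itself, no further argument beyond the observation $\tshat_k = \ts$ is needed.
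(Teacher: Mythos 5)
Your argument is exactly the paper's: instantiate Lemma~\ref{lem:bassubset2} at $i=k$, note that $\B_k=V$ so $\tshat_k=\ts$, and identify $\cross_{j\leq k}A_j$ with $A$ via Theorem~\ref{thm:attpres}. This matches the paper's proof, which simply sets $i=k$ in Lemma~\ref{lem:bassubset2}; your additional remarks correctly locate the substantive work in the auxiliary lemmas.
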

\addtocounter{theorem}{-1}
}

\begin{proof}
  Follows directly by setting $i=k$ in Lemma~\ref{lem:bassubset2}.
\end{proof}

\section{Two Biological Case Studies}
\label{app.casestudy}
\begin{figure*}[!h]
\centering
\includegraphics[width=0.8\textwidth]{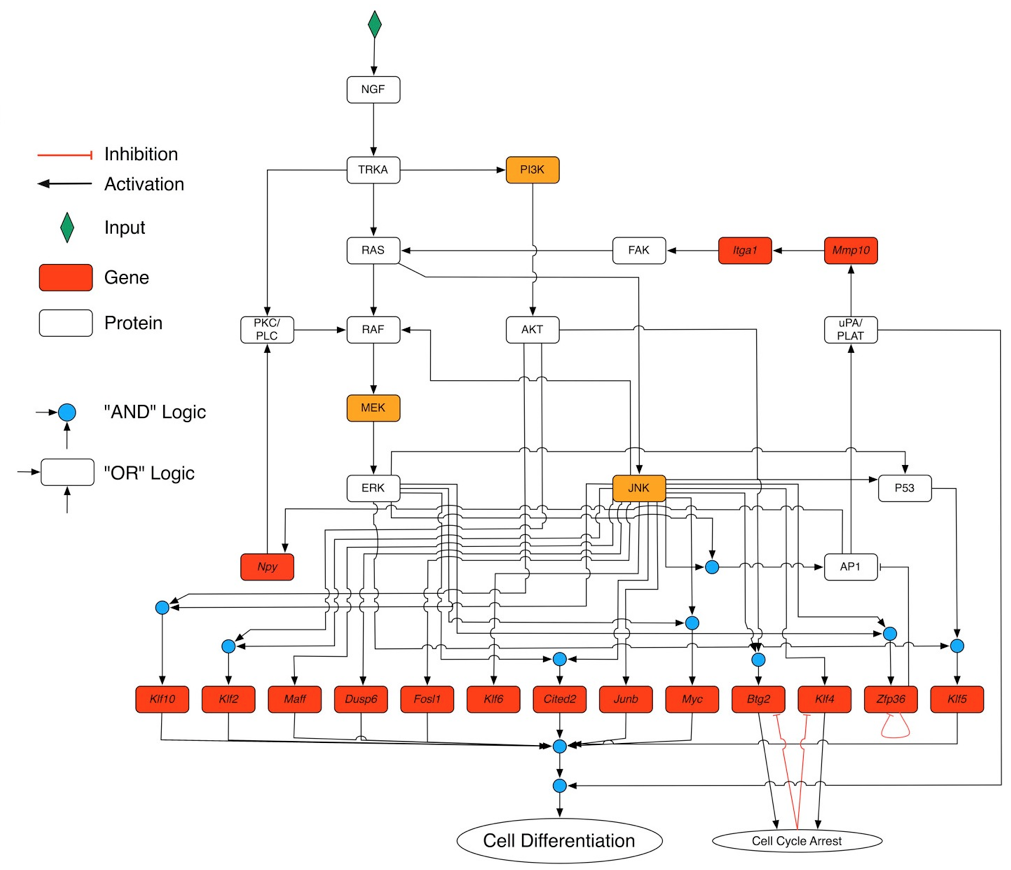}
\caption{Structure of the PC12 cell network of~\cite{OKB16}.}
\label{fig:pc12_structure}
\end{figure*}

\begin{table*}
\begin{tabular}{|C{0.5cm}|C{2.5cm}|C{0.5cm}|C{2cm}|C{0.5cm}|C{2cm}|C{0.5cm}|C{3.5cm}|}
\hline
\multicolumn{1}{|c|}{scc \#} & \multicolumn{1}{c|}{nodes}   & \multicolumn{1}{c|}{scc \#}   & \multicolumn{1}{c|}{nodes}  & \multicolumn{1}{c|}{scc \#} & \multicolumn{1}{c|}{nodes}  & \multicolumn{1}{c|}{scc \#} & \multicolumn{1}{c|}{nodes}              \\ \hline
\multirow{1}{*}{0}& \multirow{1}{*}{NGF} 
&\multirow{1}{*}{5} &\multirow{1}{*}{MAFF} 
&\multirow{1}{*}{10} &\multirow{1}{*}{CITED2} 
&\multirow{1}{*}{15} &BTG2,KLF4,CellCycleArrest\\ \hline
1&TRKA  &6 &KLF6 &11 &PI3K  &16 &AKT \\ \hline  
2&ZFP36,ZFP36\_inh  &7 &JUNB &12 &KLF2 &17 &MYC \\ \hline
3&P53  &8 &FOSL1 &13 &KLF10 &18 &CellDifferentiation \\ \hline
\multirow{2}{*}{4}&\multirow{2}{*}{KLF5}  &\multirow{2}{*}{9} &\multirow{2}{*}{DUSP6} &\multirow{2}{*}{14} & \multicolumn{3}{c|}{\multirow{1}{*}{AP1,ERK,FAK,ITGA1,JNK,MEK,} } \\
 & & & & &   \multicolumn{3}{c|}{\multirow{1}{*}{ NPY,MMP10,PLC,RAF,RAS,UPAR } }   \\ \hline
\end{tabular}
\caption{Nodes in SCCs of the PC12 cell network.}
\label{tab:sccnodes_pc12}
\end{table*}

\begin{table*}
\begin{tabular}{|C{1.3cm}|C{0.05cm}C{0.05cm}C{0.05cm}C{0.05cm}C{0.05cm}C{0.05cm}C{0.05cm}C{0.05cm}C{0.05cm}C{0.05cm}C{0.05cm}C{0.05cm}C{0.05cm}C{0.05cm}C{0.05cm}C{0.05cm}C{0.05cm}C{0.05cm}C{0.05cm}C{0.05cm}C{0.05cm}C{0.05cm}C{0.05cm}C{0.05cm}C{0.05cm}C{0.05cm}C{0.05cm}C{0.05cm}C{0.05cm}C{0.05cm}C{0.05cm}C{0.05cm}C{0.05cm}|}
\hline
Attractor & \multicolumn{33}{c|}{attractor states} \\ \hline
$1$ & $0$ & $0$ & $1$ & $1$ & $1$ & $1$ & $1$ & $0$ & $0$ & $1$ & $1$ & $1$ & $1$ & $1$ & $0$ & $0$ & $1$ & $1$ & $0$ & $1$ & $1$ & $1$ & $1$ & $0$ & $0$ & $1$ & $0$ & $1$ & $1$ & $1$ & $1$ & $1$ & $1$   \\
$2$ & $0$ & $0$ & $1$ & $1$ & $1$ & $1$ & $1$ & $0$ & $0$ & $1$ & $1$ & $1$ & $1$ & $1$ & $0$ & $1$ & $1$ & $1$ & $0$ & $1$ & $1$ & $1$ & $1$ & $0$ & $0$ & $1$ & $0$ & $1$ & $1$ & $1$ & $1$ & $1$ & $0$  \\ 
$3$ & $0$ & $0$ & $0$ & $0$ & $0$ & $0$ & $0$ & $0$ & $0$ & $0$ & $0$ & $0$ & $0$ & $0$ & $0$ & $0$ & $0$ & $0$ & $0$ & $0$ & $0$ & $0$ & $0$ & $0$ & $0$ & $1$ & $0$ & $0$ & $0$ & $0$ & $0$ & $0$ & $1$   \\
$4$ & $0$ & $0$ & $0$ & $0$ & $0$ & $0$ & $0$ & $0$ & $0$ & $0$ & $0$ & $0$ & $0$ & $0$ & $0$ & $0$ & $0$ & $0$ & $0$ & $0$ & $0$ & $0$ & $0$ & $0$ & $0$ & $1$ & $0$ & $0$ & $0$ & $0$ & $0$ & $0$ & $0$   \\
$5$ & $0$ & $0$ & $0$ & $0$ & $0$ & $0$ & $0$ & $0$ & $0$ & $0$ & $0$ & $0$ & $0$ & $0$ & $0$ & $0$ & $0$ & $0$ & $0$ & $0$ & $0$ & $0$ & $0$ & $0$ & $0$ & $0$ & $0$ & $0$ & $0$ & $0$ & $0$ & $0$ & $1$   \\
$6$ & $0$ & $0$ & $0$ & $0$ & $0$ & $0$ & $0$ & $0$ & $0$ & $0$ & $0$ & $0$ & $0$ & $0$ & $0$ & $0$ & $0$ & $0$ & $0$ & $0$ & $0$ & $0$ & $0$ & $0$ & $0$ & $0$ & $0$ & $0$ & $0$ & $0$ & $0$ & $0$ & $0$   \\
$7$ & $1$ & $1$ & $1$ & $1$ & $1$ & $1$ & $1$ & $1$ & $1$ & $1$ & $1$ & $1$ & $1$ & $1$ & $1$ & $0$ & $1$ & $1$ & $1$ & $1$ & $1$ & $1$ & $1$ & $0$ & $0$ & $1$ & $1$ & $1$ & $1$ & $1$ & $1$ & $1$ & $1$   \\
\hline
\end{tabular}
\caption{Attractor states of PC12 cell network. The sequence of the nodes in each state is NGF, TRKA, RAS, RAF, JNK, MEK, ERK, PI3K, AKT, PLC, NPY, JUNB, P53, AP1, KLF2, KLF4, KLF5, KLF6, KLF10, MAFF, DUSP6, FOSL1, CITED2, BTG2, ZFP36, ZFP36\_inh, MYC, UPAR, MMP10, ITGA1, FAK, CellDifferentiation, CellCycleArrest. }
\label{tab:pc12att}
\end{table*}

\begin{table*}[!t] 
\centering 
\begin{tabular}{|C{1.5cm}|C{1.5cm}C{1.5cm}C{1.5cm}C{1.5cm}C{1.5cm}C{1.5cm}C{1.5cm}|}
 \hline 
\multicolumn{1}{|c|}{\multirow{2}{*}{Attractor}} & \multicolumn{7}{c|}{Time (ms)}
 \\ \cline{2-8} 
\multicolumn{1}{|c|}{} & \multicolumn{1}{c}{~~~~~~$1$~~~~~~} & \multicolumn{1}{c}{~~~~~~$2$~~~~~~} & \multicolumn{1}{c}{~~~~~~$3$~~~~~~} & \multicolumn{1}{c}{~~~~~~$4$~~~~~~} & \multicolumn{1}{c}{~~~~~~$5$~~~~~~} & \multicolumn{1}{c}{~~~~~~$6$~~~~~~} & \multicolumn{1}{c|}{~~~~~~$7$~~~~~~} 
 \\ \hline
\multirow{2}{*}{$1$} &$-$ &$24$ &$55$ &$34$ &$31$ &$25$ &$16$ \\ 
&$-$ &$5$ &$6$ &$6$ &$6$ &$6$ &$12$ \\ \hline 
\multirow{2}{*}{$2$} &$26$ &$-$ &$54$ &$35$ &$31$ &$25$ &$16$ \\ 
&$5$ &$-$ &$6$ &$6$ &$6$ &$6$ &$12$ \\ \hline 
\multirow{2}{*}{$3$} &$26$ &$20$ &$-$ &$46$ &$31$ &$17$ &$23$ \\ 
&$5$ &$5$ &$-$ &$6$ &$5$ &$6$ &$12$ \\ \hline 
\multirow{2}{*}{$4$} &$25$ &$20$ &$56$ &$-$ &$30$ &$19$ &$27$ \\ 
&$5$ &$5$ &$6$ &$-$ &$6$ &$5$ &$12$ \\ \hline 
\multirow{2}{*}{$5$} &$25$ &$19$ &$52$ &$34$ &$-$ &$20$ &$22$ \\ 
&$5$ &$5$ &$6$ &$6$ &$-$ &$6$ &$12$ \\ \hline 
\multirow{2}{*}{$6$} &$25$ &$20$ &$56$ &$33$ &$32$ &$-$ &$23$ \\ 
&$5$ &$5$ &$6$ &$6$ &$6$ &$-$ &$12$ \\ \hline 
\multirow{2}{*}{$7$} &$25$ &$20$ &$56$ &$33$ &$32$ &$24$ &$-$ \\ 
&$5$ &$5$ &$6$ &$6$ &$6$ &$6$ &$-$ \\ \hline 
 \end{tabular}
\caption{Time costs of the global approach and the decomposition-based approach on the PC12 cell network.}
\label{}
\end{table*}

\begin{figure*}[!t]
\centering
\includegraphics[width=.8\textwidth,angle=0]{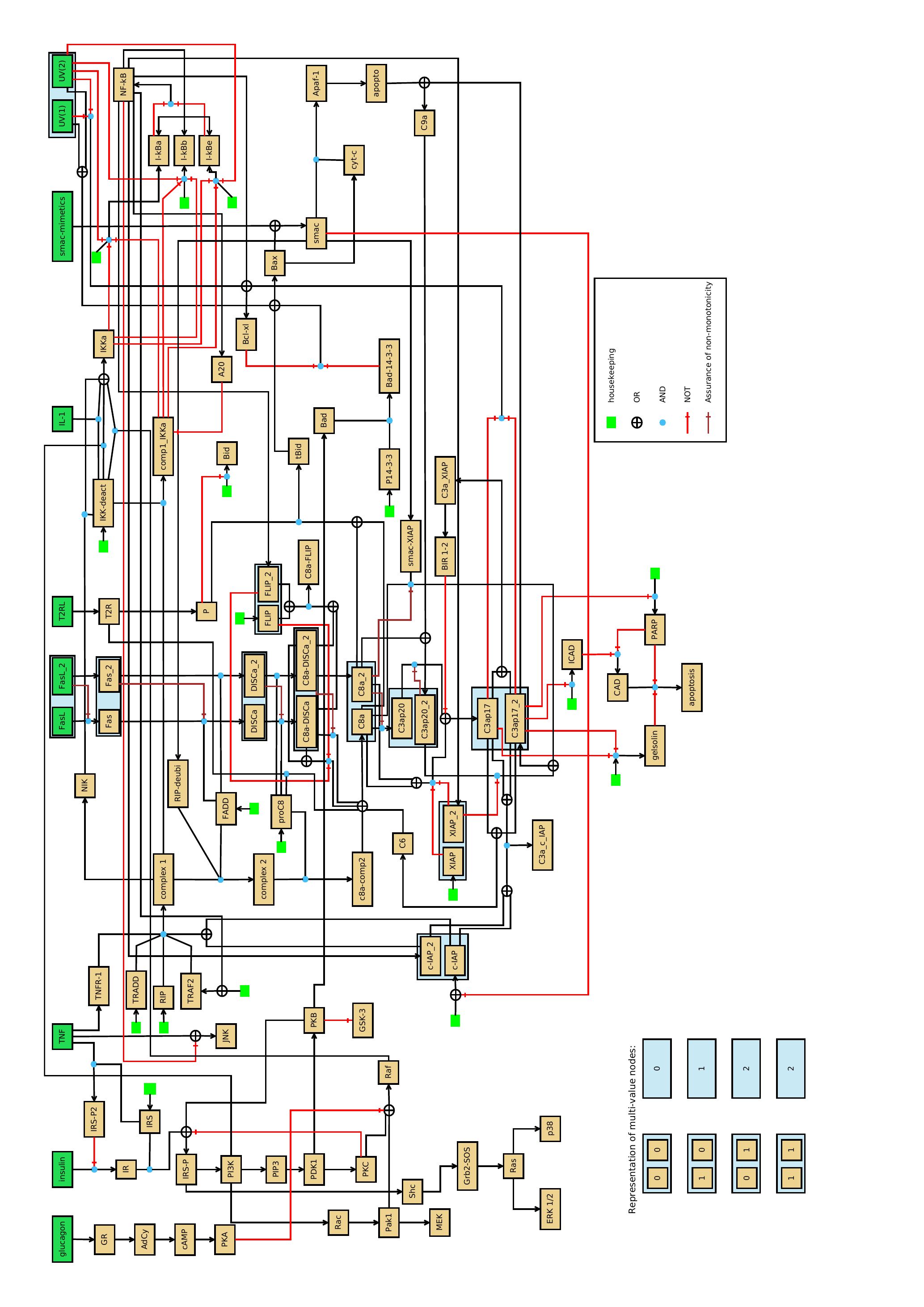}
\caption{The wiring of the multi-value logic model of apoptosis by Schlatter et
al.~\cite{SSVSSBEMS09} recast into a~binary Boolean network. For clarity of the diagram the nodes
I-kBa, I-kBb, and I-kBe have two positive inputs. The inputs are interpreted as connected via
$\oplus$ (logical OR).}
\label{fig:structure}
\end{figure*}

\begin{table*}[!t]
\centering
\begin{tabular}{|C{0.5cm}|C{2.4cm}|C{0.5cm}|C{2.4cm}|C{0.5cm}|C{2.4cm}|C{0.5cm}|C{2.3cm}|}
\hline
\multicolumn{1}{|c|}{scc \#} & \multicolumn{1}{c|}{nodes} & \multicolumn{1}{c|}{scc \#} & \multicolumn{1}{c|}{nodes} & \multicolumn{1}{c|}{scc \#} & \multicolumn{1}{c|}{nodes} & \multicolumn{1}{c|}{scc \#} & \multicolumn{1}{c|}{nodes} \\ \hline
0                            & apoptosis                  & 17                          & C8a\_DISCa\_2              & 32                          & IRS\_P2                    & 47                          & UV                         \\ \hline
1                            & gelsolin                   & 18                          & C8a\_DISCa                 & 33                          & IRS                        & 48                          & UV\_2                      \\ \hline
2                            & C3a\_c\_IAP                & 19                          & proC8                      & 34                          & IKKdeact                   & 49                          & FASL                       \\ \hline
3                            & I\_kBb                     & 20                          & p38                        & 35                          & FLIP                       & 50                          & PKA                        \\ \hline
4                            & CAD                        & 21                          & ERK1o2                     & 36                          & DISCa\_2                   & 51                          & cAMP                       \\ \hline
5                            & PARP                       & 22                          & Ras                        & 37                          & DISCa                      & 52                          & AdCy                       \\ \hline
6                            & ICAD                       & 23                          & Grb2\_SOS                  & 38                          & FADD                       & 53                          & GR                         \\ \hline
7                            & JNK                        & 24                          & Shc                        & 39                          & Bid                        & 54                          & Glucagon                   \\ \hline
8                            & C8a\_FLIP                  & 25                          & Raf                        & 40                          & housekeeping               & 55                          & Insulin                    \\ \hline
9                           & XIAP                       & 26                          & MEK                        & 41                          & FAS\_2                     & 56                          & smac\_mimetics             \\ \hline
10                           & TRADD                      & 27                          & Pak1                       & 42                          & FAS                        & 57                          & P                          \\ \hline
11                           & RIP                        & 28                          & Rac                        & 43                          & FASL\_2                    & 58                          & T2R                        \\ \hline
12                           & Bad\_14\_3\_3              & 29                          & GSK\_3                     & 44                          & IL\_1                      & 59                          & T2RL                       \\ \hline
13                           & P14\_3\_3                  & 30                          & Bad                        & 45                          & TNFR\_1                    &                             &                            \\ \hline
14                           & C8a\_2                     & 31                          & IR                         & 46                          & TNF                        &                             &                            \\ \hline
\multicolumn{1}{|c|}{\multirow{4}{*}{15}} & \multicolumn{7}{l|}{\multirow{4}{*}{\begin{tabular}[c]{@{}l@{}}Apaf\_1 apopto A20 Bax Bcl\_xl BIR1\_2 c\_IAP c\_IAP\_2 complex1  comp1\_IKKa cyt\_c C3ap20\\ C3ap20\_2 C3a\_XIAP  C8a\_comp2 C9a  FLIP\_2 NIK RIP\_deubi smac smac\_XIAP tBid TRAF2\\ XIAP\_2 IKKa I\_kBa I\_kBe complex2 NF\_kB C8a C3ap17 C3ap17\_2\end{tabular}}} \\
\multicolumn{1}{|r|}{}                   & \multicolumn{7}{l|}{}                                                                                                                                                                                                                                                                                                                 \\
\multicolumn{1}{|r|}{}                   & \multicolumn{7}{l|}{}                                                                                                                                                                                                                                                                                                                 \\
\multicolumn{1}{|r|}{}                   & \multicolumn{7}{l|}{}                                                                                                                                                                                                                                                                                                                 \\ \hline
\multicolumn{1}{|c|}{16}                 & \multicolumn{7}{l|}{IRS\_P PDK1 PKB PKC PIP3 PI3K C6}                                                                                                                                                                                                                                                                                 \\ \hline
\end{tabular}
\caption{Nodes in SCCs of the apoptosis network}
\label{tab:sccnodes_apoptosis}
\end{table*}

\begin{table*}[!t] 
\centering 
\begin{tabular}{|c|cccccccccccccccc|}
 \hline 
\multicolumn{1}{|c|}{\multirow{2}{*}{Attractor}} & \multicolumn{16}{c|}{Time (s)}
 \\ \cline{2-17} 
\multicolumn{1}{|c|}{} & \multicolumn{1}{c}{~~~~~~$1$~~~~~~} & \multicolumn{1}{c}{~~~~~~$2$~~~~~~} & \multicolumn{1}{c}{~~~~~~$3$~~~~~~} & \multicolumn{1}{c}{~~~~~~$4$~~~~~~} & \multicolumn{1}{c}{~~~~~~$5$~~~~~~} & \multicolumn{1}{c}{~~~~~~$6$~~~~~~} & \multicolumn{1}{c}{~~~~~~$7$~~~~~~} & \multicolumn{1}{c}{~~~~~~$8$~~~~~~} & \multicolumn{1}{c}{~~~~~~$9$~~~~~~} & \multicolumn{1}{c}{~~~~~~$10$~~~~~~} & \multicolumn{1}{c}{~~~~~~$11$~~~~~~} & \multicolumn{1}{c}{~~~~~~$12$~~~~~~} & \multicolumn{1}{c}{~~~~~~$13$~~~~~~} & \multicolumn{1}{c}{~~~~~~$14$~~~~~~} & \multicolumn{1}{c}{~~~~~~$15$~~~~~~} & \multicolumn{1}{c|}{~~~~~~$16$~~~~~~} 
 \\ \hline
\multirow{2}{*}{$1$} &$-$ &$5.45$ &$1.69$ &$1.80$ &$23.12$ &$18.20$ &$24.53$ &$24.48$ &$1.54$ &$1.47$ &$1.72$ &$1.62$ &$45.67$ &$34.26$ &$45.08$ &$44.00$ \\ 
&$-$ &$0.79$ &$0.81$ &$0.79$ &$0.99$ &$0.99$ &$0.99$ &$0.99$ &$0.77$ &$0.75$ &$0.77$ &$0.75$ &$0.94$ &$0.91$ &$0.94$ &$0.91$ \\ \hline 
\multirow{2}{*}{$2$} &$1.67$ &$-$ &$1.71$ &$1.87$ &$17.98$ &$18.45$ &$25.13$ &$23.89$ &$1.58$ &$1.48$ &$1.76$ &$1.67$ &$34.53$ &$34.40$ &$45.31$ &$43.98$ \\ 
&$0.81$ &$-$ &$0.78$ &$0.81$ &$0.94$ &$0.99$ &$0.99$ &$0.99$ &$0.78$ &$0.75$ &$0.77$ &$0.75$ &$0.95$ &$0.91$ &$0.94$ &$0.90$ \\ \hline 
\multirow{2}{*}{$3$} &$1.65$ &$1.55$ &$-$ &$1.85$ &$17.23$ &$18.21$ &$24.97$ &$24.22$ &$1.58$ &$1.49$ &$1.74$ &$1.66$ &$34.40$ &$34.17$ &$45.16$ &$44.22$ \\ 
&$0.81$ &$0.79$ &$-$ &$0.81$ &$0.94$ &$0.99$ &$0.99$ &$0.99$ &$0.77$ &$0.75$ &$0.77$ &$0.75$ &$0.95$ &$0.91$ &$0.94$ &$0.90$ \\ \hline 
\multirow{2}{*}{$4$} &$1.66$ &$1.57$ &$1.86$ &$-$ &$17.46$ &$19.09$ &$24.22$ &$24.55$ &$1.59$ &$1.49$ &$1.76$ &$1.65$ &$34.71$ &$34.48$ &$45.45$ &$44.42$ \\ 
&$0.81$ &$0.79$ &$0.81$ &$-$ &$0.94$ &$0.99$ &$0.99$ &$0.99$ &$0.78$ &$0.75$ &$0.77$ &$0.75$ &$0.95$ &$0.91$ &$0.94$ &$0.91$ \\ \hline 
\multirow{2}{*}{$5$} &$1.69$ &$1.59$ &$1.89$ &$1.78$ &$-$ &$20.94$ &$24.11$ &$24.70$ &$1.60$ &$1.51$ &$1.75$ &$1.67$ &$34.26$ &$34.34$ &$45.32$ &$44.28$ \\ 
&$0.81$ &$0.78$ &$0.81$ &$0.79$ &$-$ &$0.99$ &$0.99$ &$0.99$ &$0.78$ &$0.75$ &$0.77$ &$0.75$ &$0.95$ &$0.91$ &$0.95$ &$0.91$ \\ \hline 
\multirow{2}{*}{$6$} &$1.68$ &$1.58$ &$1.86$ &$1.74$ &$20.87$ &$-$ &$24.57$ &$24.44$ &$1.60$ &$1.50$ &$1.78$ &$1.67$ &$34.81$ &$34.68$ &$45.44$ &$43.95$ \\ 
&$0.82$ &$0.79$ &$0.81$ &$0.79$ &$0.99$ &$-$ &$0.99$ &$0.99$ &$0.77$ &$0.75$ &$0.77$ &$0.75$ &$0.94$ &$0.91$ &$0.95$ &$0.91$ \\ \hline 
\multirow{2}{*}{$7$} &$1.68$ &$1.58$ &$1.86$ &$1.74$ &$20.96$ &$17.30$ &$-$ &$24.31$ &$1.59$ &$1.49$ &$1.77$ &$1.66$ &$34.71$ &$34.69$ &$45.43$ &$43.92$ \\ 
&$0.81$ &$0.79$ &$0.82$ &$0.80$ &$0.99$ &$0.99$ &$-$ &$0.99$ &$0.77$ &$0.75$ &$0.77$ &$0.75$ &$0.95$ &$0.91$ &$0.94$ &$0.90$ \\ \hline 
\multirow{2}{*}{$8$} &$1.67$ &$1.57$ &$1.85$ &$1.74$ &$21.02$ &$17.25$ &$24.69$ &$-$ &$1.59$ &$1.50$ &$1.77$ &$1.67$ &$34.60$ &$34.66$ &$45.41$ &$43.96$ \\ 
&$0.81$ &$0.79$ &$0.81$ &$0.78$ &$0.99$ &$0.99$ &$0.99$ &$-$ &$0.77$ &$0.75$ &$0.77$ &$0.75$ &$0.95$ &$0.91$ &$0.94$ &$0.90$ \\ \hline 
\multirow{2}{*}{$9$} &$1.68$ &$1.57$ &$1.83$ &$1.74$ &$20.05$ &$17.25$ &$24.65$ &$24.31$ &$-$ &$1.52$ &$1.66$ &$1.80$ &$34.35$ &$34.33$ &$44.97$ &$44.37$ \\ 
&$0.81$ &$0.78$ &$0.81$ &$0.78$ &$0.99$ &$0.99$ &$0.99$ &$0.99$ &$-$ &$0.77$ &$0.75$ &$0.77$ &$0.91$ &$0.95$ &$0.90$ &$0.94$ \\ \hline 
\multirow{2}{*}{$10$} &$1.67$ &$1.57$ &$1.85$ &$1.74$ &$20.77$ &$17.38$ &$24.69$ &$24.57$ &$1.59$ &$-$ &$1.64$ &$1.78$ &$34.51$ &$33.74$ &$45.64$ &$44.00$ \\ 
&$0.78$ &$0.81$ &$0.78$ &$0.81$ &$0.94$ &$0.99$ &$0.99$ &$0.99$ &$0.77$ &$-$ &$0.75$ &$0.77$ &$0.91$ &$0.95$ &$0.90$ &$0.94$ \\ \hline 
\multirow{2}{*}{$11$} &$1.67$ &$1.57$ &$1.85$ &$1.74$ &$20.88$ &$17.03$ &$24.41$ &$24.18$ &$1.60$ &$1.50$ &$-$ &$1.78$ &$33.37$ &$34.14$ &$46.48$ &$44.89$ \\ 
&$0.78$ &$0.81$ &$0.78$ &$0.81$ &$0.94$ &$0.99$ &$0.99$ &$0.99$ &$0.77$ &$0.75$ &$-$ &$0.77$ &$0.91$ &$0.94$ &$0.90$ &$0.94$ \\ \hline 
\multirow{2}{*}{$12$} &$1.71$ &$1.60$ &$1.86$ &$1.77$ &$20.76$ &$16.89$ &$24.98$ &$24.61$ &$1.61$ &$1.52$ &$1.79$ &$-$ &$34.71$ &$34.01$ &$45.86$ &$44.58$ \\ 
&$0.79$ &$0.81$ &$0.78$ &$0.81$ &$0.94$ &$0.99$ &$0.99$ &$0.99$ &$0.78$ &$0.75$ &$0.77$ &$-$ &$0.91$ &$0.94$ &$0.91$ &$0.94$ \\ \hline 
\multirow{2}{*}{$13$} &$1.70$ &$1.59$ &$1.88$ &$1.77$ &$21.30$ &$17.43$ &$24.72$ &$24.48$ &$1.62$ &$1.53$ &$1.80$ &$1.69$ &$-$ &$34.69$ &$46.56$ &$44.45$ \\ 
&$0.79$ &$0.81$ &$0.78$ &$0.81$ &$0.94$ &$0.99$ &$0.99$ &$0.99$ &$0.78$ &$0.75$ &$0.77$ &$0.75$ &$-$ &$0.94$ &$0.90$ &$0.94$ \\ \hline 
\multirow{2}{*}{$14$} &$1.69$ &$1.59$ &$1.88$ &$1.78$ &$21.01$ &$17.26$ &$24.79$ &$24.27$ &$1.62$ &$1.53$ &$1.79$ &$1.68$ &$35.28$ &$-$ &$46.11$ &$44.69$ \\ 
&$0.79$ &$0.81$ &$0.78$ &$0.82$ &$0.94$ &$0.99$ &$0.99$ &$0.99$ &$0.78$ &$0.75$ &$0.77$ &$0.75$ &$0.94$ &$-$ &$0.90$ &$0.94$ \\ \hline 
\multirow{2}{*}{$15$} &$1.68$ &$1.59$ &$1.87$ &$1.76$ &$20.96$ &$17.91$ &$24.57$ &$24.39$ &$1.62$ &$1.52$ &$1.80$ &$1.71$ &$35.15$ &$34.37$ &$-$ &$44.72$ \\ 
&$0.79$ &$0.81$ &$0.79$ &$0.81$ &$0.94$ &$0.99$ &$0.99$ &$0.98$ &$0.77$ &$0.75$ &$0.77$ &$0.75$ &$0.94$ &$0.91$ &$-$ &$0.94$ \\ \hline 
\multirow{2}{*}{$16$} &$1.70$ &$1.60$ &$1.89$ &$1.78$ &$21.29$ &$17.35$ &$24.97$ &$24.33$ &$1.62$ &$1.53$ &$1.79$ &$1.68$ &$34.97$ &$34.38$ &$45.58$ &$-$ \\ 
&$0.79$ &$0.81$ &$0.78$ &$0.81$ &$0.94$ &$0.99$ &$0.99$ &$0.99$ &$0.77$ &$0.75$ &$0.77$ &$0.75$ &$0.95$ &$0.91$ &$0.94$ &$-$ \\ \hline 
\end{tabular}
\caption{Time costs of the global approach and the decomposition-based approach on the apoptosis network. In each cell, the upper value is the time cost of the global approach and the lower value is the time cost of the decomposition-based approach.}
\label{}
\end{table*}

\end{document}